\def \VersionLong {}
\def\VersionAuthor{}
	\newcommand{\LongVersion}[1]{#1}
	\newcommand{\ShortVersion}[1]{}
	\newcommand{\LongVersion}[1]{\ifdefined\VersionWithComments{\color{black!40}#1}\fi}
	\newcommand{\ShortVersion}[1]{\ifdefined\VersionWithComments{\color{red!40!black}#1}\else#1\fi}
	\newcommand{\AuthorVersion}[1]{#1}
	\newcommand{\FinalVersion}[1]{}
	\newcommand{\AuthorVersion}[1]{}
	\newcommand{\FinalVersion}[1]{#1}
	\providecommand\BibTeX{{%
		\normalfont B\kern-0.5em{\scshape i\kern-0.25em b}\kern-0.8em\TeX}}}
\newenvironment{ienumeration}
	{\ifdefined\VersionAuthor\begin{enumerate}\else\begin{inparaenum}[\itshape i\upshape)]\fi}
	{\ifdefined\VersionAuthor\end{enumerate}\else\end{inparaenum}\fi}
	\definecolor{USPNcobalt}{HTML}{293358}
	\definecolor{USPNocre}{HTML}{8b7d6d}
	\definecolor{USPNblanc}{HTML}{ffffff}
	\definecolor{USPNceruleen}{HTML}{354878}
	\definecolor{USPNsable}{HTML}{ad947e}
\crefname{line}{\text{line}}{\text{lines}} %
\crefname{assumption}{\text{Assumption}}{\text{Assumptions}} %
\footnotesize\printfield{doi}}
\tikzstyle{every node}=[initial text=]
\tikzstyle{location}=[rectangle, rounded corners, minimum size=12pt, draw=black, fill=blue!10, inner sep=2pt]
\tikzstyle{final}=[double]
\tikzstyle{accepting}=[final]
\tikzstyle{PTPMOPT}=[,dashed,color=red,semithick]
\newcommand{\styleact}[1]{\ensuremath{\textcolor{coloract}{\mathrm{#1}}}}
\newcommand{\styleclock}[1]{\ensuremath{\textcolor{colorclock}{#1}}}
\newcommand{\styleparam}[1]{\ensuremath{\textcolor{colorparam}{#1}}}
\definecolor{coloract}{named}{black}
\definecolor{colorclock}{named}{black}
\definecolor{colorconst}{named}{black}
\definecolor{colordisc}{named}{black}
\definecolor{colorloc}{rgb}{0.4, 0.4, 0.65}
\definecolor{colorparam}{named}{black}
\newif\iftikzgnuplot
\pgfplotsset{compat=1.12}
\newcommand{\init}{_0}
\newcommand{\A}{\ensuremath{\mathcal{A}}}
\newcommand{\Actions}{\Sigma}
\newcommand{\action}{\ensuremath{a}}
\newcommand{\actionStart}{\ensuremath{\styleact{start}}}
\newcommand{\actionEnd}{\ensuremath{\styleact{end}}}
\newcommand{\ActionsIndices}{\zeta}
\newcommand{\assign}{\leftarrow}
\newcommand{\BranchingCard}{B}
\newcommand{\Constraint}{C}
\newcommand{\Clock}{\mathbb{X}} %
\newcommand{\ClockCard}{H} %
\newcommand{\clock}{x} %
\newcommand{\clockabs}{\ensuremath{x_\mathit{abs}}} %
\newcommand{\clockval}{\nu} %
\newcommand{\ClocksZero}{\vec{0}}
\newcommand{\compOp}{\bowtie}
\newcommand{\compOpLeq}{\triangleleft}
\newcommand{\CTrue}{\mathbf{true}}
\newcommand{\edge}{e}
\newcommand{\Edges}{E}
\newcommand{\longuefleche}[1]{\stackrel{#1}{\longrightarrow}}
\newcommand{\longueflecheRel}[1]{\stackrel{#1}{\mapsto}}
\newcommand{\flecheRel}{{\rightarrow}}
\newcommand{\guard}{g}
\newcommand{\K}{K}
\newcommand{\KFalse}{\bot}
\newcommand{\Lg}{\mathcal{L}}
\newcommand{\loc}{\ell} %
\newcommand{\locinit}{\loc\init}
\newcommand{\Loc}{L} %
\newcommand{\LocFinal}{F}
\newcommand{\lterm}{\mathit{lt}}
\newcommand{\optvalue}{\ensuremath{o}}
\newcommand{\Param}{\mathbb{P}} %
\newcommand{\param}{p} %
\newcommand{\ParamCard}{M} %
\newcommand{\PVal}{({\setQplus})^{\Param}}
\newcommand{\pval}{v} %
\newcommand{\pvals}{V} %
\newcommand{\pvalsi}[1]{V_{#1}} %
\newcommand{\R}{{\mathbb{R}}}
\newcommand{\Rgeqzero}{\R_{\geq 0}}
\newcommand{\Rgzero}{{\R_{>0}}}
\newcommand{\setN}{{\mathbb N}}
\newcommand{\setQ}{{\mathbb Q}}
\newcommand{\setQplus}{\setQ_{+}} %
\newcommand{\setZ}{{\mathbb Z}}
\newcommand{\sinit}{s\init} %
\newcommand{\somelocs}{T} %
\newcommand{\concstate}{\ensuremath{s}} %
\newcommand{\States}{S} %
\newcommand{\TW}{\ensuremath{\mathcal{T}}}
\newcommand{\word}{\textcolor{colorok}{w}}
\newcommand{\words}{\textcolor{colorok}{W}}
\newcommand{\wloc}{w}
\newcommand{\resets}{R}
\newcommand{\project}[2]{\ensuremath{#1{\downarrow_{#2}}}}
\newcommand{\reset}[2]{\ensuremath{[#1]_{#2}}}
\newcommand{\valuate}[2]{\ensuremath{#2(#1)}}
\newcommand{\KMPSkipFunc}{\Delta_{\mathrm{KMP}}}
\newcommand{\QSSkipFunc}{\Delta_{\mathrm{QS}}}
\newcommand{\pat}{\mathit{pat}}
\newcommand{\str}{\word}
\newcommand{\Zp}{\setN_{>0}}%
\newcommand{\Rp}{{\mathbb{R}_{>0}}}
\newcommand{\Rnn}{{\mathbb{R}_{\ge 0}}}
\newcommand{\disjointUnion}{\sqcup}
\newcommand{\powerset}[1]{\mathcal{P}(#1)}
\newcommand{\cmark}{\ding{51}}%
\newcommand{\pmonaa}{${\mathtt{ParamMONAA}}$\xspace}
\newcommand{\monaa}{${\mathtt{MONAA}}$\xspace}
\newcommand{\CurrConf}{\mathit{CurrConf}}
\newcommand{\PrevConf}{\mathit{PrevConf}}
\newcommand{\eval}{\mathrm{eval}}
\newcommand{\rhoEmpty}{\rho_{\emptyset}}
\newcommand{\untimed}[1]{{\mathrm{Untimed}({#1})}}
\newcommand{\defProblem}[3]
{%
\noindent\fcolorbox{black}{blue!15}{
	\begin{minipage}{.95\columnwidth}
		\textbf{#1 problem:}\\
		\textsc{Input}: #2\\
		\textsc{Problem}: #3
	\end{minipage}
}
}
\newcommand{\cellHeaderColor}[1]{\cellcolor{blue!20}}
\newcommand{\cellHeader}[1]{\cellcolor{blue!20}\textbf{#1}}
\newcommand{\multiCellHeader}[2]{\multicolumn{#1}{c|}{\cellcolor{blue!20}\textbf{#2}}}
\newcommand{\startMultiCellHeader}[2]{\multicolumn{#1}{|c|}{\cellcolor{blue!20}\textbf{#2}}}
	\theoremstyle{plain}
	\newtheorem{lemma}{Lemma}
	\newtheorem{theorem}{Theorem}
\newtheorem{assumption}{Assumption}
	\theoremstyle{definition}
	\newtheorem{definition}{Definition}
	\newtheorem{example}{Example}
\theoremstyle{remark}
\newtheorem{remark}{Remark}
\newcommand{\stylealgo}[1]{\ensuremath{\textsf{#1}}}
\newcommand{\EFsynth}{\stylealgo{EFsynth}}
\newcommand{\EFsynthOpt}{\stylealgo{OptParamSynth}}
\newcommand{\EFsynthMin}{\stylealgo{MinParamSynth}}
\newcommand{\EFsynthMax}{\stylealgo{MaxParamSynth}}
\newcommand{\PTPM}{\stylealgo{PTPM}}
\newcommand{\PTPMmax}{\ensuremath{\stylealgo{\PTPM}_{\textsf{max}}}}
\newcommand{\PTPMmin}{\ensuremath{\stylealgo{\PTPM}_{\textsf{min}}}}
\newcommand{\PTPMopt}{\ensuremath{\stylealgo{\PTPM}_{\textsf{opt}}}}
\newcommand{\TransPattern}{\stylealgo{MakeSymbolic}}
\newcommand{\TransWord}{\stylealgo{TW2PTA}}
\newcommand{\breach}{\textsc{Breach}}
\newcommand{\imitator}{\textsf{IMITATOR}}
 	\definecolor{colorok}{RGB}{80,80,150}
	\definecolor{colorok}{RGB}{0,0,0}
\newcommand{\eg}{\textcolor{colorok}{e.\,g.,}\xspace}
\newcommand{\ie}{\textcolor{colorok}{i.\,e.,}\xspace}
\newcommand{\st}{\textcolor{colorok}{s.t.}\xspace}
\newcommand{\viz}{\textcolor{colorok}{viz.,}\xspace}
\newcommand{\wrt}{\textcolor{colorok}{w.r.t.}\xspace}
\def\@doi#1{\href{https://doi.org/#1}
      {\ttfamily https://doi.org/#1}\egroup}}
\def\@doi#1{\ttfamily https://doi.org/#1\egroup}}
  \def\doi{\bgroup\catcode`\_=12\relax\@doi}}
	\newcommand{\orcidID}[1]{\href{https://orcid.org/#1}{\includegraphics[height=1em]{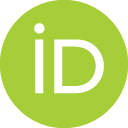}}}
\newcommand{\ourAbstract}{%
	Given a log and a specification, timed pattern matching aims at exhibiting for which start and end dates a specification holds on that log.
	For example, ``a given action is always followed by another action before a given deadline''.
	This problem has strong connections with \emph{monitoring} real-time systems.
	We address here timed pattern matching in the presence of an \emph{uncertain} specification, \ie{} that may contain timing parameters (\eg{} the deadline can be uncertain or unknown).
	We want to know for which start and end dates, and for what values of the timing parameters, a property holds.
	For instance, we look for the minimum or maximum deadline (together with the corresponding start and end dates) for which the property holds.
	We propose two frameworks for \emph{parametric} timed pattern matching.
	The first one is based on parametric timed model checking.
	In contrast to most parametric timed problems, the solution is effectively computable.
	The second one is a \emph{dedicated} method; not only we largely improve the efficiency compared to the first method, but we further propose optimizations with skipping.
	Our experiment results suggest that our algorithms, especially the second one, are efficient and practically relevant.
}
\begin{document}

\title{Parametric Timed Pattern Matching\AuthorVersion{\footnote{%
	This is the author version of the manuscript of the same name published in ACM Transactions on Software Engineering and Methodology (Volume 32, Issue~1, 2023).
	The final version is available at \href{https://doi.org/10.1145/3517194}{\nolinkurl{10.1145/3517194}}.
}}}
\ifdefined\VersionAuthor
	\author{}
	\date{}
\else

	\author{Masaki Waga}
	\email{mwaga@fos.kuis.kyoto-u.ac.jp}
	\orcid{0000-0001-9360-7490}
	\affiliation{%
	\institution{Kyoto University}
	\streetaddress{Yoshida-honmachi, Sakyo-ku}
	\city{Kyoto}
	\postcode{606-8501}
	\country{Japan}
	}
	\author{\'Etienne Andr\'e}
	\orcid{0000-0001-8473-9555}
	\affiliation{%
	\institution{Université de Lorraine, CNRS, Inria, LORIA}
	\streetaddress{54\,506}
	\city{Vandœuvre-lès-Nancy}
	\country{France}
	}

	\author{Ichiro Hasuo}
	\email{i.hasuo@acm.org}
	\orcid{0000-0002-8300-4650}
	\affiliation{%
	\institution{National Institute of Informatics}
	\streetaddress{2-1-2 Hitotsubashi}
	\city{Tokyo}
	\postcode{101-8430}
	\country{Japan}}
	\additionalaffiliation{the Graduate University for Advanced Studies}

	\renewcommand{\shortauthors}{Masaki Waga, \'Etienne Andr\'e, and Ichiro Hasuo}

\begin{abstract}
	\ourAbstract{}
\end{abstract}

\begin{CCSXML}
<ccs2012>
<concept>
<concept_id>10003752.10003790.10002990</concept_id>
<concept_desc>Theory of computation~Logic and verification</concept_desc>
<concept_significance>500</concept_significance>
</concept>
<concept>
<concept_id>10003752.10003790.10011192</concept_id>
<concept_desc>Theory of computation~Verification by model checking</concept_desc>
<concept_significance>500</concept_significance>
</concept>
<concept>
<concept_id>10010520.10010570.10010573</concept_id>
<concept_desc>Computer systems organization~Real-time system specification</concept_desc>
<concept_significance>500</concept_significance>
</concept>
<concept>
<concept_id>10003752.10003753.10003765</concept_id>
<concept_desc>Theory of computation~Timed and hybrid models</concept_desc>
<concept_significance>500</concept_significance>
</concept>
<concept>
<concept_id>10003752.10003766.10003773</concept_id>
<concept_desc>Theory of computation~Automata extensions</concept_desc>
<concept_significance>300</concept_significance>
</concept>
</ccs2012>
\end{CCSXML}

\ccsdesc[500]{Theory of computation~Logic and verification}
\ccsdesc[500]{Theory of computation~Verification by model checking}
\ccsdesc[500]{Computer systems organization~Real-time system specification}
\ccsdesc[500]{Theory of computation~Timed and hybrid models}
\ccsdesc[300]{Theory of computation~Automata extensions}

\keywords{monitoring,real-time systems,parametric timed automata}

\fi

\maketitle

\ifdefined\VersionAuthor
	\noindent{}\textbf{Masaki Waga\orcidID{0000-0001-9360-7490}}
	\\
	{\em\small{}Kyoto University, Japan}

	\smallskip

	\noindent{}\textbf{Étienne André\orcidID{0000-0001-8473-9555}}
	\\
	{\em\small{}Université de Lorraine, CNRS, Inria, LORIA, Nancy, France}

	\smallskip

	\noindent{}\textbf{Ichiro Hasuo\orcidID{0000-0002-8300-4650}}
	\\
	{\em\small{}National Institute of Informatics, Japan}
	\\
	{\em\small{}The Graduate University for Advanced Studies, Japan}

	\begin{abstract}
		\ourAbstract{}
	\end{abstract}

\fi

\section{Introduction}\label{section:introduction}
Monitoring real-time systems consists in deciding whether a log satisfies a specification.
A problem of interest is to determine \emph{for which segment} of the log the specification is satisfied or violated.
This problem can be related to string matching and pattern matching.
The \emph{timed pattern matching problem} was formulated in~\cite{UFAM14}, with subsequent works varying the setting and improving the technique (\eg{} \cite{UFAM16,WAH16,AMNU17,WHS17}).
The problem takes as input a log and a specification, and decides where in the log the specification is satisfied or violated.
In~\cite{WAH16,WHS17}, we introduced a solution to the timed pattern matching problem where the log is given in the form of a timed word (a sequence of events with their associated timestamps), and the specification in the form of a timed automaton (TA), an extension of finite-state automata with clocks~\cite{AD94}.

\begin{example}\label{example:motivation}
\LongVersion{%
	As a motivating example, consider the example in \cref{figure:example}.
}%
Consider the automaton in \cref{figure:example:PTA}, and fix $\styleparam{\param_1} = 1$ and $\styleparam{\param_2} = 1$---which gives a timed automaton~\cite{AD94}.
Here \$ is a special terminal character.
For this timed automaton (say~$\A$) and the target timed word~$\word$ in \cref{figure:example:word}, the output of the timed pattern matching problem is the set of matching intervals $\{(t,t') \mid \word|_{(t,t')} \in \Lg(\A) \} = \{(t,t') \mid t \in [3.7,3.9), t' \in (6.0, \infty)\}$.
We note that in our semantics, the automaton is forced to take a transition for each event, and thus, any intervals starting before 3.7 does not match.
\end{example}

\begin{figure*}[t]
	\begin{subfigure}[b]{\textwidth}
	\centering
		\scalebox{1}{
		\begin{tikzpicture}[shorten >=1pt,node distance=2.5cm,on grid,auto]
		\node[location,initial] (s_0) {$\loc_0$};
		\node[location] (s_1) [right of=s_0] {$\loc_1$};
		\node[location] (s_2) [right of=s_1] {$\loc_2$};
		\node[location] (s_3) [right of=s_2]{$\loc_3$};
		\node[location,accepting] (s_4) [right of=s_3]{$\loc_4$};

		\path[->] 
                 (s_0) edge [above] node[above,align=center] {$\styleclock{x} > \styleparam{\param_1}$\\
							$\styleact{a}$}
                                    node[below] {$\styleclock{x} := 0$} (s_1)
		(s_1) edge [above] node[above,align=center] {$\styleclock{x} < \styleparam{\styleparam{\param_2}}$\\
                                         $\styleact{a}$}
                                   node[below] {$\styleclock{x} := 0$} (s_2)
		(s_2) edge [above] node[align=center] {$\styleclock{x} < \styleparam{\styleparam{\param_2}}$\\$\styleact{a}$} (s_3)
		(s_3) edge [above] node[align=center] {$\CTrue$\\$\styleact{\$}$} (s_4);
		\end{tikzpicture}}
	\caption{A parametric timed automaton}
	\label{figure:example:PTA}
	\end{subfigure}

	\begin{subfigure}[b]{0.99\textwidth}
	\centering
	\scalebox{1}{
	\begin{tikzpicture}[scale=1.2,xscale=1.5]
	\draw [thick, -stealth](-0.5,0)--(6.5,0) node [anchor=north]{$t$};
	\draw (0,0.1) -- (0,-0.1) node [anchor=north]{$0$};

	\draw (0.5,0.1) node[anchor=south]{$\styleact{a}$} -- (0.5,-0.1) node[anchor=north]{$0.5$};
	\draw (0.9,0.1) node[anchor=south]{$\styleact{a}$} -- (0.9,-0.1) node[anchor=north]{$0.9$};
	\draw (1.3,0.1) node[anchor=south]{$\styleact{b}$} -- (1.3,-0.1) node[anchor=north]{$1.3$};
	\draw (1.7,0.1) node[anchor=south]{$\styleact{b}$} -- (1.7,-0.1) node[anchor=north]{$1.7$};
	\draw (2.8,0.1) node[anchor=south]{$\styleact{a}$} -- (2.8,-0.1) node[anchor=north]{$2.8$};
	\draw (3.7,0.1) node[anchor=south]{$\styleact{a}$} -- (3.7,-0.1) node[anchor=north]{$3.7$};
	\draw (5.3,0.1) node[anchor=south]{$\styleact{a}$} -- (5.3,-0.1) node[anchor=north]{$5.3$};
	\draw (4.9,0.1) node[anchor=south]{$\styleact{a}$} -- (4.9,-0.1) node[anchor=north]{$4.9$};
	\draw (6.0,0.1) node[anchor=south]{$\styleact{a}$} -- (6.0,-0.1) node[anchor=north]{$6.0$};
	\end{tikzpicture}}
	\caption{A timed word}
	\label{figure:example:word}
	\end{subfigure}

	\caption{An example of parametric timed pattern matching \cite{WHS17}}
	\label{figure:example}
\end{figure*}

While the log is by definition concrete, it may happen that the specification is subject to uncertainty.
For example, we may want to detect \emph{cyclic patterns} with a period~$d$, without knowing the value of~$d$ with full certainty.
Therefore, the more abstract problem of \emph{parametric timed pattern matching} becomes of interest:
\textbf{given a (concrete) timed log and an incomplete specification where some of the timing constants may be known with limited precision or completely unknown, what are the time intervals and the valuations of the parameters for which the specification holds?}

Coming back to \cref{figure:example}, the question becomes to exhibit values for $\styleparam{t}, \styleparam{t'}, \styleparam{\param_1}, \styleparam{\param_2}$ for which the specification holds on the log, \ie{}
	$\{(t,t',\pval)\mid \word|_{(t,t')}\in \Lg(\valuate{\A}{\pval})\}$, where $\pval$ denotes a valuation of $\param_1, \param_2$ and $\valuate{\A}{\pval}$ denotes the replacement of $\styleparam{\param_1}, \styleparam{\param_2}$ in~$\A$ with their respective valuation in~$\pval$.
It is appreciated to conduct parametric timed pattern matching \emph{online}, \ie{} returning a partial result before obtaining the entire log.
An online algorithm for parametric timed pattern matching can synthesize parameter valuations \emph{in parallel} with the system execution under monitoring.
This is useful, for example because the synthesized parameter valuation can tell how severely the monitored system violates the timing constraints, and we can immediately react based on~it.

\begin{example}%
 \label{example:severity}
 \begin{figure}[tbp]
  \centering
  \begin{tikzpicture}[shorten >=1pt,node distance=2.5cm,on grid,auto]
   \node[location,initial] (s_0) {$\loc_0$};
   \node[location] (s_1) [right of=s_0] {$\loc_1$};
   \node[location] (s_2) [right of=s_1] {$\loc_2$};
   \node[location,accepting] (s_3) [right of=s_2]{$\loc_3$};

   \path[->] 
   (s_0) edge [above] node[above,align=center] {$\styleact{request}$}
   node[below] {$\styleclock{x} := 0$} (s_1)
   (s_1) edge [above] node[above,align=center] 
   {$\styleclock{x} > \styleparam{\param} \geq 10$\\
   $\styleact{approve}$} (s_2)
   (s_2) edge [above] node[align=center] {$\CTrue$\\$\styleact{\$}$} (s_3);
  \end{tikzpicture}
  \caption{A parametric timed automaton to monitor a sequence of requests and approvals}%
  \label{figure:request_approval_PTA}
 \end{figure}
 Consider monitoring a sequence of messages consisting of \styleact{request} and \styleact{approve}.
 We can detect too late approvals by the PTA in \cref{figure:request_approval_PTA}, which matches an approval at least 10 time units after the request.
 The severity of the violation, \ie{} the delay between \styleact{request} and \styleact{approve}, is captured by the parameter $\styleparam{\param}$.
\end{example}

\paragraph{Contribution}
In this work, we introduce two approaches addressing the parametric timed pattern matching problem.
Both use as underlying formalisms timed words and parametric timed automata (PTAs)~\cite{AHV93}, two well-known formalisms in the real-time systems community.
We show that the problem is decidable (which mainly comes from the fact that logs are finite).

Then, our algorithmic contribution is twofold.
First, we propose a practical solution based on parametric timed model checking.
We implement our method using \imitator{}~\cite{Andre21} and we perform a set of experiments on a set of automotive benchmarks.

Second, we propose a new \emph{dedicated} technique for performing efficient parametric timed pattern matching.
We propose optimizations based on \emph{skipping}, in the line of~\cite{WHS17}.
We implement our framework in a prototypical tool \pmonaa, we perform a set of experiments on a set of automotive benchmarks, and show that we increase the efficiency compared to our first approach.

Both algorithms are suitable for online monitoring, as they do not need the whole run to be executed, and experiments show that they are fast enough to be applied at runtime.

\paragraph{About this manuscript}\label{newtext:about}
This manuscript is an extension of~\cite{AHW18,WA19}, with~\cite{AHW18} describing the first method of this manuscript (based on parametric timed model checking) while~\cite{WA19} describes the second dedicated method.
We merged all concepts, and significantly improved the content, by notably
        modifying the problem formulation of~\cite{WA19} for the consistency,
	adding missing proofs,
	formalizing \TransPattern{} and \TransWord{} (in \cref{definition:TransPattern,definition:TransWord} respectively),
	adding a new theoretical result on the correctness of our first approach (\cref{ss:correctness}),
        almost rewriting the explanation of the skip values in Section 4.2 to make it more intuitive,
        adding new examples to illustrate the concept of the skip values in \cref{section:skipping},
	and
        adding \cref{section:comparison} to compare these two approaches.

\paragraph{Related work}%
Several algorithms have been proposed for online monitoring of real-time temporal logic specifications.
Online monitoring consists in monitoring on-the-fly at runtime, while offline monitoring is performed after the execution is completed, with less hard constraints on the monitoring algorithm performance.
An online monitoring algorithm for ptMTL (a past time fragment of MTL~\cite{Koymans90}) was proposed in~\cite{RFB14} and an algorithm for MTL[U,S] (a variant of MTL with both forward and backward temporal modalities) was proposed in~\cite{HOW14}.
In addition, a case study on an autonomous research vehicle monitoring~\cite{KCDK15} shows such procedures can be performed in an actual vehicle.

The approaches most related to ours are~\cite{UFAM14,UFAM16,Ulus17}.
In that series of works, logs are encoded by \emph{signals}, \ie{} values that vary over time.
This can be seen as a \emph{state-based} view, while our timed words are \emph{event-based}.
The formalism used for specification in~\cite{UFAM14,UFAM16} is timed regular expressions (TREs).
An offline monitoring algorithm is presented in~\cite{UFAM14} and an online one is \LongVersion{presented }in~\cite{UFAM16}.
These algorithms are implemented in the tool \emph{Montre}~\cite{Ulus17}.
In~\cite{BFNMA18}, the setting is signals matched against a temporal pattern; the construction is automata-based as in~\cite{WAH16,WHS17}.

Some algorithms have also been proposed for parameter identification of a temporal logic specification with uncertainty over a log.
In the discrete time setting, an algorithm for an extension of LTL is proposed in~\cite{FR08}; and in the real-time setting, algorithms for parametric signal temporal logic (PSTL) are proposed in~\cite{ADMN11,JTSSS17,BFM18,BBMJ19}.
Although these works are related to our approach, previous approaches do not focus on segments of a log but on a whole log.
In contrast, we exhibit intervals together with their associated parameter valuations, in a fully symbolic fashion.
We believe our matching-based setting is advantageous in many usage scenarios \eg{} from hours of a log of a car, extracting timing constraints of a certain actions to cause slipping.
Also, our setting allows the patterns with complex timing constraints (see the pattern in \cref{figure:patterns:blowup} for example).
Another related research direction is \emph{specification mining}.
In~\cite{NCJF18,SNF17,NF19}, the authors propose algorithms to mine a TRE formula from a TRE \emph{template} and a set of logs.
Beyond the fact that we used a different formalism (an extension of timed automata vs.\ TRE),
the main difference with ours is if the timing constraints are mined by a \emph{statistical} approach or a \emph{symbolic} approach:
in~\cite{NF19}, timing constraints are mined by clustering while we use symbolic analysis of polyhedra.\label{paragraph:specification_mining}

In~\cite{BFMU17}, an offline algorithm for the robust pattern matching problem is considered over signal regular expressions, consisting in computing the \emph{quantitative} (robust) semantics of a signal relative to an expression.
For piecewise-constant and piecewise-linear signals, the problem can be effectively solved using a finite union of zones (linear constraints with a special form~\cite{BY03}).
In~\cite{Waga19}, an online algorithm for the robust pattern matching problem is presented over timed symbolic weighted automata, where the semantics is generalized with semiring. %
The algorithm in~\cite{Waga19} effectively solves the problem for piecewise-constant signals.

In~\cite{WAH19}, we proposed a symbolic monitoring algorithm against specifications parametric in time and data, \ie{} we considered logs extended with data as the log formalism, and parametric timed data automata (an \emph{ad-hoc} extension of timed automata) as the specification formalism.

In~\cite{WAH21}, we proposed the \emph{model-bounded monitoring} scheme to use prior knowledge about the monitored system in the interpolation of the observation obtained by discrete sampling. One of the procedures in~\cite{WAH21} is by reduction to the reachability analysis of a linear hybrid automaton~\cite{HPR94}.
The idea of the reduction is essentially the same as ours in \cref{section:PTMC}.

Further works attempted to quantify the distance between a specification and a signal temporal logic (STL) specification (\eg{} \cite{DFM13,DMP17,JBGNN18}).
The main difference with our work is that these works compute a distance \wrt{} to a whole log, while we aim at exhibiting where in the log is the property satisfied; our notion of parameters can also be seen as a relative time distance.
However, our work is closer to the robust satisfaction of guards rather than signal values; in that sense, our contribution is more related to the time robustness in~\cite{DM10} or the distance in~\cite{ABD18}.

Finally, while our work is related to parameter synthesis, in the sense that we identify parameter valuations in the property such that it holds (or not), the term ``parameter synthesis'' is also used in monitoring with a slightly different meaning: given a \emph{model} with parameters, the goal is to find parameters that maximize the robustness of the specification, \ie{} satisfying behaviors for a range of parameters for which the model robustly satisfies the property.
A notable tool achieving this is \breach{}~\cite{Donze10}.

\begin{table*}[t]
	\centering
	
	\caption{Matching problems}

	\scalebox{.8}{%
	\begin{tabular}{c||c|c|c}
		&log, target&specification, pattern &output\\\hline\hline
		string matching&
		a word $\str\in \Sigma^{*}$
		&
		a word $\pat\in\Sigma^{*}$
		&
		$\{(i,j)\in(\Zp)^{2}\mid \str(i,j)=\pat\}$
		\\\hline
		pattern matching (PM) &
		a word $\str\in \Sigma^{*}$
		&
		an NFA $\A$
		&
		$\{(i,j)\in(\Zp)^{2}\mid \str(i,j)\in \Lg(\A)\}$
		\\\hline
		timed PM&
		a timed word $\str\in(\Sigma \times \Rgzero)^{*}$
		&
		a TA $\A$
		&
		$\{(t,t') \in (\Rgzero)^{2}\mid \word|_{(t,t')} \in \Lg(\A)\}$
		\\\hline
		parametric timed PM&
		a timed word $\str\in(\Sigma \times \Rgzero)^{*}$
		&
		a PTA $\A$
		&
		$\{(t,t', \pval) \mid \word|_{(t,t')} \in \Lg(\valuate{\A}{\pval})\}$
		\\\hline
	\end{tabular}
	}
	
	\label{table:matching}
\end{table*}

A summary of various matching problems is recalled in \cref{table:matching}.

\paragraph{Outline}
We introduce the necessary definitions and state our main objective in \cref{section:preliminaries}.
We present and evaluate our method based on parametric timed model checking in \cref{section:PTMC}.
We then present and evaluate a \emph{dedicated} method, enhanced with automata-based skipping, in \cref{section:adhoc}.
In \cref{section:comparison}, we discuss the comparison between the approaches in \cref{section:PTMC,section:adhoc}.
We conclude in \cref{section:conclusion}.

\section{Preliminaries and objective}\label{section:preliminaries}

Our target strings are \emph{timed words}~\cite{AD94}, that are time-stamped words over an alphabet $\Actions$.
Our patterns are given by parametric timed automata~\cite{AHV93}.

\LongVersion{
\subsection{Timed words and timed segments}\label{subsection:timed_word}
}

For an alphabet $\Actions$, a \emph{timed word} is a
sequence $\word$ of pairs $(\action_i,\tau_i) \in (\Actions \times \Rgzero)$
satisfying $\tau_i \leq \tau_{i + 1}$ for any $i \in \{1,2,\dots,|\word|-1\}$.
We let $\tau_0=0$.
For an alphabet $\Actions$, we denote the set of the timed words on $\Actions$ by $\TW(\Actions)$.
For an alphabet $\Actions$ and $n \in\Zp$, we denote the set of the timed words of length $n$ on $\Actions$ by $\TW^n(\Actions)$.
Given a timed word $\word$, we often denote it by $(\overline{a},\overline{\tau})$, where $\overline{a}$ is the sequence $(\action_1, \action_2, \cdots)$ and $\overline{\tau}$ is the sequence $(\tau_1, \tau_2, \cdots)$.
Let $\word = (\overline{a},\overline{\tau})$ be a timed word.
We denote the subsequence $(\action_i, \tau_i),(\action_{i+1},
\tau_{i+1}),\cdots,(\action_j,\tau_j)$ by $\word (i,j)$.
For $t \in \R$ such that $- \tau_1 < t$, the \emph{$t$-shift} of $\word$ is
$(\overline{a}, \overline{\tau}) + t = (\overline{a}, \overline{\tau} +
t)$ where
$\overline{\tau} + t = \tau_1 + t,\tau_2 + t,\cdots, \tau_{|\tau|} + t$.
For timed words $\word = (\overline{a},\overline{\tau})$ and
$\word' = (\overline{a'},\overline{\tau'})$,
their \emph{absorbing concatenation} is $\word \circ \word' = (\overline{a} \circ \overline{a'}, \overline{\tau} \circ \overline{\tau'})$ where $\overline{a} \circ \overline{a'}$ and  $\overline{\tau} \circ \overline{\tau'}$ are usual concatenations, and their \emph{non-absorbing concatenation} is $\word \cdot \word' = \word \circ (\word' + \tau_{|\word|})$.
The concatenations of subsets of $\TW(\Actions)$, \ie{} sets of timed words, are also defined similarly. %
For a set $\words\subseteq\TW(\Actions)$ of timed words,%
 its untimed projection $\untimed{\words}\in\Actions^*$ is $\{\overline{a}\mid (\overline{a},\overline{\tau})\in\words\}$.

For a timed word $\word = (\overline{a}, \overline{\tau})$ on $\Actions$
and 
$t,t' \in \Rgeqzero$ satisfying $t < t'$, a \emph{timed word segment}
$\word|_{(t,t')}$ is defined by the timed word $(\word (i,j) - t) \circ (\$,t' - t)$
on the augmented alphabet $\Actions \disjointUnion \{\$\}$,
where $i,j$ are chosen so that
$\tau_{i-1} < t \leq \tau_i$ and
$\tau_{j} \leq t' < \tau_{j+1}$.
Here the fresh symbol \(\$\) is called the \emph{terminal character}.
\subsection{Clocks, parameters and guards}\label{ss:guards}

We assume a set~$\Clock = \{ \clock_1, \dots, \clock_\ClockCard \} $ of \emph{clocks} (where $\ClockCard \in \setN$ denotes the clocks set cardinality), \ie{} real-valued variables that evolve at the same rate.
A clock valuation is\LongVersion{ a function}
$\clockval : \Clock \rightarrow \Rgeqzero$.
We write $\ClocksZero$ for the clock valuation assigning $0$ to all clocks.
Given $d \in \Rgeqzero$, $\clockval + d$ \ShortVersion{is}\LongVersion{denotes the valuation} \st{} $(\clockval + d)(\clock) = \clockval(\clock) + d$, for all $\clock \in \Clock$.
Given $\resets \subseteq \Clock$, we define the \emph{reset} of a valuation~$\clockval$, denoted by $\reset{\clockval}{\resets}$, as follows: $\reset{\clockval}{\resets}(\clock) = 0$ if $\clock \in \resets$, and $\reset{\clockval}{\resets}(\clock)=\clockval(\clock)$ otherwise.

We assume a set~$\Param = \{ \styleparam{\param_1}, \dots, \styleparam{\param_\ParamCard} \} $ of \emph{parameters}\LongVersion{, \ie{} unknown constants}.
A parameter {\em valuation} $\pval$ is\LongVersion{ a function}
$\pval : \Param \rightarrow \setQplus$.
We assume ${\compOp} \in \{<, \leq, =, \geq, >\}$.
A guard~$\guard$ is a constraint over $\Clock \cup \Param$ defined by a conjunction of inequalities of the form $\clock \compOp d$, 
or $\clock \compOp \param$ with $d \in \setN$ and $\param \in \Param$.
Given~$\guard$, we write~$\clockval\models\pval(\guard)$ if %
the expression obtained by replacing each~$\clock$ with~$\clockval(\clock)$ and each~$\param$ with~$\pval(\param)$ in~$\guard$ evaluates to true.

A linear term over $\Clock \cup \Param$ is of the form $\sum_{1 \leq i \leq \ClockCard} \alpha_i \clock_i + \sum_{1 \leq j \leq \ParamCard} \beta_j \param_j + d$, with
	$\clock_i \in \Clock$,
	$\param_j \in \Param$,
	and
	$\alpha_i, \beta_j, d \in \setZ$.
A \emph{constraint}~$\Constraint$ (\ie{} a convex polyhedron) over $\Clock \cup \Param$ is a conjunction of inequalities of the form $\lterm \compOp 0$, where $\lterm$ is a linear term.
Given a set~$\Param$ of parameters, we denote by $\project{\Constraint}{\Param}$ the projection of~$\Constraint$ onto~$\Param$, \ie{} obtained by eliminating the variables not in $\Param$ (\eg{} using Fourier-Motzkin\LongVersion{~\cite{Schrijver86}}).
$\KFalse$ denotes the constraint over~$\Param$ representing the empty set of parameter valuations.

\subsection{Parametric timed automata}

Parametric timed automata (PTAs) extend timed automata with parameters within guards %
 in place of integer constants~\cite{AHV93}.
 
\subsubsection{Syntax}

\begin{definition}[PTA]\label{def:uPTA}
	A parametric timed automaton (PTA) $\A$ is a tuple \mbox{$\A = (\Actions, \Loc, \locinit, \LocFinal, \Clock, \Param, %
\Edges)$}\footnote{Following~\cite{WHS17}, we do not allow invariants in the locations. Allowing invariants would be straightforward.}, where:
	\begin{enumerate}
		\item $\Actions$ is a finite set of actions,
		\item $\Loc$ is a finite set of locations,
		\item $\locinit \in \Loc$ is the initial location,
		\item $\LocFinal \subseteq \Loc$ is the set of final locations,
		\item $\Clock$ is a finite set of clocks,
		\item $\Param$ is a finite set of parameters,
		\item $\Edges$ is a finite set of edges  $\edge = (\loc, \guard, \action, \resets, \loc')$
		where~$\loc, \loc' \in \Loc$ are the source and target locations, $\action \in \Actions$, $\resets\subseteq \Clock$ is a set of clocks to be reset, and $\guard$ is a guard.
	\end{enumerate}
\end{definition}

Given\LongVersion{ a parameter valuation}~$\pval$, we denote by $\valuate{\A}{\pval}$ the non-parametric structure where all occurrences of a parameter~$\param_i$ have been replaced by~$\pval(\param_i)$.
In this work, we refer as a \emph{timed automaton} to any structure $\valuate{\A}{\pval}$.\footnote{%
	Technically, a timed automaton requires non-negative integer constants, while we define non-negative \emph{rational} valuations.
	So a TA can be obtained by assuming a rescaling of the constants: by multiplying all constants in $\valuate{\A}{\pval}$ by the least common multiple of their denominators, we obtain an equivalent (integer-valued) TA, as defined in \cite{AD94}.
}

\subsubsection{Synchronous product}

The synchronous product (using strong broadcast, \ie{} synchronization on shared actions) of several PTAs gives a PTA.

\begin{definition}\label{def:product}[synchronized product of PTAs]
	Let $N \in \setN$.
	Given a set of PTAs $\A_i = (\Actions_i, \Loc_i, {(\locinit)}_i, \LocFinal_i, \Clock_i, \Param_i, %
 \Edges_i)$, $1 \leq i \leq N$,
	the \emph{synchronized product} of $\A_i$, $1 \leq i \leq N$,
	denoted by $\A_1 \parallel \A_2 \parallel \cdots \parallel \A_N$,
	is the tuple
		$(\Actions, \Loc, \locinit, \LocFinal, \Clock, \Param, %
 \Edges)$, where:
	\begin{enumerate}
		\item $\Actions = \bigcup_{i=1}^N\Actions_i$,
		\item $\Loc = \prod_{i=1}^N \Loc_i$,
		\item $\locinit = ({(\locinit)}_{1}, \dots, {(\locinit)}_{N})$,
		\item $\LocFinal = \LocFinal_1 \times \dots \times \LocFinal_{N}$,
		\item $\Clock = \bigcup_{1 \leq i \leq N} \Clock_i$,
		\item $\Param = \bigcup_{1 \leq i \leq N} \Param_i$,
	\end{enumerate}
	and $\Edges{}$ is defined as follows.
	For all $\action \in \Actions$,
	let $\ActionsIndices_\action$ be the subset of indices $i \in 1, \dots, N$
	such that $\action \in \Actions_i$.
	For all  $\action \in \Actions$,
	for all $(\loc_1, \dots, \loc_N) \in \Loc$,
	for all \mbox{$(\loc_1', \dots, \loc_N') \in \Loc$},
	$\big((\loc_1, \dots, \loc_N), \guard, \action, \resets, (\loc'_1, \dots, \loc'_N)\big) \in \Edges$
	if:
	\begin{itemize}
		\item for all $i \in \ActionsIndices_\action$, there exist $\guard_i, \resets_i$ such that $(\loc_i, \guard_i, \action, \resets_i, \loc_i') \in \Edges_i$, $\guard = \bigwedge_{i \in \ActionsIndices_\action} \guard_i$, $\resets = \bigcup_{i \in \ActionsIndices_\action}\resets_i$, and,
		\item for all $i \not\in \ActionsIndices_\action$, $\loc_i' = \loc_i$.
\end{itemize}
\end{definition}

Note that we define the set of accepting locations to be the \emph{product} of the individual accepting locations, \ie{} a location of the product automaton is accepting if all its component locations are accepting.

\subsubsection{Concrete semantics}

Let us now recall the concrete semantics of TAs.

\begin{definition}[Semantics of a TA]
	Given a PTA $\A = (\Actions, \Loc, \locinit, \LocFinal, \Clock, \Param, %
\Edges)$ 
        with clocks~$\Clock = \{ \clock_1, \dots, \clock_\ClockCard \}$,
	and a parameter valuation~\(\pval\),
	the semantics of $\valuate{\A}{\pval}$ is given by the timed transition system (TTS) $(\States, \sinit, \flecheRel)$, with
	\begin{itemize}
		\item $\States = \Loc \times \Rgeqzero^\ClockCard$ %
		\item $\sinit = (\locinit, \ClocksZero) $,
		\item  $\flecheRel$ consists of the discrete and (continuous) delay transition relations:
		\begin{ienumeration}
			\item discrete transitions: $(\loc, \clockval) \longueflecheRel{\edge} (\loc', \clockval')$, %
				if
				there exists $\edge = (\loc, \guard, \action, \resets, \loc') \in \Edges$, such that $\clockval' = \reset{\clockval}{\resets}$, and $\clockval \models \pval(\guard$).
			\item delay transitions: $(\loc, \clockval) \longueflecheRel{d} (\loc, \clockval + d)$, with $d \in \Rgeqzero$.
		\end{ienumeration}
	\end{itemize}
\end{definition}

    Moreover we write $(\loc, \clockval)\longuefleche{(d, \edge)} (\loc', \clockval')$ for a combination of a delay and discrete transition if
		$\exists  \clockval'' :  (\loc, \clockval) \longueflecheRel{d} (\loc, \clockval'') \longueflecheRel{\edge} (\loc', \clockval')$.

Given a TA~$\valuate{\A}{\pval}$ with concrete semantics $(\States, \sinit, \flecheRel)$, we refer to the states of~$\States$ as the \emph{concrete states} of~$\valuate{\A}{\pval}$.
A \emph{run} of~$\valuate{\A}{\pval}$ is an alternating sequence of concrete states of $\valuate{\A}{\pval}$ and pairs of edges and delays starting from the initial state $\sinit$ of the form
$\concstate_0, (d_0, \edge_0), \concstate_1, (d_1, \edge_1), \cdots$
with
$i = 0, 1, \dots$, $\edge_i \in \Edges$, $d_i \in \Rgeqzero$ and
	$\concstate_i \longuefleche{(d_i, \edge_i)} \concstate_{i+1}$.
Given such a run, the associated \emph{timed word} is $(\action_1, \tau_1), (\action_2, \tau_2), \cdots$, where $\action_i$ is the action of edge~$\edge_{i-1}$, and $\tau_i = \sum_{0 \leq j \leq i-1} d_j$, for $i = 1, 2 \cdots$.\footnote{%
	The ``$-1$'' in indices comes from the fact that, following usual conventions in the literature, states are numbered starting from~0 while words are numbered from~1.
}
Given\LongVersion{ a state}~$\concstate=(\loc, \clockval)$, we say that $\concstate$ is reachable in~$\valuate{\A}{\pval}$ if $\concstate$ appears in a run of $\valuate{\A}{\pval}$.
By extension, we say that $\loc$ is reachable; and by extension again, given a set~$\somelocs$ of locations, we say that $\somelocs$ is reachable if there exists $\loc \in \somelocs$ such that $\loc$ is reachable in~$\valuate{\A}{\pval}$.

A finite run is \emph{accepting} if its last state $(\loc, \clockval)$ is such that $\loc \in \LocFinal$.
The (timed) \emph{language} $\Lg(\valuate{\A}{\pval})$ is defined to be the set of timed words associated with all accepting runs of~$\valuate{\A}{\pval}$.
\subsection{Reachability synthesis}

We use here reachability synthesis for the following two purposes.
\begin{itemize}
 \item In \cref{section:PTMC}: to solve parametric timed pattern matching
 \item In \cref{section:adhoc}: to improve the dedicated parametric timed pattern matching algorithm (\cref{alg:online_no_skip}) with a skipping optimization
\end{itemize}
This procedure, called \EFsynth{}, takes as input a PTA~$\A$ and a set of target locations~$\somelocs$, and attempts to synthesize all parameter valuations~$\pval$ for which~$\somelocs$ is reachable in~$\valuate{\A}{\pval}$.
\EFsynth{} was formalized in \eg{} \cite{JLR15} and is a procedure that may not terminate, but that computes an exact result (sound and complete) if it terminates.
\EFsynth{} traverses the \emph{parametric zone graph} of~$\A$, which is a potentially infinite extension of the well-known zone graph of TAs (see, \eg{} \cite{ACEF09,JLR15}\LongVersion{ for a formal definition}).

\paragraph{Optimal parameter reachability synthesis}\label{newtext:EFsynthopt}
In addition, for the specific case of pattern matching with optimization (\cref{ss:optimization}), we will make use of \emph{optimal parameter reachability synthesis}~\cite{ABPP19}.
This procedure, called \EFsynthOpt{}, takes as input a PTA~$\A$, a set of target locations~$\somelocs$ and a parameter~$\param$ to minimize, and attempts to synthesize all parameter valuations~$\pval$ for which~$\somelocs$ is reachable in~$\valuate{\A}{\pval}$ \emph{and} for which the value of~$\param$ is optimal; \EFsynthOpt{} is a generic algorithm, that can be instantiated to \EFsynthMin{} (where $\param$ should be minimized) or \EFsynthMax{} (where $\param$ should be maximized).
\EFsynthMin{} was studied in~\cite{ABPP19}, and is essentially similar to~\EFsynth{}, with a condition to only keep the states leading to a minimal value of~$\param$.

\subsection{Parametric timed pattern matching}

Let us recall timed pattern matching~\cite{WAH16,WHS17,WHS18}.%

\smallskip

\defProblem
	{Timed pattern matching}
	{a timed word~$\str$ over an alphabet $\Actions$ and a TA~$\A$ over the augmented alphabet $\Actions \disjointUnion \{\$\}$}
	{compute all the intervals $(t,t')$ for which the segment  $\word|_{(t,t')}$ is 
	accepted by $\A$.
	That is, it requires
	the \emph{match set} $\mathcal{M}
	(\word,\A) = \{(t,t') \mid \word|_{(t,t')} \in \Lg(\A)\}$.}

\medskip

The match set $\mathcal{M} (\word,\A)$ is in general uncountable.
For example, if $\A$ accepts any timed word, the match set is $\{(t,t') \in \Rnn \times \Rnn \mid t < t' \}$, which is uncountable.
However it allows finite representation, as a finite union of special polyhedra called \emph{zones} (see~\cite{BY03,WAH16}).
Roughly speaking, zones are made of constraints of the form $\clock \compOpLeq c$ or $\clock - \clock' \compOpLeq c$, with $\clock, \clock' \in \Clock$, $c \in \setZ$ and $\compOpLeq \in \{ \leq, < \}$.

We now extend the timed pattern matching problem to parameters by allowing a specification expressed using PTAs.
The problem now requires not only the start and end dates for which the property holds, but also the associated parameter valuations.

\smallskip

\defProblem%
	{Parametric timed pattern matching}
	{a timed word~$\str$ over an alphabet $\Actions$ and a PTA~$\A$ over the augmented alphabet $\Actions \disjointUnion \{\$\}$}
	{compute all the triples $(t,t', \pval)$ for which the segment  $\word|_{(t,t')}$ is accepted by $\valuate{\A}{\pval}$.
	That is, it requires the \emph{match set} $\mathcal{M} (\word,\A) = \{(t,t', \pval) \mid \word|_{(t,t')} \in \Lg(\valuate{\A}{\pval})\}$.}

\medskip

Since parametric timed pattern matching is a generalization of (non-parametric) timed pattern matching, the match set $\mathcal{M} (\word,\A)$ is again in general uncountable; however,
we will see that it can still be represented as a finite union of polyhedra, but in more dimensions, \viz{} $|\Param| + 2$, \ie{} the number of parameters + 2 further dimensions for~$t$ and~$t'$.
In addition, the form of the obtained polyhedra is more general than zones, as parameters may ``accumulate'' to produce sums of parameters with coefficients (\eg{} $3 \times \param_1 < \param_2 + 2 \times \param_3$).

\begin{example}%
 \label{example:PTPM}
 \begin{figure}[t]
 \begin{subfigure}[b]{0.60\textwidth}
  \centering
  \footnotesize
   \begin{tikzpicture}[shorten >=1pt,node distance=1.5cm,on grid,auto] 
   \node[location,initial] (s_0)  {$\locinit$}; 
   \node[location,node distance=2.5cm] (s_1) [right=of s_0] {$\loc_1$}; 
   \node[location,node distance=2.5cm] (s_2) [right=of s_1] {$\loc_2$};
   \node[location,accepting,node distance=1.5cm] (s_3) [right=of s_2] {\cmark};
   \path[->] 
   (s_0) edge  [above] node[align=center] {$\styleact{a}$\\$\styleclock{x} > 1$} (s_1)
   (s_1) edge  [above] node[above,align=center] {$\styleact{a}$\\$\styleclock{x} < \styleparam{\param}$} node[below]{$\styleclock{x}:=0$}(s_2) 
   (s_2) edge  [above] node[align=center] {$\styleact{\$}$\\$\styleclock{x}<1$} (s_3);
   \end{tikzpicture}
   \caption{the input PTA $\A$}%
  \label{figure:running-example:PTA}
  \begin{tikzpicture}[scale=1.8,xscale=1.5]
   \draw [thick, -stealth](-0.5,0)--(2.25,0) node [anchor=north]{$t$};
   \draw (0,0.1) -- (0,-0.1) node [anchor=north]{$0$};

   \draw (0.35,0.1) node[anchor=south]{$\styleact{a}$} -- (0.35,-0.1) node[anchor=north]{$0.7$};
   \draw (1.0,0.1) node[anchor=south]{$\styleact{a}$} -- (1.0,-0.1) node[anchor=north]{$2.0$};
   \draw (2.05,0.1) node[anchor=south]{$\styleact{a}$} -- (2.05,-0.1) node[anchor=north]{$4.1$};
  \end{tikzpicture}
  \caption{the input timed word $\word$}
 \end{subfigure}
 \begin{subfigure}[b]{0.39\textwidth}
  \begin{displaymath}
   \begin{cases}
    0.7 < t < 1.0\\
    4.1 < p + t\\
    4.1 \leq t' < 5.1
   \end{cases}
  \end{displaymath}
  \caption{the convex polyhedron representing the match set $\mathcal{M}(\word, \A)$}
  \label{figure:running-example:match_set}
 \end{subfigure}
 \caption{Example of parametric timed pattern matching}
 \label{figure:running-example}
 \end{figure}
 \cref{figure:running-example} shows an example of parametric timed pattern matching.
 Given the PTA $\A$ and the timed word $\word$, the parametric timed pattern matching problem asks for the match set $\mathcal{M}(\word,\A)$ in \cref{figure:running-example:match_set}.

The PTA $\A$ shows that there must be at least 1 time unit between the beginning of the matching $t$ and the first action $\styleact{a}$ in the matching. Moreover, each matching consists of two actions $\styleact{a}$.
Therefore, any matching must begin after 0.7 and at least 1 time unit before the action at time 2.0, and we have $0.7 < t < 2.0 - 1.0 (= 1.0)$.
At location $\loc_1$, the value of the clock $x$ is the elapsed time from the beginning of the matching.
Therefore, at time $T$, the value of $\styleclock{x}$ is $T - t$.
Since we observe an action $\styleact{a}$ at 4.1 and we have $\styleclock{x} < \styleparam{p}$ at the edge from $\loc_1$ to $\loc_2$, we have $4.1 - t < p$.
Finally, $\A$ shows that the matching must end within 1 time unit after the second action $\styleact{a}$ in the matching. 
Therefore, we have $4.1 \leq t' < 5.1$.

 We observe that the convex polyhedron in the match set has three dimensions for the parameter $p$ and for $t$ and $t'$.
 We also observe that this convex polyhedron is not a zone because it contains the constraint $4.1 < p + t$.
\end{example}

In the rest of this paper, we make the following assumption on the PTA in parametric timed pattern matching to simplify the construction and the reasoning. This assumption is easy to remove in practice since we can simply remove the transitions that do not satisfy the second part of the assumption.

\begin{assumption}%
 \label{assumption:transitions}
 As in~\cite{WAH16,WHS17}, we assume that all transitions to the accepting locations are labeled with~$\$$, and actions that are not part of the timed word alphabet are removed, except for the special action~$\$$.
\end{assumption}

\section{Parametric timed pattern matching using model checking}\label{section:PTMC}
\subsection{General approach}\label{ss:general}

In addition to \cref{assumption:transitions}, 
we make the following assumption, that does not impact the correctness of our method, but simplifies the subsequent reasoning.

\begin{assumption}%
 \label{assumption:final}
 We assume that the pattern automaton contains a single accepting location.
\end{assumption}

\cref{assumption:final} is easy to remove in practice: if the pattern PTA contains more than one accepting location, they can be merged into a single accepting location.

We show using the following approach that parametric timed pattern matching can reduce to parametric reachability analysis.

\begin{enumerate}
	\item We turn the pattern into a symbolic pattern, by allowing it to start anytime.
		In addition, we use two parameters to measure the (symbolic) starting time and the (symbolic) ending time of the pattern.
	\item We turn the timed word into a (non-parametric) timed automaton that uses a single clock $\clockabs$ measuring the absolute time.
	\item We consider the synchronized product of the symbolic pattern PTA and the timed word (P)TA.
	\item We run the reachability synthesis algorithm \EFsynth{} to derive all possible parameter valuations for which the accepting location of the pattern PTA and of the timed word TA is reachable.
\end{enumerate}

\subsection{Our approach step by step}\label{subsection:reduction_efsynth}
\begin{figure*}[tb]
	\begin{subfigure}[b]{\textwidth}
	\centering
		\scalebox{.9}{
		\begin{tikzpicture}[shorten >=1pt,node distance=2cm,on grid,auto]
		\node[location,initial] (prepres_0) {$\loc_0''$};
		\node[location, above of=prepres_0] (pres_0) {$\loc_0'$};
		\node[location, below right of=prepres_0] (s_0) {$\loc_0$};
		\node[location] (s_1) [right of=s_0] {$\loc_1$};
		\node[location] (s_2) [right of=s_1] {$\loc_2$};
		\node[location] (s_3) [right of=s_2]{$\loc_3$};
		\node[location] (s_4) [right of=s_3]{$\loc_4$};
		\node[location,accepting] (s_5) [right of=s_4]{$\loc_5$};

		\path[->] 
		(prepres_0) edge node[below left,align=center] {$\styleclock{\clockabs} = \styleparam{t} = 0$ \\ \actionStart} (s_0)
		(prepres_0) edge[] node[align=center] {$\styleact{a},\styleact{b}$ \\ $\styleclock{x} := 0$} (pres_0)
		(pres_0) edge [loop left] node[left,align=center] {$\styleact{a},\styleact{b}$ \\ $\styleclock{x} := 0$} (pres_0)
		(pres_0) edge [bend left,above right] node[align=center] {$\styleclock{\clockabs} = \styleparam{t} \land \styleclock{x} > 0$ \\ \actionStart{} \\ $\styleclock{x} := 0$} (s_0)
		(s_0) edge [above] node {\begin{tabular}{c}
									$\styleclock{x} > \styleparam{\param_1}$\\
									$\styleact{a}$\\
									$\styleclock{x} := 0$
								\end{tabular}} (s_1)
		(s_1) edge [above] node {\begin{tabular}{c}
									$\styleclock{x} < \styleparam{\styleparam{\param_2}}$\\
									$\styleact{a}$\\
									$\styleclock{x} := 0$
								\end{tabular}} (s_2)
		(s_2) edge [above] node[align=center] {$\styleclock{x} < \styleparam{\styleparam{\param_2}}$\\$\styleact{a}$} (s_3)
		(s_3) edge [above] node[align=center] {$\styleclock{\clockabs} = \styleparam{t'}$\\$\styleact{\$}$\\$\styleclock{x} := 0$} (s_4)
		(s_4) edge [above] node[align=center] {$\styleclock{x} > 0$\\\actionEnd{}} (s_5)
		;
		\end{tikzpicture}}
	\caption{$\TransPattern$ applied to the PTA in \cref{figure:example:PTA}}
	\label{figure:approach:PTA}
	\end{subfigure}
	\begin{subfigure}[b]{0.99\textwidth}
	\centering
	\footnotesize
		\begin{tikzpicture}[shorten >=1pt,node distance=1.5cm,on grid,auto]
		\node[location,initial, accepting] (w0) {$\wloc_0$};
		\node[location,accepting] (w1) [right of=w0] {$\wloc_1$};
		\node[location,accepting] (w2) [right of=w1] {$\wloc_2$};
		\node[location,accepting] (w3) [right of=w2]{$\wloc_3$};
		\node[location,accepting] (w4) [right of=w3]{$\wloc_4$};
		\node[location,accepting] (w5) [right of=w4]{$\wloc_5$};
		\node[location,accepting] (w6) [right of=w5]{$\wloc_6$};
		\node[location,accepting] (w7) [right of=w6]{$\wloc_7$};
		\node[location,accepting] (w8) [right of=w7]{$\wloc_8$};
		\node[location,accepting] (w9) [right of=w8]{$\wloc_9$};

		\path[->]
			(w0) edge [above] node[align=center] {$\styleclock{\clockabs} = 0.5$\\$\styleact{a}$} (w1)
			(w1) edge [above] node[align=center] {$\styleclock{\clockabs} = 0.9$\\$\styleact{a}$} (w2)
			(w2) edge [above] node[align=center] {$\styleclock{\clockabs} = 1.3$\\$\styleact{b}$} (w3)
			(w3) edge [above] node[align=center] {$\styleclock{\clockabs} = 1.7$\\$\styleact{b}$} (w4)
			(w4) edge [above] node[align=center] {$\styleclock{\clockabs} = 2.8$\\$\styleact{a}$} (w5)
			(w5) edge [above] node[align=center] {$\styleclock{\clockabs} = 3.7$\\$\styleact{a}$} (w6)
			(w6) edge [above] node[align=center] {$\styleclock{\clockabs} = 4.9$\\$\styleact{a}$} (w7)
			(w7) edge [above] node[align=center] {$\styleclock{\clockabs} = 5.3$\\$\styleact{a}$} (w8)
			(w8) edge [above] node[align=center] {$\styleclock{\clockabs} = 6.0$\\$\styleact{a}$} (w9)
		;
		\end{tikzpicture}
	\caption{$\TransWord$ applied to the timed word in \cref{figure:example:word}}%
	\label{figure:approach:word}
	\end{subfigure}

	\caption{Our transformations exemplified on \cref{figure:example}}%
	\label{figure:approach}
\end{figure*}

\subsubsection{Making the pattern symbolic}\label{sss:TransPattern}
In this first step, we first add two parameters $t$ and~$t'$, which encode the (symbolic) start and end time where the pattern holds on the input timed word.
This way, we will obtain a result in the form of a finite union of polyhedra in $|\Param| + 2$ dimensions, where the 2 additional dimensions come from the addition of~$t$ and~$t'$.
We also add a clock~$\clockabs$ measuring the absolute time, \ie{} initially~0 and never reset (this clock is shared by the pattern PTA and the subsequent timed word TA).
Then, we modify the pattern PTA as follows:
\begin{enumerate}
	\item we add two fresh locations (say~$\locinit'$ and $\locinit''$) prior to the initial location~$\locinit$;
	\item we add a fresh clock (say~$\styleclock{\clock}$); in practice, as this clock is used only in the initial location, an existing clock of the pattern may be reused;
	\item we add an unguarded self-loop allowing any action of the timed word on $\locinit'$, and resetting~$\styleclock{\clock}$;
	\item we add an unguarded transition from~$\locinit''$ to $\locinit'$ allowing any action of the timed word on $\locinit'$, and resetting~$\styleclock{\clock}$;
	\item we add a transition from~$\locinit'$ to~$\locinit$ guarded by~$\styleclock{\clockabs} = \styleparam{t} \land \styleclock{\clock} > 0$, labeled with a fresh action \actionStart{} and resetting all clocks of the pattern (except~$\styleclock{\clockabs}$);
	\item we add a transition from~$\locinit''$ to~$\locinit$ guarded by~$\styleclock{\clockabs} = \styleparam{t} \land \styleclock{\clockabs} = 0$, labeled with \actionStart{};
	\item we add a guard $\styleclock{\clockabs} = \styleparam{t'}$ and reset $\styleclock{\clock}$ on the accepting transitions labeled with~$\styleact{\$}$;
	\item we add an extra location~$\loc_{F}$ after the former (unique) accepting location, with a transition guarded by $\styleclock{\clock} > 0$, labeled with \actionEnd{};
	\item the initial location of the modified PTA becomes~$\locinit''$;
	\item the (only) final location of the modified PTA becomes~$\loc_{F}$.
\end{enumerate}

Let us give the intuition behind our transformation.
First, the two guards $\styleclock{\clockabs} = \styleparam{t}$ and $\styleclock{\clockabs} = \styleparam{t'}$ allow recording symbolically the value of the starting and ending dates.
Second, the new locations $\locinit''$ and $\locinit'$ allow the pattern to ``start anytime''; that is, it can synchronize with the timed word TA for an arbitrary long time while staying in the initial location~$\locinit''$ (and therefore \emph{not} matching the pattern), and start (using the transition from~$\locinit'$ to~$\locinit$) anytime.
Third, due to the constraint $\styleclock{\clock} > 0$, a non-zero time must elapse between the last action before the pattern start and the actual pattern start.
The distinction between $\locinit''$ and $\locinit'$ is necessary to also allow starting the pattern at $\styleclock{\clockabs} = 0$ if no action occurred before.
Finally, the guard $\styleclock{\clock} > 0$ just before the accepting location ensures the next action of the system (if any) is taken after a non-zero delay, following our definitions of timed word and projection.

Let us formalize this procedure \TransPattern{} below.

\begin{definition}\label{definition:TransPattern}
 Given a PTA $\A$ over the augmented alphabet $\Actions \disjointUnion \{\$\}$, 
 $\TransPattern(\A) = (\Actions \disjointUnion \{\$, \actionStart{}, \actionEnd{}\}, \Loc \disjointUnion \{\locinit', \locinit'', \loc_{F}\}, \loc''_0, \{\loc_{F}\}, \Clock \disjointUnion \{\clockabs, \clock\}, \Param \disjointUnion \{\styleparam{t}, \styleparam{t'}\}, \Edges')$, where $\Edges'$ is the following:
	\begin{align*}
	\Edges' =& \{(\loc, \guard, \action, \resets, \loc') \mid (\loc, \guard, \action, \resets, \loc') \in \Edges, \action \in \Actions \}\\
	\disjointUnion& \{(\loc, \guard \land \clockabs = t', \$, \resets \disjointUnion \{x\}, \loc') \mid (\loc, \guard, \$, \resets, \loc') \in \Edges \}\\
	\disjointUnion& \{(\locinit', \top, a, \{x\}, \locinit') \mid a \in \Actions\}\\ 
	\disjointUnion& \{(\locinit'', \top, a, \{x\}, \locinit') \mid a \in \Actions\}\\
	\disjointUnion& \{(\locinit', \clockabs = t \land x > 0, \actionStart{}, \Clock, \locinit), (\locinit'', \clockabs = 0 \land \clockabs = 0, \actionStart{}, \Clock, \locinit)\} \\
	\disjointUnion& \{(\loc, x > 0, \actionEnd{}, \emptyset, \loc_F) \mid \loc \in \LocFinal\}\text{.}\\
	\end{align*}
\end{definition}

\begin{example}
Consider the pattern PTA~$\A$ in \cref{figure:example:PTA}.
The result of $\TransPattern(\A)$ is given in \cref{figure:approach:PTA}.
Note that we use the same clock $\clock$ for both the extra clock introduced by our construction and the original clock of the pattern automaton from \cref{figure:example:PTA}.
\end{example}

\subsubsection{Converting the timed word into a (P)TA}
\label{sss:word_conversion}
In this second step, we convert the timed word into a (non-parametric) timed automaton.
This is very straightforward, and simply consists in converting a timed word of the form $(\action_1, \tau_1), \dots , (\action_n, \tau_n)$ into a sequence of transitions labeled with~$\action_i$ and guarded with $\clockabs = \tau_i$ (recall that $\clockabs$ measures the absolute time and is shared by the timed word automaton and the pattern automaton).
All locations are made accepting.\footnote{%
	This was not the case in~\cite{AHW18}, but we added this requirement due to the modification of the definition of synchronized product (\cref{def:product}), so as to have a unified framework in \cref{section:PTMC,section:adhoc}.
}%

Let us formalize this procedure \TransWord{} below.

\begin{definition}\label{definition:TransWord}
	Given a timed word $\word = (\action_1, \tau_1), \dots , (\action_n, \tau_n)$, the \emph{transformation of this timed word into a PTA} is the PTA $\TransWord(\word) = (\Actions, \{\locinit,\loc_1,\dots,\loc_n\}, \locinit, \{\locinit,\loc_1,\dots,\loc_n\}, \{\clockabs\}, \emptyset, \Edges)$, where:
	\[\Edges = \bigcup_{i \in \{1,2,\dots,n\}} (\loc_{i-1}, \clockabs = \tau_i, \action_i, \emptyset, \loc_{i})\text{.}\]
\end{definition}

\begin{example}
Consider the timed word~$\word$ in \cref{figure:example:word}.
The result of $\TransWord(\word)$ is given in \cref{figure:approach:word}.
\end{example}

\subsubsection{Synchronized product}
The last part of the method consists in performing the synchronized product of $\TransPattern(\A)$ and $\TransWord(\word)$, and calling \EFsynth{} on the resulting PTA.

\medskip

We summarize our method $\PTPM(\A,\word)$ in \cref{algo:PTPM}.

\begin{algorithm}[tb]
	\Input{A pattern PTA $\A$ with accepting location~$\LocFinal$, a timed word~$\word$}
	\Output{Constraint $\K$ over the parameters}

	\BlankLine

	$\A' \assign \TransPattern(\A)$
	
	$\A_{\word} \assign \TransWord(\word)$
	
	\Return $\EFsynth(\A' \parallel \A_{\word}, \LocFinal)$
	
	\caption{$\PTPM(\A, \word)$}
	\label{algo:PTPM}
\end{algorithm}

\begin{example}
	Consider again the timed word~$\word$ and the PTA pattern~$\A$ in \cref{figure:example}.
	The result of $\PTPM(\A,\word)$ is as follows:

	\begin{align*}
		&
		1.7 < \styleparam{t} < 2.8 - \styleparam{\param_1}
		\land
		4.9 \leq \styleparam{t'} < 5.3
		\land
		\styleparam{\param_2} > 1.2
		\\
		\lor\ \ &
		2.8 < \styleparam{t} < 3.7 - \styleparam{\param_1}
		\land
		5.3 \leq \styleparam{t'} < 6
		\land
		\styleparam{\param_2} > 1.2
		\\
		\lor \ \ &
		3. 7 < \styleparam{t} < 4.9 - \styleparam{\param_1}
		\land
		\styleparam{t'} \geq 6
		\land
		\styleparam{\param_2} > 0.7
	\end{align*}
	Observe that, for the parameter valuation given in the introduction ($\param_1 = \param_2 = 1$), only the pattern corresponding to the last disjunct could be obtained, \ie{} the pattern that matches the last three $\styleact{a}$ of the timed word in \cref{figure:example:word}.
	In contrast, the first disjunct can match the first three $\styleact{a}$ coming after the two $\styleact{b}$s, while the second disjunct allows to match the three $\styleact{a}$s in the middle of the last five $\styleact{a}$s in \cref{figure:example:word}.\LongVersion{
	
	}
	We give various projections of this constraint onto two dimensions in \cref{figure:projections} (the difference between plain red and light red is not significant---light red constraints denote unbounded constraints towards at least one dimension).
\end{example}

\begin{figure*}[t]
	\begin{subfigure}[b]{.3\textwidth}
		\centering
		
		\includegraphics[width=.8\textwidth]{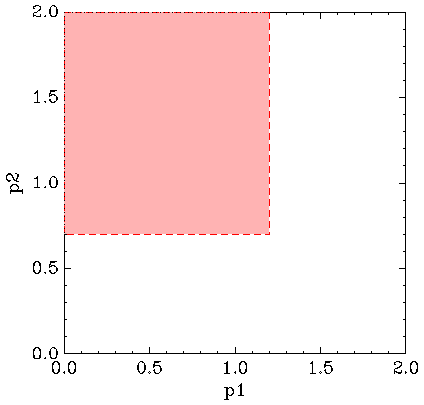}

		\caption{On $\param_1$ and $\param_2$}
	\end{subfigure}
	\hspace{.3cm}
	\begin{subfigure}[b]{.3\textwidth}
		\centering
		
		\includegraphics[width=.8\textwidth]{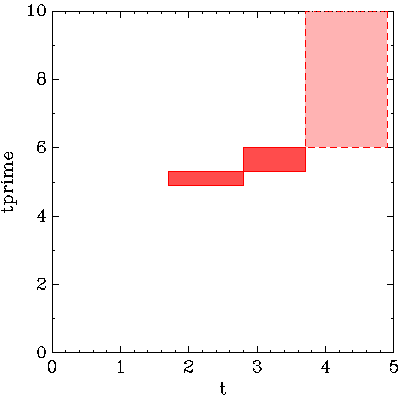}

		\caption{On $t$ and $t'$}
	\end{subfigure}
	\hspace{.3cm}
	\begin{subfigure}[b]{.3\textwidth}
		\centering
		
		\includegraphics[width=.8\textwidth]{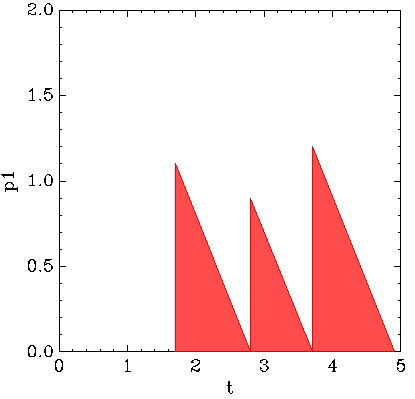}

		\caption{On $t$ and $\param_1$}
	\end{subfigure}
	\caption{Projections of the result of parametric timed pattern matching on \cref{figure:example}}
	\label{figure:projections}
\end{figure*}
\subsection{Termination}

\LongVersion{
	We state below the termination of our procedure.
}

\begin{theorem}[termination]\label{theorem:termination}
	Let $\A$ be a PTA encoding a parametric pattern, and $\word$ be a timed word.
	Then $\PTPM(\A,\word)$ terminates.
\end{theorem}
\begin{proof}
	First, observe that there may be non-determinism in the pattern PTA, \ie{} the timed word can potentially synchronize with two transitions labeled with the same action from a given location.
	Even if there is no syntactic nondeterminism, nondeterminism can appear due to the interleaving of the initial \actionStart{} action:
		in \cref{figure:approach}, the first \styleact{a} of the timed word can either synchronize with the self-loop on~$\locinit'$, or the \actionStart{} action can first occur, and then the first \styleact{a} of the timed word synchronizes with the \styleact{a} labeling the transition from~$\locinit$ to~$\loc_1$ of the pattern PTA.
	Second, the pattern PTA may well have loops (and this is the case in our experiments in \cref{ss:XP:blowup}), which yields an infinite parametric zone graph (for the pattern automaton not synchronized with the word automaton).
	However, let us show that only a finite part of the parametric zone graph is explored by \EFsynth{}:
		indeed, since $\TransWord(\word)$ is only a finite sequence, and thanks to the strong synchronization between the pattern PTA and the timed word PTA and due to \cref{assumption:transitions}, only a finite number of finite discrete paths in the synchronized product will be explored.
	The only interleaving is due to the initial \actionStart{} action (which appears twice in the pattern PTA but can only be taken once at most due to the mutually exclusive guards $\clock = 0$ and $\clock > 0$), and due to the accepting $\styleact{\$}$ action, that only appears on the last transition to the last-but-one accepting location.
	As the pattern PTA is finitely branching, this gives a finite number of finite paths.
	The length of each path is clearly bounded by~$|\word| + 3$.
	Let us now consider the maximal number of such paths: given a location in $\TransWord(\word)$, the choice of the action (say~$\styleact{\action}$) is entirely deterministic.
	However, the pattern PTA may be non-deterministic, and can synchronize with $\BranchingCard$ outgoing transitions labeled with~$\styleact{\action}$, which gives $\BranchingCard^{|\word|}$ combinations.
	In addition, the \actionStart{} action can be inserted exactly once, at any position in the timed word (from before the first action to after the last action of the word---in the case of an empty pattern): this gives therefore $(|\word|+1)\times\BranchingCard^{|\word|}$ different runs.
	(The $\styleact{\$}$ is necessarily the last-but-one action, and does not impact the number of runs, as the (potential) outgoing transitions from the accepting location are not explored.)
	Altogether, a total number of at most $(|\word| + 3) \times (|\word|+1) \times\BranchingCard^{|\word|}$ symbolic states is explored by \EFsynth{} in the worst case.
\end{proof}

\cref{theorem:termination} may not come as a surprise, as the input timed word is finite.
But it is worth noting that it comes in contrast with the fact that the wide majority of decision problems are undecidable for parametric timed automata, including the emptiness of the valuation set for which a given location is reachable both, for integer- and rational-valued parameters~\cite{AHV93,Miller00} (see \cite{Andre19STTT} for a survey).

\subsection{Correctness}\label{ss:correctness}

We show the correctness of our procedure using the following lemma on $\TransPattern$ with the formal definition of $\TransPattern(\A)$.

\begin{lemma}\label{lemma:transpattern}
 For any timed word $\word$, $t < t' \in \Rp$, PTA $\A$, and for any parameter valuation $\pval$ over $\Param$, 
 we have
 $\Lg(\valuate{\TransPattern(\A)}{(\pval, t, t')}) = \TW(\Actions) \circ (\actionStart{}, t) \circ \{\word + t \mid \word \in \Lg(\valuate{\A}{\pval})\} \circ \{(\actionEnd{}, t'') \mid t'' > t'\}$, where $(\pval, t, t')$ is the parameter valuation such that $(\pval, t, t')(\styleparam{\param}) = \pval(\param)$ for any $\styleparam{\param} \in \Param$, $(\pval, t, t')(\styleparam{t}) = t$, and $(\pval, t, t')(\styleparam{t'}) = t'$.
\end{lemma}

\begin{proof}
 Let $\A = (\Actions \disjointUnion \{\$\}, \Loc, \loc_0, \LocFinal, \Clock, \Param, \Edges)$ and
 $\TransPattern(\A) = (\Actions \disjointUnion \{\$, \actionStart{}, \actionEnd{}\}, \Loc \disjointUnion \{\locinit', \locinit'', \loc_{F}\}, \loc''_0, \{\loc_{F}\}, \Clock \disjointUnion \{\clockabs, \clock\}, \Param \disjointUnion \{\styleparam{t}, \styleparam{t'}\}, \Edges')$, where $\Edges'$ is defined according to~\cref{definition:TransPattern}.

Now, let $\A'$ denote the PTA identical to $\TransPattern(\A)$, except that the final location is~$\loc_0$ instead of~$\loc_{F}$. %
That is, $\A' = (\Actions \disjointUnion \{\$, \actionStart{}, \actionEnd{}\}, \Loc \disjointUnion \{\locinit', \locinit'', \loc_{F}\}, \loc''_0, \{\loc_0\}, \Clock \disjointUnion \{\clockabs, \clock\}, \Param \disjointUnion \{\styleparam{t}, \styleparam{t'}\}, \Edges')$.
For $\A'$, we have 
 $\Lg(\valuate{\A'}{(\pval, t, t')}) = \TW(\Actions) \circ (\actionStart{}, t)$.

Now, let $\A''$ denote the PTA identical to $\TransPattern(\A)$, except that the final locations are~$\LocFinal$ (\ie{} the final locations of the original automaton~$\A$).
That is, $\A'' = (\Actions \disjointUnion \{\$, \actionStart{}, \actionEnd{}\}, \Loc \disjointUnion \{\locinit', \locinit'', \loc_{F}\}, \loc''_0, \LocFinal, \Clock \disjointUnion \{\clockabs, \clock\}, \Param \disjointUnion \{\styleparam{t}, \styleparam{t'}\}, \Edges')$.
Since all the clock variables $\Clock$ are reset at all the transitions to $\locinit$ in $\Edges$, 
 for~$\A''$, we have the following.
 \begin{align*}
  \Lg(\valuate{\A''}{(\pval, t, t')}) 
  &= \Lg(\valuate{\A'}{(\pval, t, t')}) \cdot \Lg(\valuate{\A}{\pval}) \\
  &= (\TW(\Actions) \circ (\actionStart{}, t)) \cdot \Lg(\valuate{\A}{\pval})\\
  &= \TW(\Actions) \circ (\actionStart{}, t) \circ \{\word + t \mid \word \in \Lg(\valuate{\A}{\pval})\}
 \end{align*}
 Moreover, since we have $\clockabs = t'$ at all the transitions to $\LocFinal$, we have 
 \begin{align*}
  &\Lg(\valuate{\TransPattern(\A)}{(\pval, t, t')}) \\
  =& \Lg(\valuate{\A''}{(\pval, t, t')}) \cdot \{(\actionEnd{}, t'') \mid t'' > 0\} \\
  =& (\TW(\Actions) \circ (\actionStart{}, t) \circ \{\word + t \mid \word \in \Lg(\valuate{\A}{\pval})\}) \cdot \{(\actionEnd{}, t'') \mid t'' > 0\} \\
  =& \TW(\Actions) \circ (\actionStart{}, t) \circ \{\word + t \mid \word \in \Lg(\valuate{\A}{\pval})\} \circ \{(\actionEnd{}, t'') \mid t'' > t'\}
 \end{align*}
\end{proof}

We state the correctness of the workflow in \cref{subsection:reduction_efsynth} as follows.

\begin{theorem}
 [correctness]
 For any PTA $\A$ and timed word $\word$, we have the following:
 \[
 \mathcal{M}(\word, \A) = \{(t, t', \pval) \mid (\pval, t, t') \in \EFsynth(\TransPattern(\A) \parallel \TransWord(\word), \LocFinal)\}
 \]
 \noindent where $\LocFinal$ denotes the set of final locations of the product automaton $\TransPattern(\A) \parallel \TransWord(\word)$.%
\end{theorem}

\begin{proof}
	Let $\A$ be a PTA, and $\pval$ be a parameter valuation.
	First note that, by definition of $\TransWord(\word)$, 
	for any timed word $\word$, $t < t' \in \Rp$, 
	we have 
	$\Lg(\TransWord(\word)) = \bigl\{\word(0, m) \mid m \in \{0,1,\dots,|\word|\}\bigr\}$.

	Now, thanks to \cref{lemma:transpattern}, we have the following, where $\project{u}{\Actions}$ is the projection of the timed word $u$ to an alphabet $\Actions$, \ie{} the timed word identical to $u$, except that all the actions not in $\Actions$ are removed. We also use the same notation for a set of timed words.
 \begin{align*}
  &(\pval, t, t') \in \EFsynth(\TransPattern(\A) \parallel \TransWord(\word), \LocFinal)\\
  \iff& \exists u \in \TW(\Actions \disjointUnion \{\actionStart{}, \actionEnd{}, \$\}).\, u \in \Lg\bigl(\valuate{\TransPattern(\A) \parallel \TransWord(\word)}{(\pval,t,t')}\bigr) \\
  \iff& \exists u \in \TW(\Actions \disjointUnion \{\actionStart{}, \actionEnd{}, \$\}).\, u \in \Lg\bigl(\valuate{\TransPattern(\A)}{(\pval,t,t')}\bigr) \land  \project{u}{\Actions} \in \Lg(\TransWord(\word)) \\
  \iff& \exists u \in \Lg\bigl(\valuate{\TransPattern(\A)}{(\pval,t,t')}\bigr).\, \project{u}{\Actions} = \word(0,n) \text{, where $\tau_n \leq t' < \tau_{n+1}$}\\
  \iff& \exists u \in \TW(\Actions) \circ (\actionStart{}, t) \circ \{\word' + t \mid \word' \in \Lg(\valuate{\A}{\pval})\} \circ \{(\actionEnd{}, t'') \mid t'' > t'\}.\, \project{u}{\Actions} = \word(0,n),\\ &\qquad\text{where $\tau_n \leq t' < \tau_{n+1}$}\\
  \iff& \word(0,n) \in \project{\Bigl(\TW(\Actions) \circ (\actionStart{}, t) \circ \{\word' + t \mid \word' \in \Lg(\valuate{\A}{\pval})\} \circ \{(\actionEnd{}, t'') \mid t'' > t'\}\Bigr)}{\Actions},\\ &\qquad\text{where $\tau_n \leq t' < \tau_{n+1}$}\\
  \iff& \word|_{(t,t')} \in \Lg(\valuate{\A}{\pval}) \\
  \iff& (t, t', \pval) \in \mathcal{M}(\word, \A)
 \end{align*}
	This concludes the proof.
\end{proof}

\subsection{Pattern matching with optimization}\label{ss:optimization}

We also address the following optimization problem:
given a timed word and a pattern containing parameters, what is the minimum or maximum value of a given parameter such that the pattern is matched by the timed word?
Formally, this gives the following problem, where ``optimal'' denotes minimal or maximal.

\smallskip

\defProblem
	{Optimal parametric timed pattern matching}
	{a timed word~$\str$ over an alphabet $\Actions$, a PTA~$\A$ over the augmented alphabet $\Actions \disjointUnion \{\$\}$, a parameter~$\param$}
	{compute the optimal value $\optvalue$ such that there exist $\pval, t, t'$ such that $\pval(\param) = \optvalue$ and the segment $\word|_{(t,t')}$ is accepted by $\valuate{\A}{\pval}$}

\medskip

That is, we are only interested in the \emph{optimal value~$\optvalue$ of the given parameter~$\param$}, and not in the full list of matches as in~\PTPM{}.

While this problem can be solved using our solution from \cref{ss:general} (by computing the multidimensional constraint, and then eliminating all parameters but the target parameter, using variable elimination techniques), we use here a dedicated approach, with the hope it be more efficient.
Instead of managing all symbolic matches (\ie{} a finite union of polyhedra), we simply manage the current optimum; in addition, we cut branches that cannot improve the optimum, with the hope to reduce the number of states explored.
For example, assume parameter $\param$ is to be minimized; if the current minimum is $\param > 2$, and if a newly computed symbolic state is such that $\param \geq 3$, then the branch starting from this new symbolic state will not improve the minimum, and can safely be discarded.
This branch cutting is directly managed by the underlying algorithm~$\EFsynthOpt$.

\begin{algorithm}[tb]
	\Input{A pattern PTA $\A$ with accepting location~$\LocFinal$, a timed word~$\word$, a parameter~$\param$ to be optimized}
	\Output{Constraint $\K$ over the parameters}

	\BlankLine

	$\A' \assign \TransPattern(\A)$
	
	$\A_{\word} \assign \TransWord(\word)$
	
	\Return $\EFsynthOpt(\A' \parallel \A_{\word}, \LocFinal, \param)$
	
	\caption{$\PTPMopt(\A, \word, \param)$}
	\label{algo:PTPMopt}
\end{algorithm}

We give this procedure \PTPMopt{} in \cref{algo:PTPMopt}.
It is basically a refinement of \cref{algo:PTPM}, which takes as additional argument the parameter~$\param$ to be optimized, and where the call to $\EFsynth$ is replaced with a call to $\EFsynthOpt$.

Similarly to $\EFsynthOpt$ recalled in \cref{newtext:EFsynthopt},
$\PTPMopt$ is a ``generic'' algorithm, that can be ``instantiated'' to either
\begin{ienumeration}%
	\item ``pattern matching with minimization'' (say $\PTPMmin$), by replacing in \cref{algo:PTPMopt} $\EFsynthOpt$ with $\EFsynthMin$; or
	\item ``pattern matching with maximization'' (say $\PTPMmax$), by replacing in \cref{algo:PTPMopt} $\EFsynthOpt$ with $\EFsynthMax$.
\end{ienumeration}%

\begin{example}\label{example:PTPMopt}
	Consider again the timed word~$\word$ and the PTA pattern~$\A$ in \cref{figure:example}.
	Let us first minimize $\styleparam{\param_2}$ so that the pattern matches the timed word for at least one position:
	by calling $\PTPMmin(\A, \word, \styleparam{\param_2})$, we obtain
	$\styleparam{\param_2} > 0.7$.
	That is to say, the smallest valuation of $\param_2$ allowing the pattern to match the timed word for at least one position is infinitesimally larger than~$0.7$, but $0.7$ itself is not an admissible valuation.
	Let us then maximize $\styleparam{\param_1}$ so that the pattern matches the timed word for at least one position:
	by calling $\PTPMmax(\A, \word, \styleparam{\param_1})$, we obtain
	$\styleparam{\param_1} < 1.2$.
\end{example}
\subsection{Experiments}\label{ss:PTMC-experiments}

We evaluated our approach against two standard benchmarks from~\cite{HAF14}, already used in~\cite{WHS17}, as well as a third \emph{ad-hoc} benchmark specifically designed to test the limits of parametric timed pattern matching.
We fixed no bounds for our parameters.

We used \imitator{}~\cite{Andre21} to perform the parameter synthesis (algorithm \EFsynth{}).
\imitator{} relies on the Parma Polyhedra Library (PPL)~\cite{BHZ08} to compute symbolic states.
It was shown in~\cite{BFMU17} that polyhedra may be dozens of times slower than more efficient data structures such as DBMs (difference bound matrices); however, for parametric analyses, DBMs are not suitable, and parameterized extensions (\eg{} in~\cite{HRSV02}) still need polyhedra in their representation.

	We used a slightly modified version of \imitator{} for technical reasons:
\imitator{} handles non-convex constraints (finite unions of polyhedra); while most case studies solved by \imitator{} in the past handle simple constraints (made of a few disjuncts), the experiments in this manuscript may handle up to \emph{dozens of thousands} of such polyhedra.
We therefore had to disable an inclusion test of a newly computed state into the already computed constraint: this test usually has a very interesting gain but, on our complex polyhedra, it had disastrous impact on the performance, due to the inclusion check of a (simple) new convex polyhedron into a disjunction of dozens of thousands of convex polyhedra.
To disable this check, we added a new option\footnote{%
	Option ``\texttt{-no-inclusion-test-in-EF}'' in \imitator{} 2.10.4.
} (not set by default) to the master branch of \imitator{}, and used it in all our experiments.
	
We wrote a simple Python script to implement the \TransWord{} procedure; the patterns (\cref{figure:patterns}) were manually transformed following the \TransPattern{} procedure, and converted into the input language of \imitator{}.

We ran experiments using \imitator{} 2.10.4 ``Butter Jellyfish'' %
on a Dell Precision 3620 i7-7700 3.60\,GHz with 64\,GiB memory running Linux Mint 19 beta 64\,bits.\footnote{Sources, binaries, models, logs can be found at \href{https://www.imitator.fr/static/ICECCS18}{\nolinkurl{imitator.fr/static/ICECCS18}}.}
\begin{figure*}[t]
	\begin{subfigure}[b]{\textwidth}
	\centering
	\footnotesize

  \begin{tikzpicture}[shorten >=1pt,node distance=2.5cm,on grid,auto] 
    \node[location,initial] (s_0)  {}; 
    \node[location,node distance=2.5cm] (s_1) [right=of s_0] {$g_1$}; 
    \node[location,node distance=2.5cm] (s_2) [right=of s_1] {$g_2$};
    \node[location,accepting,node distance=1.5cm] (s_3) [right=of s_2] {\cmark};
    \path[->] 
    (s_0) edge  [above] node[align=center] {$\styleact{g_1}$\\$\styleclock{x} := 0$} (s_1)
    (s_1) edge  [above] node[align=center] {$\styleclock{x} < \styleparam{\param} $\\$\styleact{g_2}$} (s_2) %
    (s_2) edge  [above] node {$\styleact{\$}$} (s_3);
  \end{tikzpicture}
  \caption{\textsc{Gear}}
	\label{figure:patterns:gear}
	\end{subfigure}
	\begin{subfigure}[b]{\textwidth}
	\centering
	\scriptsize

	\begin{tikzpicture}[auto,node distance=2cm]
		\node[location, initial] (s_000) {$?$};

		\node[location] (s_100)[above right=of s_000,yshift=-10] {$g_1$};
		\node[location] (s_001)[below right=of s_000,yshift=30] {$?$};

		\node[location] (s_200)[right=of s_100] {$g_2$};
		\node[location] (s_101)[right=of s_001] {$g_1'$};

		\node[location] (s_300)[right=of s_200] {$g_3$};
		\node[location] (s_201)[right=of s_101] {$g_2'$};

		\node[location] (s_400)[right=of s_300] {$g_4$};
		\node[location] (s_301)[right=of s_201] {$g_3'$};

		\node[location] (s_401)[right=of s_301] {$g_4'$};

		\node[location,accepting] (f)[right=of s_400] {\cmark};

		\path[->]
		(s_000) edge  [above left] node {$\styleact{g_1}, \CTrue$} (s_100)
		(s_100) edge  [above] node {$\styleact{g_2}, \CTrue$} (s_200)
		(s_200) edge  [above] node {$\styleact{g_3}, \CTrue$} (s_300)
		(s_300) edge  [above] node {\begin{tabular}{c}
										$\styleact{g_4}, \styleclock{x} \leq \styleparam{\param}$\\ %
										$\styleclock{x} := 0$
									\end{tabular}} (s_400)

		(s_100) edge  [below left] node {$\styleact{rpmHigh}, \CTrue$} (s_101)
		(s_200) edge  [below left] node {$\styleact{rpmHigh}, \CTrue$} (s_201)
		(s_300) edge  [below left] node {$\styleact{rpmHigh}, \CTrue$} (s_301)
		(s_400) edge  [below left] node {$\styleact{rpmHigh}, \CTrue$} (s_401)

		(s_001) edge  [above] node {$\styleact{g_1}, \CTrue$} (s_101)
		(s_101) edge  [above] node {$\styleact{g_2}, \CTrue$} (s_201)
		(s_201) edge  [above] node {$\styleact{g_3}, \CTrue$} (s_301)
		(s_301) edge  [below] node {\begin{tabular}{c}
										$\styleact{g_4}, \styleclock{x} \leq \styleparam{\param}$\\ %
										$\styleclock{x} := 0$
									\end{tabular}} (s_401)

		(s_000) edge  [below left] node {$\styleact{rpmHigh}, \CTrue$} (s_001)

		(s_401) [bend right=0] edge [right] node {$\styleact{\$}, \styleclock{x} > 1$} (f);
	\end{tikzpicture}

	\caption{\textsc{Accel}}
	\label{figure:patterns:accel}
	\end{subfigure}
	\begin{subfigure}[b]{\textwidth}
	\centering
	\footnotesize

	\begin{tikzpicture}[shorten >=1pt,node distance=2.5cm,on grid,auto] 
		\node[location,initial] (s_0)  {$\loc_1$}; 
		\node[location] (s_1) [right=of s_0] {$\loc_2$}; 
		\node[location] (s_2) [right=of s_1] {$\loc_3$};
		\node[location,accepting] (s_3) [right=of s_2] {$\loc_4$};
		\path[->] 
			(s_0) edge [above] node[align=center] {$\styleact{a}$\\$\styleclock{y} := 0$} (s_1)
			(s_1) edge[bend right] node[above] {$\styleclock{x} < \styleparam{\param_1}$} node[below] {$\styleact{b}$} (s_2)
			(s_2) edge node[above,align=center] {$\styleclock{x} = \styleparam{\param_1}$\\$\styleact{\$}$} (s_3)
			(s_2) edge[bend right] node[above,align=center] {$\styleparam{\param_3} \leq \styleclock{y} < \styleparam{\styleparam{\param_2}}$\\$\styleact{a}$\\$\styleclock{y} := 0$} (s_1)
		;
	\end{tikzpicture}
  \caption{\textsc{Blowup}}
	\label{figure:patterns:blowup}
	\end{subfigure}

	\caption{Experiments: patterns}
	\label{figure:patterns}
\end{figure*}

\subsubsection{\textsc{Gear}}\label{ss:XP:gear}

Benchmark \textsc{Gear} is \LongVersion{inspired by the scenario of }monitoring the gear change of an automatic transmission system.
We conducted simulation of the model of an automatic transmission system~\cite{HAF14}.
We used \breach{}~\cite{Donze10} to generate an input sequence of gear change.
A gear is chosen from $\{g_1,g_2,g_3,g_4\}$.
The generated gear change is recorded in a timed word.
\LongVersion{%
	The set $W$ consists of 10 timed words;
	the length of each word is 1,467 to 14,657.
}

The pattern PTA $\A$, shown in \cref{figure:patterns:gear}, detects the violation of the following condition:
If the gear is changed to~1, it should not be changed to 2 within $\param$ seconds.
This condition is related to the requirement $\phi^{\mathit{AT}}_5$ proposed in~\cite{HAF14} (the nominal value for~$\param$ in~\cite{HAF14} is~2).

We tabulate our experiments in \cref{table:gear}.
We give from left to right the length of the timed word in terms of actions and time, then the data for \PTPM{} (the number of symbolic states explored, the number of (symbolic) matches found, the parsing time and the computation time excluding parsing) and for \PTPMopt{} (number of symbolic states explored and computation time) using \imitator{}.
The parsing time for \PTPMopt{} is almost identical to~\PTPM{} and is therefore omitted.

\begin{table*}[tb]
	\centering
	\scriptsize
	
	\begin{tabular}{| r | r | r | r | r | r | r | r |}
		\hline
		\startMultiCellHeader{2}{Model} & \multiCellHeader{4}{\PTPM{}} & \multiCellHeader{2}{\PTPMopt}\\
		\cellHeader{Length} & \cellHeader{Time frame} & \cellHeader{States} & \cellHeader{Matches} & \cellHeader{Parsing (s)} & \cellHeader{Comp.\ (s)} & \cellHeader{States} & \cellHeader{Comp.\ (s)}
		\\
		\hline
		1,467 & 1,000 & 4,453 & 379 & 0.02 & 1.60 & 3,322 & 0.94
		\\
		\hline
		2,837 & 2,000 & 8,633 & 739 & 0.33 & 2.14 & 6,422 & 1.70
		\\
		\hline
		4,595 & 3,000 & 14,181 & 1,247 & 0.77 & 3.63 & 10,448 & 2.85
		\\
		\hline
		5,839 & 4,000 & 17,865 & 1,546 & 1.23 & 4.68 & 13,233 & 3.74
		\\
		\hline
		7,301 & 5,000 & 22,501 & 1,974 & 1.94 & 5.88 & 16,585 & 4.79
		\\
		\hline
		8,995 & 6,000 & 27,609 & 2,404 & 2.96 & 7.28 & 20,413 & 5.76
		\\
		\hline
		10,316 & 7,000 & 31,753 & 2,780 & 4.00 & 8.38 & 23,419 & 6.86
		\\
		\hline
		11,831 & 8,000 & 36,301 & 3,159 & 5.39 & 9.75 & 26,832 & 7.87
		\\
		\hline
		13,183 & 9,000 & 40,025 & 3,414 & 6.86 & 10.89 & 29,791 & 8.61
		\\
		\hline
		14,657 & 10,000 & 44,581 & 3,816 & 8.70 & 12.15 & 33,141 & 9.89
		\\
		\hline
	\end{tabular}

	\caption{Experiments: \textsc{Gear}}
	\label{table:gear}
\end{table*}

The corresponding chart is given in \cref{figure:experiments:charts:gear} (\PTPM{} is given in plain black, and \PTPMopt{} in red dashed).
\PTPMopt{} brings a gain in terms of memory (symbolic states) of about 25\,\%, while the gain in time is about~20\,\%.

\begin{figure*}[t]
	\scriptsize

	\begin{subfigure}[b]{.3\textwidth}

	\begin{tikzpicture}[scale=0.3, xscale=.8]
		\draw[->] (0.0, 0.0) --++ (18.0, 0.0) node[right]{$|\word|$};
		\draw[->] (0.0, 0.0) --++ (0.0, 13.0) node[right]{$t$ $(s)$};

		\foreach \x in {0, 2, ..., 16} %
			\draw (\x, 0) -- (\x, -.2)node [below] {\tiny{$\x$}};
		\foreach \x in {0, 1, ..., 12} %
			\draw (0, \x) -- (-.2, \x) node [left] {\tiny{$\x$}};

		\draw[-]
			(1.467, 1.60)
			--
			(2.837, 2.14)
			--
			(4.595, 3.63)
			--
			(5.839, 4.68)
			--
			(7.301, 5.88)
			--
			(8.995, 7.22)
			--
			(10.316, 8.38)
			--
			(11.831, 9.75)
			--
			(13.183, 10.89)
			--
			(14.657, 12.15)
		;

		\draw[PTPMOPT]
			(1.467, 0.94)
			--
			(2.837, 1.70)
			--
			(4.595, 2.85)
			--
			(5.839, 3.74)
			--
			(7.301, 4.79)
			--
			(8.995, 5.76)
			--
			(10.316, 6.86)
			--
			(11.831, 7.87)
			--
			(13.183, 8.61)
			--
			(14.657, 9.89)
		;
	\end{tikzpicture}
	
	\caption{\textsc{Gear}}
	\label{figure:experiments:charts:gear}
	\end{subfigure}
	\hspace{1cm}
	\begin{subfigure}[b]{.3\textwidth}

	\begin{tikzpicture}[scale=0.2, xscale=.8]
		\draw[->] (0.0, 0.0) --++ (30.0, 0.0) node[right]{$|\word|$};
		\draw[->] (0.0, 0.0) --++ (0.0, 21.0) node[right]{$t$ $(s)$};

		\foreach \x in {0, 5, ..., 25} %
			\draw (\x, 0) -- (\x, -.2)node [below] {\tiny{$\x$}};
		\foreach \x in {0, 2, ..., 20} %
			\draw (0, \x) -- (-.2, \x) node [left] {\tiny{$\x$}};

		\draw[-]
			(2.559, 1.60)
			--
			(4.894, 3.04)
			--
			(7.799, 4.98)
			--
			(10.045, 6.51)
			--
			(12.531, 8.19)
			--
			(15.375, 10.14)
			--
			(17.688, 11.61)
			--
			(20.299, 13.52)
			--
			(22.691, 15.33)
			--
			(25.137, 16.90)
		;

		\draw[PTPMOPT]
			(2.559, 1.85)
			--
			(4.894, 3.57)
			--
			(7.799, 6.06)
			--
			(10.045, 7.55)
			--
			(12.531, 9.91)
			--
			(15.375, 12.39)
			--
			(17.688, 14.06)
			--
			(20.299, 16.23)
			--
			(22.691, 18.21)
			--
			(25.137, 20.61)

		;
	\end{tikzpicture}
	
	\caption{\textsc{Accel}}
	\label{figure:experiments:charts:accel}
	\end{subfigure}
	\hspace{1cm}
	\begin{subfigure}[b]{.3\textwidth}

	\begin{tikzpicture}[scale=0.05, xscale=5]
		\draw[->] (0.0, 0.0) --++ (11.0, 0.0) node[right]{$|\word|$};
		\draw[->] (0.0, 0.0) --++ (0.0, 100.0) node[right]{$t$ $(s)$};

		\foreach \x in {0, 2, ..., 8} %
			\draw (\x, 0) -- (\x, -1)node [below] {\tiny{$0.\x$}};
		\draw (10, 0) -- (10, -1)node [below] {\tiny{$1$}};
		\foreach \x in {0, 10, ..., 90} %
			\draw (0, \x) -- (-.2, \x) node [left] {\tiny{$\x{}0$}};

		\draw[-]
			(2, 1.53)
			--
			(4, 8.22)
			--
			(6, 23.68)
			--
			(8, 51.46)
			--
			(10, 94.07)
		;

		\draw[PTPMOPT]
			(2, 0.024)
			--
			(4, 0.049)
			--
			(6, 0.071)
			--
			(8, 0.105)
			--
			(10, 0.124)
		;
	\end{tikzpicture}
	
	\caption{\textsc{Blowup}}
	\label{figure:experiments:charts:blowup}
	\end{subfigure}

	\caption{Experiments: charts ($x$-scale $\times 1,000$)}
	\label{figure:experiments:charts}
\end{figure*}

\subsubsection{\textsc{Accel}}\label{ss:XP:accel}
The timed words $W$ of benchmark $\textsc{Accel}$ is also constructed from the
Simulink model of the automatic transmission
system~\cite{HAF14}.  For this benchmark, the
(discretized) value of three state variables are recorded in $W$:
engine RPM (discretized to ``high'' and ``low'' with a certain
threshold), velocity (discretized to ``high'' and ``low'' with a
certain threshold), and 4 gear positions.
We used \breach{}~\cite{Donze10} to generate an input sequence.
\LongVersion{%
Our set $W$ consists of 10 timed words;
the length of each word is 2,559 to 25,137.
}

The pattern PTA $\A$ of this benchmark is shown in \cref{figure:patterns:accel}.
This pattern matches a part of a
timed word that violates the following condition: If a gear changes
from 1 to 2, 3, and~4 in this order in $\param$~seconds and engine RPM
becomes large during this gear change, then the velocity of the car
must be sufficiently large in one second.
This condition models the requirement $\phi^{\mathit{AT}}_8$ proposed in~\cite{HAF14} (the nominal value for~$\param$ in~\cite{HAF14} is~10).

\begin{table*}[tb]
	\centering
	\scriptsize
	
	\begin{tabular}{| r | r | r | r | r | r | r | r |}
		\hline
		\startMultiCellHeader{2}{Model} & \multiCellHeader{4}{\PTPM{}} & \multiCellHeader{2}{\PTPMopt}\\
		\cellHeader{Length} & \cellHeader{Time frame} & \cellHeader{States} & \cellHeader{Matches} & \cellHeader{Parsing (s)} & \cellHeader{Comp.\ (s)} & \cellHeader{States} & \cellHeader{Comp.\ (s)}
		\\
		\hline
		2,559 & 1,000 & 6,504 & 2 & 0.27 & 1.60 & 6,502 & 1.85
		\\
		\hline
		4,894 & 2,000 & 12,429 & 2 & 0.86 & 3.04 & 12,426 & 3.57
		\\
		\hline
		7,799 & 3,000 & 19,922 & 7 & 2.21 & 4.98 & 19,908 & 6.06
		\\
		\hline
		10,045 & 4,000 & 25,520 & 3 & 3.74 & 6.51 & 25,514 & 7.55
		\\
		\hline
		12,531 & 5,000 & 31,951 & 9 & 6.01 & 8.19 & 31,926 & 9.91
		\\
		\hline
		15,375 & 6,000 & 39,152 & 7 & 9.68 & 10.14 & 39,129 & 12.39
		\\
		\hline
		17,688 & 7,000 & 45,065 & 9 & 13.40 & 11.61 & 45,039 & 14.06
		\\
		\hline
		20,299 & 8,000 & 51,660 & 10 & 18.45 & 13.52 & 51,629 & 16.23
		\\
		\hline
		22,691 & 9,000 & 57,534 & 11 & 24.33 & 15.33 & 57,506 & 18.21
		\\
		\hline
		25,137 & 10,000 & 63,773 & 13 & 31.35 & 16.90 & 63,739 & 20.61 %
		\\
		\hline
	\end{tabular}

	\caption{Experiments: \textsc{Accel}}
	\label{table:accel}
\end{table*}

Experiments are tabulated in \cref{table:accel}.
The corresponding chart is given in \cref{figure:experiments:charts:accel}.
This time, \PTPMopt{} brings almost no gain in terms of states, and a loss of speed of about 15 to~20\,\%, which may come from the additional polyhedra inclusion checks to test whether a branch is less good than the current optimum.

\subsubsection{\textsc{Blowup}}\label{ss:XP:blowup}

As a third experiment, we considered an original (toy) benchmark that acts as a worst case situation for parametric timed pattern matching.
Consider the PTA pattern in \cref{figure:patterns:blowup}, and assume a timed word consisting in an alternating sequence of ``$\styleact{a}$'' and ``$\styleact{b}$''.
Observe that the time from the pattern beginning (that resets~$\styleclock{x}$) to the end is exactly $\styleparam{\param_1}$ time units.
Also observe that the duration of the loop through $\loc_2$ and~$\loc_3$ has a duration in $[\styleparam{\param_3} , \styleparam{\param_2})$; therefore, for values sufficiently small of $\styleparam{\param_2},\styleparam{\param_3}$, one can always match a larger number of loops.
That is, for a timed word of length $2 n$ alternating between ``$\styleact{a}$'' and ``$\styleact{b}$'', there will be $n$ possible matches from position~0 (with $n$ different parameter constraints), $n-1$ from position 1, and so on, giving a total number of $\frac{n(n+1)}{2}$ matches with different constraints in 5~dimensions.

Note that this worst case situation is not specific to our approach, but would appear independently of the approach chosen for parametric timed pattern matching.

We generated random timed words of various sizes, all alternating exactly between ``$\styleact{a}$'' and ``$\styleact{b}$''.
\LongVersion{Our set $W$ consists of 5 timed words of length from 200 to~1,000.}

\begin{table*}[tb]
	\centering
	\scriptsize
	
	\begin{tabular}{| r | r | r | r | r | r | r | r |}
		\hline
		\startMultiCellHeader{2}{Model} & \multiCellHeader{4}{\PTPM{}} & \multiCellHeader{2}{\PTPMopt}\\
		\cellHeader{Length} & \cellHeader{Time frame} & \cellHeader{States} & \cellHeader{Matches} & \cellHeader{Parsing (s)} & \cellHeader{Comp.\ (s)} & \cellHeader{States} & \cellHeader{Comp.\ (s)}
		\\
		\hline
		200 & 101 & 20,602 & 5,050 & 0.01 & 15.31 & 515 & 0.24
		\\
		\hline
		400 & 202 & 81,202 & 20,100 & 0.02 & 82.19 & 1,015 & 0.49
		\\
		\hline
		600 & 301 & 181,802 & 45,150 & 0.03 & 236.80 & 1,515 & 0.71
		\\
		\hline
		800 & 405 & 322,402 & 80,200 & 0.05 & 514.57 & 2,015 & 1.05
		\\
		\hline
		1,000 & 503 & 503,002 & 125,250 & 0.06 & 940.74 & 2,515 & 1.24 %
		\\
		\hline
	\end{tabular}

	\caption{Experiments: \textsc{Blowup}}
	\label{table:blowup}
\end{table*}

Experiments are tabulated in \cref{table:blowup}.
The corresponding chart is given in \cref{figure:experiments:charts:blowup}.
\PTPM{} becomes clearly non-linear as expected.
This time, \PTPMopt{} brings a dramatic gain in both memory and time; even more interesting, \PTPMopt{} remains perfectly linear.

\subsection{Discussion}

A first positive outcome is that our method is effectively able to perform parametric pattern matching on words of length up to several dozens of thousands, and is able to output results in the form of several dozens of thousands of symbolic matches in several dimensions, in just a few seconds.
Another positive outcome is that \PTPM{} is perfectly linear in the size of the input word for \textsc{Gear} and \textsc{Accel}: this was expected as these examples are linear, in the sense that the number of states explored by \PTPM{} is linear as these patterns feature no loops.

Note that the parsing time is not linear, but it could be highly improved: due to the relatively small size of the models usually treated by \imitator{}, this part was never properly optimized, and it contains several quadratic syntax checking functions that could easily be avoided.

The performances do not completely allow yet for an \emph{online} usage in the current version of our algorithm and implementation (in~\cite{WHS17}, we pushed the \textsc{Accel} case study for timed words of length up to~17,280,002).
A possible direction is to perform \LongVersion{an on-the-fly computation of the parametric zone graph, more precisely to do }an on-the-fly parsing of the timed word automaton; this will allow \imitator{} to keep in memory a single location at a time (instead of up to 25,137 in our experiments).

Finally, although this is not our original motivation, we believe that, if we are only interested in \emph{robust} pattern matching, \ie{} non-parametric pattern matching but with an allowed deviation (``guard enlargement'') of the pattern automaton, then using the efficient 1-dimensional parameterized DBMs of~\cite{Sankur15} would probably be an interesting alternative: indeed, in contrast to classical parameterized DBMs~\cite{HRSV02} (that are made of a matrix and a parametric polyhedron), the structure of~\cite{Sankur15} only needs an $\ClockCard\times\ClockCard$ matrix with a single parameter, and seems particularly efficient.

\begin{figure*}[t]
	\begin{subfigure}[b]{.49\textwidth}
		\centering
		
		\includegraphics[width=.55\textwidth]{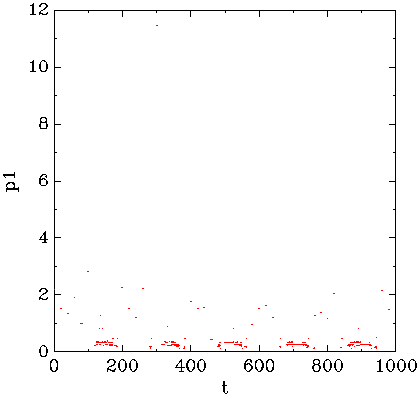}

		\caption{Projection onto $t$ and~$\param$}
		\label{figure:projections:gear:t-param}
	\end{subfigure}
	\begin{subfigure}[b]{.49\textwidth}
		\centering
		
		\includegraphics[width=.55\textwidth]{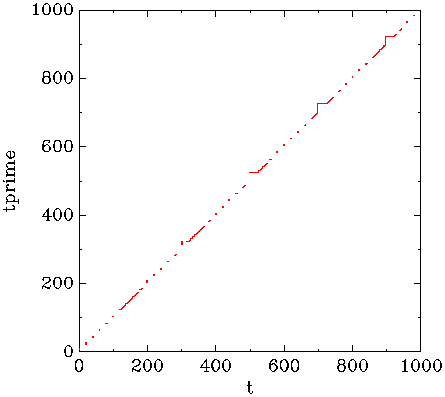}

		\caption{Projection onto $t$ and~$t'$}
	\end{subfigure}

	\caption{Visualizing a large number of matches for \textsc{Gear} ($|\word = 1467|$)}
	\label{figure:projections:gear}
\end{figure*}

\begin{remark}
	In the conference version of this work~\cite{AHW18}, we describe this approach as an \emph{offline} algorithm.
	In fact, it is essentially \emph{online} in the sense that it can potentially run with only a portion of the log: it relies on parallel composition of a specification automaton and a log automaton, and this parallel composition can be achieved on-the-fly.
	However, as mentioned in~\cite{BDDFMNS18}, ``a good online monitoring algorithm must:
	\begin{ienumeration}
		\item be able to generate intermediate estimates of property satisfaction based on partial signals,
		\item use minimal amount of data storage, and
		\item\label{item3} be able to run fast enough in a real-time setting.''
	\end{ienumeration}
	So, at least for point~\ref{item3}, our algorithm may not really run in a real-time setting.
\end{remark}

In contrast, we will present in the next section a contribution fast enough to run in a real-time setting, with runs of dozens of thousands of events being analyzable in less than a second.

\section{Dedicated method}\label{section:adhoc}

In this section, we present a \emph{dedicated} online algorithm for parametric timed pattern matching (\cref{ss:online-algorithm}).
We will then
enhance it with skipping in \cref{section:skipping}, and 
evaluate both versions with and without skipping in \cref{ss:adhoc-experiments}.

\subsection{An online algorithm}\label{ss:online-algorithm}

Similarly to the online algorithm for timed pattern matching in~\cite{WAH16},
our algorithm finds all the matching triples $(t,t',\pval) \in \mathcal{M}(\word,\A)$ by a breadth-first search.
Our algorithm is online in the following sense: after reading the $i$-th element $(a_i, \tau_i)$ of the timed word $\word=(\overline{a},\overline{\tau})$, it immediately outputs all the matching triples $(t,t',\pval)$ over the available prefix $(a_1, \tau_1), (a_2, \tau_2),\dots,(a_i, \tau_i)$ of $\word$.

Firstly, we define the auxiliary functions for our online algorithm for parametric timed pattern matching.
We introduce an additional variable $t$ representing the absolute time of the beginning of the matching.
We use a function $\rho\colon\Clock\to(\Rp\disjointUnion\{t\})$ to represent the latest reset time of each clock variable $\clock\in\Clock$.
Intuitively, $\rho(\clock)=\tau\in\Rp$ means the latest reset of $\clock$ is at $\tau$, and $\rho(\clock)=t$ means $\clock$ is not reset after the beginning of the matching.

\begin{definition}
 [$\eval(\rho,\tau)$]
 Let $\Clock$ be the set of clock variables and $t$ be the variable for the beginning of a matching.
 For a function $\rho\colon \Clock\to(\Rp\disjointUnion\{t\})$ and the current time $\tau\in\Rp$, 
\ShortVersion{\emph{$\eval(\rho,\tau)$} is the constraint 
 \begin{math}
  \eval(\rho,\tau) = \bigwedge_{\clock\in\Clock} \bigl(\clock = \tau - \rho(\clock)\bigr)
 \end{math}
on $\Clock\disjointUnion\{t\}$.}
\LongVersion{\emph{$\eval(\rho,\tau)$} is the following constraint on $\Clock\disjointUnion\{t\}$.
 \begin{displaymath}
  \eval(\rho,\tau) = \bigwedge_{\clock\in\Clock} \bigl(\clock = \tau - \rho(\clock)\bigr)
 \end{displaymath}}
\end{definition}

\begin{definition}
 [$\mathrm{reset}(\rho,\resets,\tau)$]
 For a function $\rho\colon \Clock\to(\Rp\disjointUnion\{t\})$, the set $\resets\subseteq\Clock$ of clocks to be reset, and the current time $\tau\in\Rp$, 
 $\mathrm{reset}(\rho,\resets,\tau)\colon\Clock\to(\Rp\disjointUnion\{t\})$ is the following function.
 \begin{displaymath}
  \mathrm{reset}(\rho,\resets,\tau) (\clock) =
  \begin{cases}
   \tau & \text{if $\clock\in\resets$}\\
   \rho(\clock)& \text{if $\clock\not\in\resets$}
  \end{cases}
 \end{displaymath}
\end{definition}

\begin{definition}
 [$\rhoEmpty$]
 By $\rhoEmpty\colon \Clock\to(\Rp\disjointUnion\{t\})$, we denote the function mapping each $\clock\in\Clock$ to $t$.
\end{definition}

Intuitively, $\eval(\rho,\tau)$ is the constraint corresponding to the clock valuation, $\mathrm{reset}(\rho,\resets,\tau)$ is the operation to reset the clock variables $\clock\in\resets$ at $\tau$, and $\rhoEmpty$ is the initial clock valuation.

\begin{algorithm}
 \caption{Online parametric timed pattern matching without skipping}%
 \label{alg:online_no_skip}
 \DontPrintSemicolon{}
 \newcommand{\myCommentFont}[1]{\texttt{\footnotesize{#1}}}
 \SetCommentSty{myCommentFont}
 \KwIn{A timed word $\word = (\overline{a},\overline{\tau})$, and a PTA $\A = (\Actions, \Loc, \locinit, \LocFinal, \Clock, \Param, %
 \Edges)$.}
 \KwOut{$\bigvee Z$ is the match set $\mathcal{M} (w,\mathcal{A})$}
 
 $\CurrConf \gets \emptyset;\; Z\gets\emptyset$\;
\For{$i \gets 1\, \mathbf{to}\, |w|$} {%
\label{alg_line:no_skip:big_for_beginning}
  \KwPush{} $(\locinit,\rhoEmpty,(\tau_{i-1} < t \leq \tau_i))$ \KwTo{} $\CurrConf$\label{alg_line:no_skip:insert_beginning}\tcp*[f]{start a matching trial from $(\tau_{i-1}, \tau_{i}]$}

  \For(\tcp*[f]{\crefrange{online_alg:insertTermBegin}{online_alg:insertTermEnd} try to insert \$ in $[\tau_{i-1},\tau_{i})$}){$(\loc, \rho,\mathcal{C}) \in \CurrConf$} {\label{online_alg:insertTermBegin}
    \For{$\loc_f \in \LocFinal$} {\label{naive_alg_online:start_of_accepting_state}
      \For{$(\loc, \guard,\$,\resets,\loc_f) \in \Edges$} {\label{naive_alg_online:loop_over_e_2}
        \KwPush{} $\project{ \big(\mathcal{C} \land (\tau_{i-1} \leq t' < \tau_{i}) \land \guard \land \eval(\rho,t') \big)}{\Param \disjointUnion\{t,t'\}}$ \KwTo{} $Z$\;%
        \label{naive_alg_online:add_interval}
      }
    }
  }\label{online_alg:insertTermEnd}
  $(\PrevConf , \CurrConf) \gets (\CurrConf , \emptyset)$\label{alg_line:no_skip:after_intert_term}\;
  \For(\tcp*[f]{\crefrange{online_alg:readLoopBegin}{online_alg:readLoopEnd} try to go forward using $(a_{i},\tau_{i})$}){$(\loc, \rho,\mathcal{C}) \in \PrevConf$} {\label{online_alg:readLoopBegin}
    \For{$(\loc, \guard,a_i,\resets,\loc') \in \Edges$} {
      $\mathcal{C}' \gets \project{\big (\mathcal{C} \land \guard \land \eval(\rho,\tau_i) \big) }{\Param \disjointUnion\{t\}}$\;\label{alg_line:no_skip:examine_guard}
      \If{$\mathcal{C}' \neq \KFalse$} {\label{alg_line:no_skip:branch_examine_guard}
        \KwPush{} $(\loc', \mathrm{reset}(\rho,\resets,\tau),\mathcal{C}')$ \KwTo $\CurrConf$\;\label{alg_line:no_skip:push_next_conf}
      }
    }
 }
}\label{online_alg:readLoopEnd}\label{alg_line:no_skip:big_for_end}
\KwPush{} $(\locinit,\rhoEmpty,\{\tau_{|w|} < t < \infty\})$ \KwTo{} $\CurrConf$
\tcp*[f]{for the trimming after the final event}

 \For(\tcp*[f]{\crefrange{online_alg:insertTermRemainBegin}{online_alg:insertTermRemainEnd} try to insert \$ in $[\tau_{|w|},\infty)$}){$(\loc, \rho,\mathcal{C}) \in \CurrConf$}{\label{online_alg:insertTermRemainBegin}
   \For{$\loc_f \in \LocFinal$}{
     \For{$(\loc, \guard,\$,\resets,\loc_f) \in \Edges$}{
       \KwPush{} $\project{\big(\mathcal{C} \land (\tau_{|w|} \leq t' < \infty) \land \guard \land \eval(\rho,t') \big)}{\Param \disjointUnion\{t,t'\}}$ \KwTo{} $Z$\;
     }
   }
}%
\label{online_alg:insertTermRemainEnd}
\end{algorithm}

\cref{alg:online_no_skip} shows our online algorithm for parametric timed pattern matching.
In the pseudocode, we used $\CurrConf$, $\PrevConf$, and $Z$: $\CurrConf$ and $\PrevConf$ are finite sets of triples $(\loc, \rho,\mathcal{C})$ of a location $\loc\in\Loc$, a mapping $\rho\colon\Clock\to(\Rp\disjointUnion\{t\})$ denoting the latest reset of each clock, and a constraint $\mathcal{C}$ over $\Param \disjointUnion\{t\}$; and $Z$ is a finite set of constraints over $\Param \disjointUnion\{t,t'\}$.
As a running example, we use the PTA and the timed word in \cref{figure:running-example} page~\pageref{figure:running-example}.

At first, the counter $i$ is $1$ (\cref{alg_line:no_skip:big_for_beginning}), and we start the matching trial from $t\in(\tau_0,\tau_1]$.
At \cref{alg_line:no_skip:insert_beginning}, we add the new configuration $(\locinit, \rhoEmpty, (\tau_0 < t \leq \tau_1))$ to $\CurrConf$, which means we are at the initial location $\locinit$, we have no reset of the clock variables yet, and we can potentially start the matching from any $t\in (\tau_0,\tau_1]$.
In \crefrange{online_alg:insertTermBegin}{online_alg:insertTermEnd}, we try to insert \$ (\ie{} the end of the matching) in $[\tau_{0},\tau_{1})$;
in our running example in \cref{figure:running-example}, since there is no edge from $\locinit$ to the final state, we immediately jump to \cref{alg_line:no_skip:after_intert_term}.
Then, in \crefrange{online_alg:readLoopBegin}{online_alg:readLoopEnd}, we consume $(a_1,\tau_1)= (a,0.7)$ and try to transit from~$\locinit$ to~$\loc_1$.
The guard ${\clock} > 1$ at the edge from $\locinit$ to $\loc_1$ is examined at \cref{alg_line:no_skip:examine_guard}.
We take the conjunction of the current constraint $\mathcal{C}$, the guard $\guard$, and the constraints $\eval(\rho,\tau_{i})$ on the clock valuations.
We take the projection to $\Param\disjointUnion\{t\}$ because the constraint on the clock variables changes after time passing.
Since no clock variable is reset so far, the constraint on the clock valuation is $\clock = \tau_1 - t$.
The constraint $\mathcal{C} \land \guard \land \eval(\rho,\tau_1)=(0 < t \leq 0.7)\land(x > 1)\land(x=0.7-t)$ is unsatisfiable, and we go back to \cref{alg_line:no_skip:insert_beginning}.

At \cref{alg_line:no_skip:insert_beginning}, we add the new configuration $(\locinit, \rhoEmpty, (\tau_1 < t \leq \tau_2))$ to $\CurrConf$.
Similarly, we immediately jump to \cref{alg_line:no_skip:after_intert_term}, and we try the edge from $\locinit$ to $\loc_1$ in \crefrange{online_alg:readLoopBegin}{online_alg:readLoopEnd}.
This time, the constraint $\mathcal{C} \land \guard \land \eval(\rho,\tau_2)=(0.7 < t \leq 2.0)\land(x > 1)\land(x=2.0-t)$ is satisfiable at \cref{alg_line:no_skip:branch_examine_guard}, and we push the next configuration $(\loc_1,\rhoEmpty,\mathcal{C}')$ to $\CurrConf$ at \cref{alg_line:no_skip:push_next_conf}.

Similarly, we keep adding and updating configurations until the end of the input timed word $\word$.
Finally, in \crefrange{online_alg:insertTermRemainBegin}{online_alg:insertTermRemainEnd}, we try to insert \$ in $[\tau_{3},\infty)=[4.1,\infty)$.
We can use the edge from $\loc_2$ to the final state, and we add the constraint at the right of \cref{figure:running-example} to $Z$.

\cref{alg:online_no_skip} terminates because the size of $\CurrConf$ is always finite.
\cref{alg:online_no_skip} is correct because it symbolically keeps track of all the runs of $\valuate{\A}{\pval}$ over $w|_{(t,t')}$ for any $\pval\in\PVal$ and $(t,t') \subseteq \Rnn$.

\subsection{Parametric timed pattern matching enhanced with skipping}\label{section:skipping}
We now enhance \cref{alg:online_no_skip} with automata-based \emph{skipping} that has been used for (non-parametric) timed pattern matching~\cite{WAH16,WHS17}.
Skipping is an optimization for pattern matching by reducing the number of the matching trials.
The high-level idea of skipping is to utilize the specification automaton and the information from the latest matching trial to decide if the current matching trial is necessary, and prevent the matching trial if it is unnecessary.
More precisely, in the beginning of each matching trial (at \cref{alg_line:no_skip:big_for_beginning} of \cref{alg:online_no_skip}),
instead of increasing the counter~$i$ by one,
we compute the smallest $\Delta \in \Zp$ such that there may be a matching from $t \in (\tau_{i+\Delta-1}, \tau_{i+\Delta}]$, and update the counter~$i$ to $i+\Delta$.
We call such $\Delta$ a \emph{skip value}.
We compute the skip value $\Delta$ using a function that we call a \emph{skip value function}.
When the skip value $\Delta$ is greater than one, some matching trials are prevented.
Although the computation of such a skip value $\Delta$ requires some additional computational cost, a large part of the computation can be done beforehand thanks to the finiteness of the automata structure.
Therefore, we can gain the runtime performance, which we confirm by our experiments in \cref{ss:adhoc-experiments}.

 Following~\cite{WHS17}, we employ \emph{FJS-style}\footnote{``FJS'' is the acronym of the family names of the authors of~\cite{FJS07}.} skipping~\cite{FJS07}.
 In FJS-style skipping, the skip value $\Delta$ is computed by combining the result of two skip value functions: the KMP-style\footnote{``KMP'' is the acronym of the family names of the authors of~\cite{KMP77}.} skip values function $\KMPSkipFunc$~\cite{KMP77} and the Quick Search-style skip value function $\QSSkipFunc$~\cite{Sunday90}.
Their combination in our FJS-style algorithm is presented later in \cref{subsec:skipping_algorith}.
We remark that this optimization does not change the result thanks to the soundness of the skip value functions (\cref{theorem:KMP,theorem:QS}).%
\label{paragraph:explain_FJS}

The following are auxiliaries for the skip values.
For a PTA $\A$ and a parameter valuation $\pval$, the language without the last element is denoted by $\Lg_{-\$}(\valuate{\A}{\pval}) = \{\word(1,|\word|-1)\mid\word\in\Lg(\valuate{\A}{\pval})\}$.
For a PTA $\A = (\Actions, \Loc, \locinit, \LocFinal, \Clock, \Param, %
 \Edges)$ and $\loc\in\Loc$,
$\A_{\loc}$ denotes the PTA
$\A_\loc=(\Actions, \Loc, \locinit, \{\loc\}, \Clock, \Param, %
 \Edges)$.

\subsubsection{KMP-style skip values}
Given a location $\loc\in\Loc$ and a set $\pvals\subseteq\PVal$ of parameter valuations, the \emph{KMP-style skip value function} $\KMPSkipFunc$ returns the skip value $\KMPSkipFunc(\loc, \pvals)\in\Zp$.
The location $\loc$ and the parameter valuations $\pvals$ represent %
a set of the configurations in the latest matching trial.
We utilize the pair $(\loc, \pvals)$ to overapproximate the subsequence $\word(i,j) - \tau_{i}$ of the timed word $\word$ examined in the latest matching trial.
The following illustrates the idea of the KMP-style skipping.
\begin{example}%
 \label{example:KMP}
 \begin{figure}[tbp]
  \begin{tikzpicture}[scale=1.8,xscale=2.0]
   \draw [thick, -stealth](-0.5,0)--(1.5,0);

   \def\diff{0.15}
   \def\dotsY{0.2}
   \coordinate (a00) at (-0.1,0.1);
   \coordinate (a0) at (0.0,0.1);
   \coordinate (a1) at (0.65,0.1);
   \coordinate (a2) at (1.0,0.1);

   \node at (0-\diff, \dotsY) {$\cdots$};
   \node at (1+\diff, \dotsY) {$\cdots$};

   \draw[bend left=60]
         let \p0 = (a0) in 
         let \p1 = (a1) in
               (\x0, 0) edge [above] node{$1 < $} (\x1, 0);
   \draw[bend left=60]
         let \p1 = (a1) in
         let \p2 = (a2) in
           (\x1, 0) edge [above] node{$< 1$} (\x2, 0);
   \draw[bend left=80,distance=0.7cm]
         let \p0 = (a0) in
         let \p2 = (a2) in
           (\x0, 0) edge [above] node{$< \valuate{\styleparam{\param}}{\pval} = 2$} (\x2, 0);

   \draw let \p0 = (a00) in
           (a00) -- (\x0, -\y0) node[anchor=north]{$\tau_{i}$};
   \draw[dashed,color=gray] let \p0 = (a0) in
           (a0) -- (\x0, -\y0);
   \foreach \i in {1, 2}
     \draw let \p{\i} = (a\i) in
       (a\i) node[anchor=south]{$\styleact{a}$} -- (\x\i,-\y\i) node[anchor=north]{$\tau_{i + \i}$};

   \def\shiftY{-0.5}
   \draw [thick, -stealth](0.5,0+\shiftY)--(1.7,+\shiftY);

   \def\diff{0.15}
   \def\dotsY{0.2}
   \coordinate (b0) at (0.65,0.1+\shiftY);
   \coordinate (b1) at (0.75,0.1+\shiftY);
   \coordinate (b2) at (1.0,0.1+\shiftY);
   \coordinate (b3) at (1.3,0.1+\shiftY);

   \draw[bend right=60,distance=0.1cm]
         let \p0 = (b1) in 
         let \p1 = (b2) in
               (\x0, \shiftY) edge [below,pos=0.6] node{$1 < $} (\x1, \shiftY);
   \draw[bend right=80,distance=0.5cm]
         let \p0 = (b1) in
         let \p2 = (b3) in
           (\x0, \shiftY) edge [below] node{$< \valuate{\styleparam{\param}}{\pval'}$} (\x2, \shiftY);

   \foreach \i in {0, 2, 3}
     \draw let \p{\i} = (b\i) in
       (b\i) -- (\x\i,-0.1+\shiftY);
   \draw[dashed,color=gray] let \p0 = (b1) in
       (b1) -- (\x0,-0.1+\shiftY);

   \def\shiftY{-0.65}
   \draw [thick, -stealth](0.5,0+\shiftY+\shiftY)--(1.9,+\shiftY+\shiftY);

   \def\diff{0.15}
   \def\dotsY{0.2}
   \coordinate (c0) at (1.0,0.1+\shiftY+\shiftY);
   \coordinate (c2) at (1.1,0.1+\shiftY+\shiftY);
   \coordinate (c3) at (1.3,0.1+\shiftY+\shiftY);
   \coordinate (c4) at (1.65,0.1+\shiftY+\shiftY);

   \draw[bend right=60,distance=0.1cm]
         let \p0 = (c2) in 
         let \p1 = (c3) in
               (\x0, \shiftY+\shiftY) edge [below,pos=0.7] node{$1 < $} (\x1, \shiftY+\shiftY);
   \draw[bend right=80,distance=0.5cm]
         let \p0 = (c2) in
         let \p2 = (c4) in
           (\x0, \shiftY+\shiftY) edge [below] node{$< \valuate{\styleparam{\param}}{\pval'}$} (\x2, \shiftY+\shiftY);

   \foreach \i in {0, 3, 4}
     \draw let \p{\i} = (c\i) in
       (c\i) -- (\x\i,-0.1+\shiftY+\shiftY);
   \draw[dashed,color=gray] let \p0 = (c2) in
       (c2) -- (\x0,-0.1+\shiftY+\shiftY);

  \end{tikzpicture}
  \caption{Illustration of the timing constraints in \cref{example:KMP}: constraints from the reached location $\loc_2$ and the parameter valuation $\pval$ (above), necessary timing constraints to have a matching after a shift by 1 (middle), and necessary timing constraints to have a matching after a shift by 2 (below)}%
  \label{figure:illustrates_KMP}
 \end{figure}

 Let $\A$ be the PTA in \cref{figure:running-example:PTA}.
 Suppose our latest matching trial from $t \in (\tau_i, \tau_{i+1}]$
 finished at $\loc_2$ with the parameter valuation $\pval$ satisfying $\pval(\styleparam{\param}) = 2$.
 By the guards of $\A$, there is $t \in (\tau_i,\tau_{i+1}]$ satisfying 
 $1 < \tau_{i+1} - t \land \tau_{i+2} - t < \valuate{\styleparam{\param}}{\pval} = 2 \land \tau_{i+2} - \tau_{i+1} < 1$.
 The above of \cref{figure:illustrates_KMP} illustrates this timing constraint.

If we have a matching from $t \in (\tau_{i+1},\tau_{i+2}]$, because of the guards in $\A$, 
there are $t \in (\tau_{i+1},\tau_{i+2}]$ and a parameter valuation $\pval'$ satisfying $1 < \tau_{i+2} - t \land \tau_{i+3} - t < \valuate{\styleparam{\param}}{\pval'}$.
The middle of \cref{figure:illustrates_KMP} illustrates this timing constraint.
 Since we have both $\tau_{i+2} - \tau_{i+1} < 1$ and $1 < \tau_{i+2} - t < \tau_{i+2} - \tau_{i+1}$, the timing constraints in the above and the middle of \cref{figure:illustrates_KMP} does not hold at the same time.
 Therefore, we conclude that there is no matching from any point between $\tau_{i+1}$ and $\tau_{i+2}$ and we can skip this matching trial.

 In contrast, if we have a matching from $t \in (\tau_{i+2},\tau_{i+3}]$, 
there are $t \in (\tau_{i+2},\tau_{i+3}]$ and a parameter valuation $\pval'$ satisfying
$1 < \tau_{i+3} - t \land \tau_{i+4} - t < \valuate{\styleparam{\param}}{\pval'}$.
The below of \cref{figure:illustrates_KMP} illustrates this timing constraint.
 Since the timing constraints in the above and the below of \cref{figure:illustrates_KMP} are satisfiable at the same time, we may have a matching from $t \in (\tau_{i+2}, \tau_{i+3}]$, and we cannot skip this matching trial.
Overall, we can increase the counter~$i$ in \cref{alg:online_no_skip} at most by 2 without changing the result.
\end{example}

We formalize this idea of the KMP-style skip value function with the languages of PTAs.
Let $i \in \{1,2,\dots,|\word|\}$ and
let $\loc$ be a location we reached in the end of the matching trial from $i$ with the parameter valuation $\pval$.
Since we reached $\loc$ with the parameter valuation $\pval$ by reading a prefix of $\word(i,|\word|)$, 
there is an index $j \geq i$ satisfying $\word(i,j) - \tau_{i} \in \Lg(\valuate{\A_\loc}{\pval})$.\footnote{%
	More precisely, we require $\loc \not\in \LocFinal$ because the transition to the final locations are labeled with the special terminal character \$, which is not in $\word$.}
By appending an arbitrary timed word, we overapproximate the subsequence $\word(i,|\word|) - \tau_{i}$ by the language $\Lg(\valuate{\A_\loc}{\pval})\cdot\TW(\Actions)$.

If there is a matching from the $i$-th event $(\action_i, \tau_i)$ of $\word$ with a parameter valuation $\pval'$, \ie{}
if there is $t \in (\tau_{i}, \tau_{i+1}]$ and $t' > t$ satisfying $\word|_{(t,t')} \in \Lg(\valuate{\A}{\pval})$,
there is an index $j \geq i$ and $t' \in [\tau_{j}, \tau_{j+1})$ satisfying $\bigl(\word(i,j) \circ (\$,t')\bigr) - t \in \Lg(\valuate{\A}{\pval})$.
By removing the terminal character \$, appending an arbitrary timed word, and shifting appropriately,
we have $\word(i,|\word|) \in \{\word''+t\mid\word''\in\Lg_{-\$}(\pval'(\A)),t \geq 0\}\cdot\TW(\Actions)$, and thus,
for each $n\in\Zp$, the matching after $n$-shift with a parameter valuation $\pval'$ is overapproximated by
$\TW^n(\Actions)\cdot\{\word''+t\mid\word''\in\Lg_{-\$}(\pval'(\A)),t \geq 0\}\cdot\TW(\Actions)$.
Therefore, we can skip the matching trial from $i + n$ if for any parameter valuation $\pval'$, and we have the following.
\[
 \Lg(\valuate{\A_\loc}{\pval})\cdot\TW(\Actions) \cap \TW^n(\Actions)\cdot\{\word''+t\mid\word''\in\Lg_{-\$}(\pval'(\A)),t \geq 0\}\cdot\TW(\Actions) = \emptyset
\]
Based on the above idea, the KMP-style skip value function $\KMPSkipFunc$ is defined as follows.

\begin{definition}
 [$\KMPSkipFunc$]%
 \label{def:KMP}
 Let $\A$ be a PTA $\A = (\Actions, \Loc, \locinit, \LocFinal, \Clock, \Param, %
 \Edges)$.
 For a location $\loc\in\Loc$ and $n\in\Zp$, let $\pvalsi{\loc,n}$ be the set of parameter valuations $\pval$ such that there is a parameter valuation $\pval'\in\PVal$ satisfying
 $\Lg(\valuate{\A_\loc}{\pval})\cdot\TW(\Actions)\cap\TW^n(\Actions)\cdot\{\word''+t\mid\word''\in\Lg_{-\$}(\pval'(\A)),t \geq 0\}\cdot\TW(\Actions)\ne\emptyset$.
 The \emph{KMP-style skip value function} $\KMPSkipFunc\colon\Loc\times\powerset{\PVal}\to\Zp$ is 
 $\KMPSkipFunc(\loc, \pvals) = \min\{n\in \Zp\mid \pvals \subseteq \pvalsi{\loc,n} \}$.
\end{definition}
\begin{example}
 We continue \cref{example:KMP}.
 For a parameter valuation $\pval$, we have the following.
 \[
 \Lg(\valuate{\A_{\loc_2}}{\pval}) = \Lg_{-\$}(\valuate{\A}{\pval}) = \{(\styleact{a}, \tau_1),(\styleact{a}, \tau_2) \mid 1 < \tau_1 < \tau_2 < \valuate{\styleparam{\param}}{\pval} \}
 \]
 Therefore, for any parameter valuations $\pval, \pval'$, we have 
 $\Lg(\valuate{\A_{\loc_2}}{\pval})\cdot\TW(\Actions)\cap\TW^n(\Actions)\cdot\{\word''+t\mid\word''\in\Lg_{-\$}(\pval'(\A)),t \geq 0\}\cdot\TW(\Actions)\ne\emptyset$ if and only if the following is satisfiable.
 \begin{equation}
 \exists t \in (\tau_{n}, \tau_{n+1}].\,
  1 < \tau_1 < \tau_2 < \valuate{\styleparam{\param}}{\pval} \land 
  1 < \tau_{n+1} - t < \tau_{n+2} - t < \valuate{\styleparam{\param}}{\pval'}  
  \label{eq:KMP_example}
 \end{equation}
When $\valuate{\styleparam{\param}}{\pval} = 2$ and $n = 1$,
we have $1 < \tau_2 - \tau_1 < 1$, and \cref{eq:KMP_example} is unsatisfiable.
In contrast, when $\valuate{\styleparam{\param}}{\pval} = 2$ and $n = 2$, \cref{eq:KMP_example} is satisfiable, and we have $\KMPSkipFunc(\loc_2, \{\pval\}) = 2$, which coincides with the idea in \cref{example:KMP}.
\end{example}
We note that if we reached both $\loc$ and $\loc'$, 
we have both $\word(i,|\word|) \in \Lg(\valuate{\A_\loc}{\pval})\cdot\TW(\Actions)$ and $\word(i,|\word|) \in \Lg(\valuate{\A_{\loc'}}{\pval})\cdot\TW(\Actions)$, and 
the intersection 
$(\Lg(\valuate{\A_\loc}{\pval})\cdot\TW(\Actions)) \cap (\Lg(\valuate{\A_{\loc'}}{\pval})\cdot\TW(\Actions))$ overapproximates $\word(i,|\word|)$.
Therefore, 
we take the maximum of $\KMPSkipFunc(\loc,\pvals)$ over the reached configurations.

Since $\pvalsi{\loc,n}$ is independent of the timed word $\word$, we can compute it before the matching trials by reachability synthesis of PTAs. 
See \cref{appendix:automatic_construction} for the construction of the PTAs.%
During the matching trials, only the inclusion checking $\pvals\subseteq\pvalsi{\loc,n}$ is necessary.
This test can be achieved thanks to convex polyhedra inclusion.
%

\begin{comment}
  \mw{Write an intuition. It seems this is OK as an example}
 For any timed word $(\styleact{a}, \tau_1), (\styleact{a}, \tau_2) \in \Lg(\valuate{\A_{\loc_2}}{\pval})$, we have
 $\tau_1 > 1 \land \tau_2 < \valuate{\styleparam{\param}}{\pval}$, which implies
 $1 < \tau_1 < \tau_2 < 2 \land 0 < \tau_2 - \tau_1 < 1$.
 For any $n \in \Zp$ and for any timed word $(\styleact{a}, \tau'_1), (\styleact{a}, \tau'_2), \dots, (\styleact{a}, \tau'_m) \in \TW^n(\Actions)\cdot\{\word''+t\mid\word''\in\Lg_{-\$}(\pval'(\A)),t \geq 0\}\cdot\TW(\Actions)$,
 we have
 $\tau'_{n + 1} - \tau'_{n} > 1$.

 \mw{I am here}
, and let $\pvals = \{\pval\}$, 
 where $\pval$ is the parameter valuation
 We have $\KMPSkipFunc(\loc_2, \pvals) = 2$.
 $\styleclock{x}$
\end{comment}

\paragraph{Soundness}
For the KMP-style skip value function $\KMPSkipFunc$, we have the following soundness result, and the use of $\KMPSkipFunc$ does not change the result of parametric timed pattern matching.

\begin{theorem}
 [soundness of $\KMPSkipFunc$]%
 \label{theorem:KMP}
 Let $\A = (\Actions, \Loc, \locinit, \LocFinal, \Clock, \Param, \Edges)$ be a PTA and
 let $\word\in\TW(\Actions)$.
 For any subsequence $\word(i,j)$ of $\word$ and for any
 $(\loc, \pval)\in\Loc\times\PVal$, if there exists $t\in\Rnn$ satisfying $\word(i,j)-t\in\Lg(\valuate{\A_\loc}{\pval})$,
 for any $n\in\bigl\{1,2,\dots,\KMPSkipFunc(\loc, \{\pval\})-1\bigr\}$,
 we have $\bigl((\tau_{i+n-1},\tau_{i+n}]\times\Rp\times\PVal\bigr)\cap\mathcal{M}(w,\mathcal{A})=\emptyset$.
\end{theorem}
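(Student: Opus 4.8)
The plan is to turn the skip value into an emptiness property of timed languages and then argue by contradiction. Unfolding \cref{def:KMP} with $\pvals=\{\pval\}$ gives $\KMPSkipFunc(\loc,\{\pval\})=\min\{\,m\in\Zp\mid\pval\in\pvalsi{\loc,m}\,\}$, so for every $n\in\{1,\dots,\KMPSkipFunc(\loc,\{\pval\})-1\}$ we have $\pval\notin\pvalsi{\loc,n}$; by the definition of $\pvalsi{\loc,n}$ this means that for \emph{every} parameter valuation $\pval'\in\PVal$,
\[
 \Lg(\valuate{\A_\loc}{\pval})\cdot\mathcal{T}(\Actions)\ \cap\ \mathcal{T}^n(\Actions)\cdot\{\word''+t\mid\word''\in\Lg_{-\$}(\valuate{\A}{\pval'}),\,t>0\}\cdot\mathcal{T}(\Actions)\ =\ \emptyset .
\]
Hence it suffices to show that if some triple $(t_s,t_e,\pval')$ with $t_s\in[\tau_{i+n-1},\tau_{i+n})$ were in $\mathcal{M}(\word,\A)$, then this intersection would be non-empty for that same $\pval'$ --- a contradiction. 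From this the conclusion $\bigl([\tau_{i+n-1},\tau_{i+n})\times\Rp\times\PVal\bigr)\cap\mathcal{M}(\word,\A)=\emptyset$ follows at once (the case $\KMPSkipFunc(\loc,\{\pval\})=1$ being vacuous).

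The core step is to build, from the two hypotheses, an explicit word in that intersection. The theorem's hypothesis supplies $t\in\Rnn$ with $\word(i,j)-t\in\Lg(\valuate{\A_\loc}{\pval})$; since a timed word has strictly positive timestamps, $t<\tau_i\le\tau_{i+n-1}$, hence $t<t_s$. The hypothetical match supplies $\word|_{(t_s,t_e)}\in\Lg(\valuate{\A}{\pval'})$; by the definition of a timed word segment the left index of $\word|_{(t_s,t_e)}$ is exactly $i+n$ (the unique $i'$ with $\tau_{i'-1}\le t_s<\tau_{i'}$), so $\word|_{(t_s,t_e)}=(\word(i+n,k)-t_s)\circ(\$,\,t_e-t_s)$ for a suitable $k$, and dropping the terminal $\$$ yields $\word(i+n,k)-t_s\in\Lg_{-\$}(\valuate{\A}{\pval'})$. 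I would then take the witness $W:=\word\bigl(i,\max(j,k)\bigr)-t$ and verify: (i) $W\in\Lg(\valuate{\A_\loc}{\pval})\cdot\mathcal{T}(\Actions)$, because its length-$(j-i+1)$ prefix is $\word(i,j)-t$ and the rest, $\word(j+1,\max(j,k))-\tau_j$, is a timed word; and (ii) $W\in\mathcal{T}^n(\Actions)\cdot\{\word''+t_0\mid\word''\in\Lg_{-\$}(\valuate{\A}{\pval'}),\,t_0>0\}\cdot\mathcal{T}(\Actions)$, splitting $W$ into the length-$n$ prefix $\word(i,i+n-1)-t$, the middle block $\word(i+n,k)-\tau_{i+n-1}=(\word(i+n,k)-t_s)+(t_s-\tau_{i+n-1})$ --- i.e.\ $\word''+t_0$ with $\word''=\word(i+n,k)-t_s$ and $t_0=t_s-\tau_{i+n-1}$ --- and the suffix $\word(k+1,\max(j,k))-\tau_k$.

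The main obstacle is the bookkeeping of the two concatenation operators together with the time shifts: one must keep the absorbing $\circ$ (which glues literal consecutive fragments of $\word$) separate from the non-absorbing $\cdot$ used inside \cref{def:KMP}, and re-check at each of the three splits of $W$ that every factor is a genuine timed word --- consecutive timestamps strictly increasing, first timestamp strictly positive --- which is where the strict inequalities $\tau_{i+n-1}<\tau_{i+n}$, $\tau_j<\tau_{j+1}$, $\tau_k<\tau_{k+1}$ and the bound $t<\tau_i$ come in. The point I expect to be genuinely delicate is the left endpoint $t_s=\tau_{i+n-1}$, where $t_0=t_s-\tau_{i+n-1}=0$, so the middle block is literally $\word''\in\Lg_{-\$}(\valuate{\A}{\pval'})$ rather than a strict forward shift of it; this case needs a dedicated sub-argument (or, equivalently, relaxing $t>0$ to $t\ge 0$ in \cref{def:KMP}, which only makes $\KMPSkipFunc$ smaller and hence still sound). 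Finally, the routine degenerate sub-cases --- $\word(i,j)$ or $\word(i+n,k)$ empty, $\max(j,k)$ equal to $j$ or to $k$ so that one block is empty, or $\locinit=\loc$ --- merely turn some factors into the empty timed word and do not affect the reasoning.
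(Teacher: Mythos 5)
Your proposal is, in substance, the paper's own argument read contrapositively: the paper proves an auxiliary lemma stating that $\pval\notin\pvalsi{\loc,n}$ for some reached $(\loc,\pval)$ forces $\bigl([\tau_{i+n-1},\tau_{i+n})\times\Rp\times\PVal\bigr)\cap\mathcal{M}(\word,\A)=\emptyset$, and establishes it by a chain of non-membership implications that pushes the emptiness of $\Lg(\valuate{\A_\loc}{\pval})\cdot\mathcal{T}(\Actions)\cap\mathcal{T}^n(\Actions)\cdot\{\word''+t\mid\word''\in\Lg_{-\$}(\valuate{\A}{\pval'}),t>0\}\cdot\mathcal{T}(\Actions)$ down to the match set; you instead assume a match $(t_s,t_e,\pval')$ and assemble an explicit witness in that intersection. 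The factorization into a length-$n$ prefix, a middle block coming from the hypothetical match, and a residual suffix is the same in both, so this is a presentational rather than a mathematical difference (your version makes the timed-word bookkeeping visible where the paper hides it in set-level inclusions). One remark on the point you single out as delicate: the left endpoint $t_s=\tau_{i+n-1}$, where $t_0=t_s-\tau_{i+n-1}=0$, is not an artifact of your route. The paper's proof has exactly the same boundary issue in the step that replaces $\mathcal{T}^n(\Sigma)\cdot\{\word''+t\mid t>0\}\cdot\mathcal{T}(\Sigma)$ by $\mathcal{T}^n(\Sigma)\circ\{\word''+t\mid t\in[\tau_{i+n-1},\tau_{i+n})\}\cdot\mathcal{T}(\Sigma)$: aligning the length-$n$ prefix with $\word(i,i+n-1)$ forces the shift $t-\tau_{i+n-1}$ to be strictly positive for the inclusion to hold, which excludes precisely $t=\tau_{i+n-1}$. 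Your proposed repair --- relaxing $t>0$ to $t\ge 0$ in \cref{def:KMP}, which can only enlarge $\pvalsi{\loc,n}$, hence only decrease $\KMPSkipFunc$, hence preserve soundness --- is the right one and applies equally to the published proof.
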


First, we prove the following lemma.

\begin{lemma}
 \label{lemma:KMP}
 Let $\A$ be a PTA $\A = (\Actions, \Loc, \locinit, \LocFinal, \Clock, \Param, %
 \Edges)$ and
 let $\word\in\TW(\Actions)$.
 For each subsequence $\word(i,j)$ of $\word$, let $C_{i,j}\subseteq \Loc\times\PVal$ be $C_{i,j}=\{(\loc, \pval)\mid \exists t\in\Rnn.\, \word(i,j)-t \in \Lg(\valuate{\A_\loc}{\pval})\}$.
 For any subsequence $\word(i,j)$ of $\word$ and $n\in\{1,2,\dots,|\word| - i\}$,
 if there exists $(\loc, \pval)\in C_{i,j}$ satisfying
 $\pval\not\in \pvalsi{\loc,n}$, we have $\bigl((\tau_{i+n-1},\tau_{i+n}]\times\Rp\times\PVal\bigr)\cap\mathcal{M}(w,\mathcal{A})=\emptyset$.
\end{lemma}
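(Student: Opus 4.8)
The plan is to argue by contradiction, using the definition of $\pvalsi{\loc,n}$ (\cref{def:KMP}) constructively. Suppose that for some subsequence $\word(i,j)$, some $n\in\Zp$, and some $(\loc,\pval)\in C_{i,j}$ with $\pval\notin\pvalsi{\loc,n}$, there is nonetheless a triple $(\bar t,\bar t',\bar\pval)\in\mathcal{M}(\word,\A)$ with $\bar t\in[\tau_{i+n-1},\tau_{i+n})$ and $\bar t<\bar t'$. I will show this forces $\pval\in\pvalsi{\loc,n}$, with the witness valuation $\pval':=\bar\pval$, contradicting the hypothesis.

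First, unfold the two ingredients. Since $\tau_{i+n-1}\le\bar t<\tau_{i+n}$ and the timestamps of $\word$ are strictly increasing, the definition of a timed word segment makes the left index of $\word|_{(\bar t,\bar t')}$ equal to $i+n$; writing $j'$ for its right index ($\tau_{j'}<\bar t'\le\tau_{j'+1}$), we get $\word|_{(\bar t,\bar t')}=(\word(i+n,j')-\bar t)\circ(\$,\bar t'-\bar t)\in\Lg(\valuate{\A}{\bar\pval})$, hence $\word(i+n,j')-\bar t\in\Lg_{-\$}(\valuate{\A}{\bar\pval})$ after dropping the terminal letter. From $(\loc,\pval)\in C_{i,j}$ we fix $t$ with $u:=\word(i,j)-t\in\Lg(\valuate{\A_\loc}{\pval})$.

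Now comes the \emph{gluing}. Put $K:=\max(j,j')$ and take the single timed word $v:=\word(i,K)-t$, \ie{} the window of $\word$ between indices $i$ and $K$, realigned to start at time $t$. I claim $v$ belongs to the intersection in \cref{def:KMP} instantiated with $\pval'=\bar\pval$. Left factor: $v$ is the non-absorbing concatenation of its length-$(j-i+1)$ prefix $u\in\Lg(\valuate{\A_\loc}{\pval})$ with the tail $\word(j+1,K)-\tau_j\in\mathcal{T}(\Actions)$. Right factor: $v$ is the non-absorbing concatenation of its length-$n$ prefix $\word(i,i+n-1)-t\in\mathcal{T}^n(\Actions)$, then the piece $\word(i+n,j')-\bar t+(\bar t-\tau_{i+n-1})=\word''+t''$ with $\word''=\word(i+n,j')-\bar t\in\Lg_{-\$}(\valuate{\A}{\bar\pval})$ and $t''=\bar t-\tau_{i+n-1}$, then the tail $\word(j'+1,K)-\tau_{j'}\in\mathcal{T}(\Actions)$. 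Strict monotonicity of the timestamps of $\word$ guarantees that each piece is a genuine timed word and that both decompositions realign correctly under non-absorbing concatenation; hence $v$ lies in the intersection and $\pval\in\pvalsi{\loc,n}$, the desired contradiction.

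The conceptual core is precisely this gluing of the matched-tail $\Lg_{-\$}$-word and the $\Lg(\valuate{\A_\loc}{\pval})$-word along their common factor of $\word$. The real effort is the timing bookkeeping that makes the two decompositions of $v$ agree, and especially the boundary and degenerate cases: the boundary $\bar t=\tau_{i+n-1}$ (where the construction yields $t''=0$ rather than $t''>0$ as required by \cref{def:KMP}), the cases $K=j$ or $K=j'$ (empty tails), and a matched segment $\word|_{(\bar t,\bar t')}$ consisting of only the terminal letter (so that $\word''$ would be the empty word). I expect these case distinctions, rather than the construction itself, to be the fiddly part of the proof.
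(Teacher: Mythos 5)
Your proof is correct in substance and is essentially the paper's own argument run in the contrapositive direction: the paper derives, from the emptiness of the intersection defining $\pvalsi{\loc,n}$, a chain of non-membership statements for $\word(i,|\word|)$, whereas you assume a match at $\bar t\in[\tau_{i+n-1},\tau_{i+n})$ and exhibit the corresponding window of $\word$ as an explicit element of that intersection with witness $\pval'=\bar\pval$; the two decompositions you glue (an extension of $\word(i,j)-t$ on one side, $n$ letters followed by a shifted $\Lg_{-\$}$-word followed by a tail on the other) are exactly the ones implicit in the paper's chain of implications. The one substantive point you flag but do not resolve --- the boundary $\bar t=\tau_{i+n-1}$, where your shift $t''=\bar t-\tau_{i+n-1}$ equals $0$ rather than being strictly positive as \cref{def:KMP} demands --- is a genuine edge case, but it is not a defect of your argument alone: the paper's proof silently commits the same step when it replaces $t>0$ by $t\in[\tau_{i+n-1},\tau_{i+n})$ while switching from non-absorbing to absorbing concatenation, so the issue lies in the mismatch between the strict inequality in \cref{def:KMP} and the left-closed interval in the lemma statement, not in your construction. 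The remaining degenerate cases you list (empty tails when $K=j$ or $K=j'$, a match consisting only of \$) are harmless provided $\mathcal{T}(\Actions)$ and $\Lg_{-\$}(\valuate{\A}{\bar\pval})$ are taken to contain the empty timed word, which is the natural reading.
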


\begin{proof}
 For any $(\loc, \pval)\in C_{i,j}$ satisfying $\pval\not\in\pvalsi{\loc,n}$, by definition of $C_{i,j}$ and $\pvalsi{\loc,n}$, we have the following.
\begin{itemize}
 \item $\exists t\in\Rnn.\, \word(i,j) - t \in \Lg(\valuate{\A_{\loc}}{\pval})$
 \item $\forall\pval'\in\PVal.\, \Lg(\valuate{\A_\loc}{\pval})\cdot\TW(\Actions)\cap \TW^n(\Actions)\cdot \{\word'' + t \mid \word''\in\Lg_{-\$}(\valuate{\A}{\pval'}),t>0\}\cdot\TW(\Actions)=\emptyset$
\end{itemize}

 Therefore, we have the following.
 \small
 \begin{align*}
  &\exists(\loc, \pval)\in C_{i,j}.\,\pval\not\in\pvalsi{\loc,n}\\
  \Rightarrow&\exists t\in\Rp.\forall \pval'\in\PVal.\, \\
  &\,(\word(i,j)-t)\cdot\TW(\Actions)\cap\TW^n(\Actions)\cdot\{\word''+t'\mid\word''\in\Lg_{-\$}(\valuate{\A}{\pval'}),t'>0\}\cdot\TW(\Actions)=\emptyset\\
  \Rightarrow&\forall \pval'\in\PVal.\,
  \word(i,j)\cdot\TW(\Actions)\cap\TW^n(\Actions)\cdot\{\word''+t\mid\word''\in\Lg_{-\$}(\valuate{\A}{\pval'}),t>0\}\cdot\TW(\Actions)=\emptyset\\
  \Rightarrow&\forall \pval'\in\PVal.\,
  \word(i,|\word|)\cdot\TW(\Actions)\cap\TW^n(\Actions)\cdot\{\word''+t\mid\word''\in\Lg_{-\$}(\valuate{\A}{\pval'}),t>0\}\cdot\TW(\Actions)=\emptyset\\
  \Rightarrow&\forall \pval'\in\PVal.\, \word(i,|\word|)\not\in\TW^n(\Actions)\cdot\{\word''+t\mid\word''\in\Lg_{-\$}(\valuate{\A}{\pval'}),t>0\}\cdot\TW(\Actions)\\
  \Rightarrow&\forall \pval'\in\PVal.\, \word(i,|\word|)\not\in\TW^n(\Actions)\circ\{\word''+t\mid\word''\in\Lg_{-\$}(\valuate{\A}{\pval'}),t\in(\tau_{i+n-1},\tau_{i+n}]\}\cdot\TW(\Actions)\\
  \Rightarrow&\forall \pval'\in\PVal.\, \word(i+n,|\word|)\not\in\{\word''+t\mid\word''\in\Lg_{-\$}(\valuate{\A}{\pval'}),t\in(\tau_{i+n-1},\tau_{i+n}]\}\cdot\TW(\Actions)\\
  \iff&\forall t\in(\tau_{i+n-1},\tau_{i+n}],\pval'\in\PVal.\, \word(i+n,|\word|)-t\not\in\{\word''\mid\word''\in\Lg_{-\$}(\valuate{\A}{\pval'})\}\cdot\TW(\Actions)\\
  \iff &\forall t\in(\tau_{i+n-1},\tau_{i+n}],t'>t,\pval'\in\PVal.\, \word|_{(t,t')}\not\in\Lg(\valuate{\A}{\pval'})\\
  \iff &((\tau_{i+n-1},\tau_{i+n}]\times\Rp\times\PVal)\cap\mathcal{M}(w,\A) = \emptyset
 \end{align*}
\end{proof}

Then, the proof of \cref{theorem:KMP} is as follows.

\begin{proof}
 By definition of $C_{i,j}$ in \cref{lemma:KMP}, for any subsequence $\word(i,j)$ of $\word$ and $(\loc, \pval)\in\Loc\times\PVal$,
 $\exists t\in\Rnn.\,\word(i,j)-t\in\Lg(\valuate{\A_{\loc}}{\pval})$ implies $(\loc, \pval)\in C_{i,j}$.
 By definition of $\KMPSkipFunc$, for any $(\loc, \pval)\in C_{i,j}$ and $n\in\{1,2,\ldots,\KMPSkipFunc(\loc, \{\pval\})-1\}$, we have $\pval\not\in\pvalsi{i,n}$. Therefore, \cref{theorem:KMP} holds because of \cref{lemma:KMP}.
\end{proof}

\paragraph{Discussion}
Although $V_{\loc,n}$ can be computed before the matching trials, \LongVersion{the KMP-style skip value function }$\KMPSkipFunc$ requires checking $V \subseteq V_{\loc,n}$ after each matching trial, which means a polyhedral inclusion test in $|\Param|+2$ dimensions.
To reduce this runtime overhead, we define the \emph{non-parametric} KMP-style skip value function $\KMPSkipFunc'(\loc) = \min_{\pval\in\PVal} \KMPSkipFunc\bigl(\loc, \{\pval\}\bigr)$.
For comparison, we refer to $\KMPSkipFunc$ as the \emph{parametric} KMP-style skip value function.

\subsubsection{Quick Search-style skip values}

In Quick Search-style skipping, we ignore the timing constraints and only utilize the actions.
Given an action $\action\in\Actions$, the \emph{Quick Search-style skip value function} $\QSSkipFunc$ returns the skip value $\QSSkipFunc(\action)\in\Zp$.
Before the matching trial from the $i$-th element $(\action_i,\tau_i)$, we look ahead the action $\action_{i+N-1}$, where $N$ is the length of the shortest matching.
If we observe that there is no potential matching, we also look ahead the action $\action_{i+N}$ and skip by $\QSSkipFunc(\action_{i+N})$.

\begin{example}%
 \label{example:QS}
 Let $\A$ be the PTA in \cref{figure:patterns:accel}.
 The length of the shortest matching $N$ is five.
 For any parameter valuation $\pval$ and timed word $(\action'_1,\tau_1),(\action'_2,\tau'_2),\dots,(\action'_n,\tau'_n) \in \Lg(\valuate{\A}{\pval})$,
 we have $\action'_{1} \in \{\styleact{g_1}, \styleact{\text{rpmHigh}}\}$,
 $\action'_{2} \in \{\styleact{g_1}, \styleact{g_2}, \styleact{\text{rpmHigh}}\}$,
 $\action'_{3} \in \{\styleact{g_2}, \styleact{g_3}, \styleact{\text{rpmHigh}}\}$,
 $\action'_{4} \in \{\styleact{g_3}, \styleact{g_4}, \styleact{\text{rpmHigh}}\}$, and
 $\action'_{5} \in \{\styleact{g_4}, \styleact{\text{rpmHigh}}\}$.

 Suppose we are about to start the matching trial from a point between $\tau_{0}$ and $\tau_{1}$ (\ie{} $i = 1$) and the actions in the timed word $\word$ against pattern matching are
 $a_1 = \styleact{g_1}$, $a_2 = \styleact{g_2}$, $a_3 = \styleact{g_1}$,  $a_4 = \styleact{g_2}$, $a_5 = \styleact{g_1}$, and  $a_6 = \styleact{g_2}$.
 First, we look ahead the action $a_{i + N - 1} = a_{5} = \styleact{g_1}$ and check if $\styleact{g_1}$ can be the 5th action in a matching, which is not the case because the 5th action in a matching must be in $\{\styleact{g_4}, \styleact{\text{rpmHigh}}\}$.
 Then, we move to the position where we can have a matching according to $a_{1 + N} = a_{6}$.
 Since $\action_{6} = \styleact{g_2}$ cannot be the 4th nor 5th actions in a matching but can be the 3rd action in a matching, we move to the position where $\action_{6}$ is the 3rd action in the matching trial, \ie{} $i = 4$.
 Therefore, the Quick Search-style skip value is $\QSSkipFunc(\styleact{g_2}) = 4 - 1 = 3$.
\end{example}

We formalize this idea with the \emph{untimed} language of PTAs.
For $i \in \{1,2,\dots,|\word|\}$, \LongVersion{the subsequence }$\word(i,|\word|)$ is overapproximated by
$\Actions^N \action_{i+N} \Actions^*$.
For $n\in\Zp$, the matching from $i+n$ is overapproximated by 
$\bigcup_{\pval \in \PVal} \Actions^n \untimed{ \Lg_{-\$}(\pval(\A))}$.
Thus, we have no matching from $i + n$ if for any $\pval \in \PVal$, we have 
$\Actions^N \action_{i+N} \Actions^* \cap \Actions^n \untimed{ \Lg_{-\$}(\pval(\A))} = \emptyset$.
The definition of the Quick Search-style skip value function $\QSSkipFunc$ is as follows.

\begin{definition}
 [$\QSSkipFunc$]
 For a PTA $\A = (\Actions, \Loc, \locinit, \LocFinal, \Clock, \Param, %
 \Edges)$,
 the \emph{Quick-Search-style skip value function} $\QSSkipFunc\colon\Actions\to\Zp$ is as follows, where $N\in\Zp$ is $N=\min\{|\word| \mid \word\in\bigcup_{\pval\in\PVal}\Lg_{-\$}(\valuate{\A}{\pval})\}$.
 \[
  \QSSkipFunc(\action) = \min\bigl\{n\in \Zp \,\bigm|\, \exists \pval \in \PVal.\,  \Actions^N \action \Actions^*\cap\Actions^n \untimed{\Lg_{-\$}(\valuate{\A}{\pval})} \ne \emptyset\bigr\}
 \]
\end{definition}

The construction of~\LongVersion{the Quick Search-style skip value function} $\QSSkipFunc$ is by reachability emptiness of PTAs, \ie{} the emptiness of the valuation set reaching a given location.

\paragraph{Soundness}
For the Quick Search-style skip value function $\QSSkipFunc$, we have the following soundness result, and the use of $\QSSkipFunc$ does not change the result of parametric timed pattern matching.

\begin{theorem}
 [soundness of $\QSSkipFunc$]%
 \label{theorem:QS}
 Let $\A$ be a PTA $\A = (\Actions, \Loc, \locinit, \LocFinal, \Clock, \Param, %
 \Edges)$,
 let $\word=(\overline{a}, \overline{\tau})\in\TW(\Actions)$, and
 let $N=\min\{|\word|\mid \word\in\bigcup_{\pval\in\PVal}\Lg_{-\$}(\valuate{\A}{\pval})\}$.
 For any index $i\in\{1,2,\dots,|\word|\}$ of $\word$ and for any $m\in\{1,2,\dots,\QSSkipFunc(\action_{i+N})-1\}$,
 we have $((\tau_{i+m-1},\tau_{i+m}]\times\Rp\times\PVal)\cap\mathcal{M}(w,\mathcal{A})=\emptyset$.
\end{theorem}

First, we prove the following lemma.

\begin{lemma}
 \label{lemma:QS}
 Let $\A$ be a PTA $\A = (\Actions, \Loc, \locinit, \LocFinal, \Clock, \Param, %
 \Edges)$,
 let $\word = (\overline{a}, \overline{\tau})\in\TW(\Actions)$, and
 let $N=\min\{|\word|\mid \word\in\bigcup_{\pval\in\PVal}\Lg_{-\$}(\valuate{\A}{\pval})\}$.
 For any index $i\in\{1,2,\dots,|\word|\}$ of $\word$ and for any $m\in\{1,2,\dots,N\}$,
 if $a_{i+N} \ne a'_{N-m+1}$ holds for any $(\overline{a'},\overline{\tau'})\in\bigcup_{\pval\in\PVal}\Lg(\valuate{\A}{\pval})$, we have $((\tau_{i+m-1},\tau_{i+m}]\times\Rp\times\PVal)\cap\mathcal{M}(\word,\A)=\emptyset$.
\end{lemma}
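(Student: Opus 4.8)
The plan is to prove the statement by contraposition. Suppose some triple $(t,t',\pval)$ lies in $\bigl([\tau_{i+m-1},\tau_{i+m})\times\Rp\times\PVal\bigr)\cap\mathcal{M}(\word,\A)$; I will exhibit a timed word $\word'=(\overline{a'},\overline{\tau'})\in\bigcup_{\pval\in\PVal}\Lg(\valuate{\A}{\pval})$ whose $(N-m+1)$-st action equals $a_{i+N}$, which contradicts the hypothesis of the lemma. The degenerate case $N=0$ is vacuous, since then $m$ ranges over the empty set, so I may assume $N\geq 1$.

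First I would unfold the definitions. From $(t,t',\pval)\in\mathcal{M}(\word,\A)$ we have $\word|_{(t,t')}\in\Lg(\valuate{\A}{\pval})$; write $\word'=\word|_{(t,t')}=(\overline{a'},\overline{\tau'})$. Since $\tau_{i+m-1}\leq t<\tau_{i+m}$, the definition of a timed word segment gives $\word'=(\word(i+m,j)-t)\circ(\$,t'-t)$, where $j$ is the index with $\tau_j<t'\leq\tau_{j+1}$. Reading off the actions, $a'_k=a_{i+m+k-1}$ for every $k\in\{1,\dots,j-i-m+1\}$, and $|\word'|=j-i-m+2$.

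Next I would invoke the minimality of $N$ to lower-bound $|\word'|$: since $\word'\in\Lg(\valuate{\A}{\pval})$, its prefix $\word'(1,|\word'|-1)$ belongs to $\Lg_{-\$}(\valuate{\A}{\pval})$, hence to $\bigcup_{\pval\in\PVal}\Lg_{-\$}(\valuate{\A}{\pval})$, so $|\word'|-1\geq N$. Therefore $j=i+m+|\word'|-2\geq i+m+N-1\geq i+N$, using $m\geq 1$. In particular $N-m+1$ lies in $\{1,\dots,N\}\subseteq\{1,\dots,j-i-m+1\}$, so $a'_{N-m+1}=a_{i+m+(N-m+1)-1}=a_{i+N}$, and $\word'\in\bigcup_{\pval\in\PVal}\Lg(\valuate{\A}{\pval})$. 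This contradicts the assumption that $a_{i+N}\neq a'_{N-m+1}$ for every such word, so the intersection is empty.

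The argument is essentially index bookkeeping, and the only point requiring care is keeping the two offsets apart: $m$ is how far forward we move the candidate start of the match, while $N$ is the fixed look-ahead distance, and the hypotheses $1\leq m\leq N$ together with the length bound $|\word'|\geq N+1$ are exactly what place $a_{i+N}$ strictly among the first $N$ actions of the matched segment. I would also note that $i+N$ must be a valid index of $\word$; when the lemma is applied in \cref{alg:online_with_skip} this is guaranteed by the loop guard $i\leq|\word|-N+1$. Finally, \cref{theorem:QS} follows from \cref{lemma:QS} exactly as \cref{theorem:KMP} follows from \cref{lemma:KMP}: by definition of $\QSSkipFunc$, the premise $m\in\{1,\dots,\QSSkipFunc(a_{i+N})-1\}$ says precisely that no word of $\bigcup_{\pval\in\PVal}\Lg(\valuate{\A}{\pval})$ has $a_{i+N}$ as its $(N-m+1)$-st action, which is the hypothesis of \cref{lemma:QS}.
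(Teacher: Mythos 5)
Your proof is correct and is essentially the paper's own argument read in the contrapositive direction: the paper derives $\untimed{\{\word(i+m,|\word|)\}} \not\subseteq \bigcup_{\pval\in\PVal}\untimed{\Lg_{-\$}(\valuate{\A}{\pval})}\,\Actions^*$ from the hypothesis and then unwinds definitions to conclude emptiness, while you assume a match and exhibit the offending word --- the same index computation $a'_{N-m+1}=a_{i+N}$ (via $|\word'|-1\geq N$ and $1\leq m\leq N$) underlies both, and you merely make explicit a step the paper leaves implicit.
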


\begin{proof}
 If $a_{i+N} \ne a'_{N-m+1}$ holds for any $(\overline{a'},\overline{\tau'})\in\bigcup_{\pval\in\PVal}\Lg(\valuate{\A}{\pval})$, we have the following.
 \begin{displaymath}
  \untimed{\{\word(i+m,|\word|)\}} \not\subseteq \bigcup_{\pval\in\PVal} \untimed{\Lg_{-\$}(\valuate{\A}{\pval})} \Actions^*
 \end{displaymath}
 \Cref{lemma:QS} is proved by the following.

 \begin{align*}
 & \untimed{\{\word(i+m,|\word|)\}} \not\subseteq \bigcup_{\pval\in\PVal} \untimed{\Lg_{-\$}(\valuate{\A}{\pval})} \Actions^*\\
  \Rightarrow& \forall t\in(\tau_{i+m-1},\tau_{i+m}], \pval\in\PVal.\,
  \word(i+m,|\word|)-t \not\in \bigcup_{\pval\in\PVal} \Lg_{-\$}(\valuate{\A}{\pval}) \cdot \TW(\Actions)\\
  \Rightarrow&\forall t\in(\tau_{i+m-1},\tau_{i+m}], t'>t,\pval\in\PVal.\,
  \word|_{(t,t')} \not\in \bigcup_{\pval\in\PVal} \Lg(\valuate{\A}{\pval})\\
  \iff&((\tau_{i+m-1},\tau_{i+m}]\times\Rp\times\PVal)\cap\mathcal{M}(\word,\A)=\emptyset
 \end{align*}
\end{proof}

Then, the proof of \cref{theorem:QS} is as follows.

\begin{proof}
 Since $\Actions^N \action \Actions^*\cap\Actions^n \untimed{\Lg_{-\$}(\valuate{\A}{\pval})} \ne \emptyset$ holds for any $n\geq N+1$, we have $\QSSkipFunc(a_{i+n}) -1 \leq N$.
 By definition of $\QSSkipFunc$, $m<\QSSkipFunc(a_{i+N})$ implies the following.
 \begin{displaymath}
  \forall\pval\in\PVal.\, \Actions^N a_{i+N} \Actions^*\cap \Actions^m \untimed{\Lg_{-\$}(\valuate{\A}{\pval})} = \emptyset
 \end{displaymath}
 Therefore, \cref{lemma:QS} implies \cref{theorem:QS}.
\end{proof}

\subsubsection{Skipping enhanced algorithm}
\label{subsec:skipping_algorith}

\cref{alg:online_with_skip} shows an improvement of \cref{alg:online_no_skip} enhanced by skipping.
The loop in \crefrange{alg_line:no_skip:big_for_beginning}{alg_line:no_skip:big_for_end} of \cref{alg:online_no_skip} is used
 in the matching trial, \ie{} \crefrange{line:leftToRightMatchingInTimedFJS}{alg_line:online_with_skip:end_trial} of \cref{alg:online_with_skip}.
Before each matching trial, we check if the $i+N$-th action $a_{i+N}$ of the timed word $\word=(\overline{a},\overline{\tau})$ can be the $N$-th action of a matching.
In our implementation, for the efficiency, we compute the set of the actions that can be the $N$-th action of a matching beforehand.
If there is no potential matching from the $i$-th element of $\word$, we increase the counter~$i$ using the Quick Search-style skip value function $\QSSkipFunc$.
After reading the $i$-th element $(a_i, \tau_i)$ of $\word$, \cref{alg:online_with_skip} does not immediately output the matching over the available prefix $(a_1, \tau_1), (a_2, \tau_2),\dots,(a_i, \tau_i)$ of $\word$, but it still outputs the matching before obtaining the entire timed word with some delay.
At \cref{alg_line:online_with_skip:kmp_style_skipping}, we increase the counter~$i$ using the parametric KMP-style skip value $\KMPSkipFunc(\loc, \pvals)$.
We can employ the non-parametric KMP-style skip value by replacing $\KMPSkipFunc(\loc, \pvals)$ with $\KMPSkipFunc'(\loc)$.

\begin{algorithm}[t]
 \caption{\small Parametric timed pattern matching with parametric skipping}
 \label{alg:online_with_skip}
 \DontPrintSemicolon
 \newcommand{\myCommentFont}[1]{\texttt{\footnotesize{#1}}}
 \SetCommentSty{myCommentFont}
 \KwIn{A timed word $\word$ and a PTA $\A = (\Actions, \Loc, \locinit, \LocFinal, \Clock, \Param, %
 \Edges)$}
  \KwOut{$Z$ is the match set $\mathcal{M} (\word,\A)$}
  $i \gets 1$
  \tcp*{$i$ is the  position in $\word$ of the beginning of the current matching trial}
 $N=\min\{|\word|\mid \word\in\bigcup_{\pval\in\PVal}\Lg_{-\$}(\valuate{\A}{\pval})\}$ \;
  \While{$i \leq |\word|-N+1$}{
  \While{$\forall\, \pval\in\PVal,(\overline{a'},\overline{\tau'}) \in \Lg(\valuate{\A}{\pval}).\, a_{i+N-1}\neq a'_{N}$}{
 \tcp*{Try matching the $N$-th action of $\Lg(\valuate{\A}{\pval})$}
 $i \gets i + \QSSkipFunc(a_{i + N})$
  \tcp*{Quick Search-style skipping}
   \If{$i > |\word| - N + 1$}{
   \Return
  }
 }
 $Z\gets Z\cup \{(t,t',\pval) \in (\tau_{i-1},\tau_{i}] \times [\tau_{i-1},\infty) \times\PVal\mid \word|_{(t,t')} \in \Lg(\valuate{\A}{\pval})\}$
  \label{line:leftToRightMatchingInTimedFJS}
 \tcp*{Try matching}
 $j \gets \max\{j \in \{i,i+1,\dots,|\word|\} \mid \exists \loc \in \Loc,\pval\in\PVal,t\in\Rp.\, \word(i,j)-t \in \Lg(\valuate{\A_\loc}{\pval})\}$\;
 $C \gets \{(\loc, \pvals) \in \Loc\times\powerset{\PVal} \mid \forall\pval\in\pvals,\exists t\in\Rp.\, \word(i,j)-t \in \Lg(\valuate{\A_\loc}{\pval})\}$\;
  \label{alg_line:online_with_skip:end_trial}
 $i \gets i + \max_{(\loc, \pvals) \in C}{\KMPSkipFunc(\loc, \pvals)}$
  \label{alg_line:online_with_skip:kmp_style_skipping}
  \tcp*{Parametric KMP-style skipping}
 }
 $Z\gets Z\cup \{(t,t',\pval) \in (\tau_{|\word|},\infty) \times [\tau_{|\word|},\infty) \times\PVal\mid \word|_{(t,t')} \in \Lg(\valuate{\A}{\pval})\}$\;
\end{algorithm}
\subsection{Experiments}\label{ss:adhoc-experiments}

We implemented our dedicated algorithms for parametric timed pattern matching in a tool \pmonaa.
We implemented the following three algorithms: the online algorithm without skipping (\cref{alg:online_no_skip}, referred as ``no skip''); the online algorithm with parametric skipping (\cref{alg:online_with_skip}, referred as ``parametric skip''); and the online algorithm with non-parametric skipping (\cref{alg:online_with_skip} where $\KMPSkipFunc(\loc, \pvals)$ at~\cref{alg_line:online_with_skip:kmp_style_skipping} is replaced with $\KMPSkipFunc'(\loc)$, referred as ``non-parametric skip''). 
We note that in \pmonaa{}, the definition of timed word segments is slightly different from \cref{subsection:timed_word}. Namely, we use the timing constraints $\tau_{i-1} \leq t < \tau_{i}$ and $\tau_{j} < t \leq \tau_{j+1}$ instead of $\tau_{i-1} < t \leq \tau_{i}$ and $\tau_{j} \leq t < \tau_{j+1}$. This difference is minor and does not affect the performance evaluation.

In the skip value computation, we use reachability synthesis for PTAs.
Since reachability synthesis is intractable in general (the emptiness problem, \ie{} the (non-)existence of a valuation reaching a given location, is undecidable~\cite{AHV93}), we use the following overapproximation: after investigating 100~configurations, we speculate that all the inconclusive parameter valuations are reachable parameter valuations.
We remark that this overapproximation does not affect the correctness of parametric timed pattern matching, as it potentially \emph{decreases} the skip value.
We conducted experiments to answer the following research questions.

\begin{description}
 \item[RQ1] Which is the fastest algorithm for parametric timed pattern matching?
 \item[RQ2] Why is parametric timed pattern matching slower than non-parametric timed pattern matching? Namely, is it purely because of the difficulty of the problem itself or is it mainly because of the general implementation and data structure required by the general problem setting?
 \item[RQ3] How large is the overhead of the skip value computation? Namely, is it small and acceptable?
\end{description}

We implemented \pmonaa in C++ and we compiled them using GCC 7.3.0.
We conducted the experiments on an Amazon EC2 c4.large instance (2.9\,GHz Intel Xeon E5-2666 v3, 2 vCPUs, and 3.75\,GiB RAM) that runs Ubuntu 18.04 LTS (64\,bit).\footnote{Experiment data can be found on
\href{https://github.com/MasWag/ParamMONAA/blob/master/doc/NFM2019.md}{\nolinkurl{github.com/MasWag/ParamMONAA/blob/master/doc/NFM2019.md}}.}

\begin{figure}[t]
	\centering
	\footnotesize

\scalebox{0.87}{	\begin{tikzpicture}[shorten >=1pt,node distance=2.5cm,on grid,auto] 
		\node[location,initial] (s_0)  {$\loc_1$}; 
		\node[location] (s_1) [right=of s_0] {$\loc_2$}; 
		\node[location] (s_2) [right=of s_1] {$\loc_3$};
		\node[location] (s_3) [right=of s_2] {$\loc_4$};
		\node[location,accepting] (s_4) [right=of s_3] {$\loc_5$};
		\path[->] 
			(s_0) edge [above] node[align=center] {$\styleact{a}$\\$\styleclock{x} := 0$} (s_1)
			(s_1) edge node[above,align=center] {$\styleact{a}$\\ $\styleclock{x} > 1$} node[below] {$\styleclock{x} := 0$} (s_2)
			(s_2) edge node[above,align=center] {$\styleact{a}$\\$\styleclock{x} < \styleparam{\param}$} (s_3)
			(s_3) edge node[above,align=center] {$\styleact{\$}$} (s_4)
		;
	\end{tikzpicture}}
  \caption{\textsc{OnlyTiming}: the parameter $\styleparam{\param}$ is substituted to $1$ in \textsc{OnlyTiming-np}.}
	\label{figure:patterns:onlytiming}
\end{figure}

\Cref{figure:patterns} shows the pattern PTAs we used in the experiments.
We reuse the benchmarks \textsc{Gear}, \textsc{Accel}, and \textsc{Blowup} from \cref{section:PTMC} as well as the new benchmark \textsc{OnlyTiming}.
The timed words for \textsc{Gear} and \textsc{Accel} are generated by the automatic transmission system model in~\cite{HAF14}.
\textsc{Blowup} and \textsc{OnlyTiming} are toy examples.
\textsc{Blowup} shows the worst case situation for parametric timed pattern matching (see \cref{ss:XP:blowup}).
In \textsc{OnlyTiming}, the parametric skip values are greater than the non-parametric skip values.
In \cref{subsec:parametric_vs_non_parametric,subsec:overhead_skip_values}, we also used the non-parametric variants \textsc{Gear-np}, \textsc{Accel-np}, \textsc{Blowup-np}, and \textsc{OnlyTiming-np} where the parameters are substituted to specific concrete values.

\subsubsection{RQ1: Overall execution time}

\begin{table}[tbp]
{
	\centering
 \begin{minipage}[t]{0.5\textwidth}
  \caption{Execution time for \textsc{Gear} [s]}
 \label{table:gear-result}
  \scalebox{0.87}{.\begin{tabular}{|r|r|r|r|r|}
\hline
\cellHeaderColor{}&\cellHeaderColor{}&\cellHeader{Non-Param.}&\cellHeader{Param.}&\cellHeaderColor{}\\
\multirow{-2}{*}{\cellHeader{$|\word|$}}&\multirow{-2}{*}{\cellHeader{No Skip}}&\cellHeader{Skip}&\cellHeader{Skip}&\multirow{-2}{*}{\cellHeader{\imitator}}\\\hline
1467&0.04&0.05&0.05&1.781
\\\hline
2837&0.0725&0.0805&0.09&3.319
\\\hline
4595&0.124&0.13&0.1405&5.512
\\\hline
5839&0.1585&0.156&0.17&7.132
\\\hline
7301&0.201&0.193&0.2115&8.909
\\\hline
8995&0.241&0.2315&0.2505&10.768
\\\hline
10315&0.2815&0.269&0.2875&12.778
\\\hline
11831&0.322&0.301&0.325&14.724
\\\hline
13183&0.3505&0.3245&0.353&16.453
\\\hline
14657&0.392&0.361&0.395&18.319\\
\hline
\end{tabular}
}
 \end{minipage}
 \AuthorVersion{%

 }%
 \begin{minipage}[t]{0.5\textwidth}
  \caption{Execution time for \textsc{Accel} [s]}
 \label{table:accel-result}
 \scalebox{0.87}{\begin{tabular}{|r|r|r|r|r|}
\hline
\cellHeaderColor{}&\cellHeaderColor{}&\cellHeader{Non-Param.}&\cellHeader{Param.}&\cellHeaderColor{}\\
\multirow{-2}{*}{\cellHeader{$|\word|$}}&\multirow{-2}{*}{\cellHeader{No Skip}}&\cellHeader{Skip}&\cellHeader{Skip}&\multirow{-2}{*}{\cellHeader{\imitator}}\\\hline
2559&0.03&0.0515&0.06&2.332
\\\hline
4894&0.0605&0.0605&0.0705&4.663
\\\hline
7799&0.1005&0.071&0.08&7.532
\\\hline
10045&0.13&0.08&0.09&9.731
\\\hline
12531&0.161&0.09&0.1&12.503
\\\hline
15375&0.1985&0.1005&0.113&15.583
\\\hline
17688&0.2265&0.1095&0.1215&17.754
\\\hline
20299&0.261&0.115&0.1325&21.040
\\\hline
22691&0.288&0.121&0.143&23.044
\\\hline
25137&0.3205&0.1315&0.159&25.815
\\\hline
\end{tabular}
}
 \end{minipage}
 \AuthorVersion{%

 }%
 \begin{minipage}[t]{0.5\textwidth}
  \caption{\small Execution time for \textsc{Blowup} [s]}
 \label{table:blowup-result}
 \scalebox{0.80}{\begin{tabular}{|r|r|r|r|c|}
\hline
\cellHeaderColor{}&\cellHeaderColor{}&\cellHeader{Non-Param.}&\cellHeader{Param.}&\cellHeaderColor{}\\
\multirow{-2}{*}{\cellHeader{$|\word|$}}&\multirow{-2}{*}{\cellHeader{No Skip}}&\cellHeader{Skip}&\cellHeader{Skip}&\multirow{-2}{*}{\cellHeader{\imitator}}\\\hline
2000&66.75&68.0125&67.9735&OutOfMemory
\\\hline
4000&267.795&271.642&269.084&OutOfMemory
\\\hline
6000&601.335&611.782&607.58&OutOfMemory
\\\hline
8000&1081.42&1081.25&1079&OutOfMemory
\\\hline
10000&1678.15&1688.22&1694.53&OutOfMemory
\\\hline
\end{tabular}
}
 \end{minipage}
 \AuthorVersion{%

 }%
 \begin{minipage}[t]{0.53\textwidth}
  \caption{\small Execution time for \textsc{OnlyTiming} [s]}
 \label{table:only-timing-result}
 \scalebox{0.87}{\begin{tabular}{|r|r|r|r|r|}
 \hline
\cellHeaderColor{}&\cellHeaderColor{}&\cellHeader{Non-Param.}&\cellHeader{Param.}&\cellHeaderColor{}\\
\multirow{-2}{*}{\cellHeader{$|\word|$}}&\multirow{-2}{*}{\cellHeader{No Skip}}&\cellHeader{Skip}&\cellHeader{Skip}&\multirow{-2}{*}{\cellHeader{\imitator}}\\\hline
1000&0.0995&0.1305&0.11&1.690
\\\hline
2000&0.191&0.23&0.191&3.518
\\\hline
3000&0.2905&0.3265&0.273&5.499
\\\hline
4000&0.3905&0.426&0.3525&7.396
\\\hline
5000&0.488&0.5225&0.4325&9.123
\\\hline
6000&0.588&0.6235&0.517&11.005
\\\hline
\end{tabular}
}
 \end{minipage}

 }
\end{table}

\begin{figure}[tbp]
 \scalebox{0.5}{\begin{tikzpicture}[gnuplot]
\tikzset{every node/.append style={font={\fontsize{18.0pt}{21.6pt}\selectfont}}}
\path (0.000,0.000) rectangle (12.500,8.750);
\gpcolor{color=gp lt color border}
\gpsetlinetype{gp lt border}
\gpsetdashtype{gp dt solid}
\gpsetlinewidth{1.00}
\draw[gp path] (2.705,1.772)--(2.885,1.772);
\draw[gp path] (11.506,1.772)--(11.326,1.772);
\node[gp node right] at (2.374,1.772) {$0$};
\draw[gp path] (2.705,2.575)--(2.885,2.575);
\draw[gp path] (11.506,2.575)--(11.326,2.575);
\node[gp node right] at (2.374,2.575) {$0.05$};
\draw[gp path] (2.705,3.378)--(2.885,3.378);
\draw[gp path] (11.506,3.378)--(11.326,3.378);
\node[gp node right] at (2.374,3.378) {$0.1$};
\draw[gp path] (2.705,4.181)--(2.885,4.181);
\draw[gp path] (11.506,4.181)--(11.326,4.181);
\node[gp node right] at (2.374,4.181) {$0.15$};
\draw[gp path] (2.705,4.984)--(2.885,4.984);
\draw[gp path] (11.506,4.984)--(11.326,4.984);
\node[gp node right] at (2.374,4.984) {$0.2$};
\draw[gp path] (2.705,5.786)--(2.885,5.786);
\draw[gp path] (11.506,5.786)--(11.326,5.786);
\node[gp node right] at (2.374,5.786) {$0.25$};
\draw[gp path] (2.705,6.589)--(2.885,6.589);
\draw[gp path] (11.506,6.589)--(11.326,6.589);
\node[gp node right] at (2.374,6.589) {$0.3$};
\draw[gp path] (2.705,7.392)--(2.885,7.392);
\draw[gp path] (11.506,7.392)--(11.326,7.392);
\node[gp node right] at (2.374,7.392) {$0.35$};
\draw[gp path] (2.705,8.195)--(2.885,8.195);
\draw[gp path] (11.506,8.195)--(11.326,8.195);
\node[gp node right] at (2.374,8.195) {$0.4$};
\draw[gp path] (2.705,1.772)--(2.705,1.952);
\draw[gp path] (2.705,8.195)--(2.705,8.015);
\node[gp node center] at (2.705,1.218) {$0$};
\draw[gp path] (3.805,1.772)--(3.805,1.952);
\draw[gp path] (3.805,8.195)--(3.805,8.015);
\node[gp node center] at (3.805,1.218) {$20$};
\draw[gp path] (4.905,1.772)--(4.905,1.952);
\draw[gp path] (4.905,8.195)--(4.905,8.015);
\node[gp node center] at (4.905,1.218) {$40$};
\draw[gp path] (6.005,1.772)--(6.005,1.952);
\draw[gp path] (6.005,8.195)--(6.005,8.015);
\node[gp node center] at (6.005,1.218) {$60$};
\draw[gp path] (7.106,1.772)--(7.106,1.952);
\draw[gp path] (7.106,8.195)--(7.106,8.015);
\node[gp node center] at (7.106,1.218) {$80$};
\draw[gp path] (8.206,1.772)--(8.206,1.952);
\draw[gp path] (8.206,8.195)--(8.206,8.015);
\node[gp node center] at (8.206,1.218) {$100$};
\draw[gp path] (9.306,1.772)--(9.306,1.952);
\draw[gp path] (9.306,8.195)--(9.306,8.015);
\node[gp node center] at (9.306,1.218) {$120$};
\draw[gp path] (10.406,1.772)--(10.406,1.952);
\draw[gp path] (10.406,8.195)--(10.406,8.015);
\node[gp node center] at (10.406,1.218) {$140$};
\draw[gp path] (11.506,1.772)--(11.506,1.952);
\draw[gp path] (11.506,8.195)--(11.506,8.015);
\node[gp node center] at (11.506,1.218) {$160$};
\draw[gp path] (2.705,8.195)--(2.705,1.772)--(11.506,1.772)--(11.506,8.195)--cycle;
\node[gp node center,rotate=-270] at (0.496,4.983) {Execution Time [s]};
\node[gp node center] at (7.105,0.387) {Number of Events [$\times 100$]};
\node[gp node right] at (9.009,7.738) {No Skip};
\gpcolor{rgb color={0.580,0.000,0.827}}
\gpsetlinewidth{3.00}
\draw[gp path] (9.340,7.738)--(10.844,7.738);
\draw[gp path] (3.512,2.414)--(4.266,2.936)--(5.233,3.763)--(5.917,4.317)--(6.721,5.000)%
  --(7.653,5.642)--(8.379,6.292)--(9.213,6.943)--(9.956,7.400)--(10.767,8.067);
\gpsetpointsize{6.00}
\gppoint{gp mark 7}{(3.512,2.414)}
\gppoint{gp mark 7}{(4.266,2.936)}
\gppoint{gp mark 7}{(5.233,3.763)}
\gppoint{gp mark 7}{(5.917,4.317)}
\gppoint{gp mark 7}{(6.721,5.000)}
\gppoint{gp mark 7}{(7.653,5.642)}
\gppoint{gp mark 7}{(8.379,6.292)}
\gppoint{gp mark 7}{(9.213,6.943)}
\gppoint{gp mark 7}{(9.956,7.400)}
\gppoint{gp mark 7}{(10.767,8.067)}
\gppoint{gp mark 7}{(10.092,7.738)}
\gpcolor{color=gp lt color border}
\node[gp node right] at (9.009,7.184) {Parametric Skip};
\gpcolor{rgb color={0.000,0.620,0.451}}
\draw[gp path] (9.340,7.184)--(10.844,7.184);
\draw[gp path] (3.512,2.575)--(4.266,3.217)--(5.233,4.028)--(5.917,4.502)--(6.721,5.168)%
  --(7.653,5.794)--(8.379,6.389)--(9.213,6.991)--(9.956,7.440)--(10.767,8.115);
\gppoint{gp mark 8}{(3.512,2.575)}
\gppoint{gp mark 8}{(4.266,3.217)}
\gppoint{gp mark 8}{(5.233,4.028)}
\gppoint{gp mark 8}{(5.917,4.502)}
\gppoint{gp mark 8}{(6.721,5.168)}
\gppoint{gp mark 8}{(7.653,5.794)}
\gppoint{gp mark 8}{(8.379,6.389)}
\gppoint{gp mark 8}{(9.213,6.991)}
\gppoint{gp mark 8}{(9.956,7.440)}
\gppoint{gp mark 8}{(10.767,8.115)}
\gppoint{gp mark 8}{(10.092,7.184)}
\gpcolor{color=gp lt color border}
\node[gp node right] at (9.009,6.630) {Non-Parametric Skip};
\gpcolor{rgb color={0.337,0.706,0.914}}
\draw[gp path] (9.340,6.630)--(10.844,6.630);
\draw[gp path] (3.512,2.575)--(4.266,3.065)--(5.233,3.859)--(5.917,4.277)--(6.721,4.871)%
  --(7.653,5.489)--(8.379,6.091)--(9.213,6.605)--(9.956,6.983)--(10.767,7.569);
\gppoint{gp mark 4}{(3.512,2.575)}
\gppoint{gp mark 4}{(4.266,3.065)}
\gppoint{gp mark 4}{(5.233,3.859)}
\gppoint{gp mark 4}{(5.917,4.277)}
\gppoint{gp mark 4}{(6.721,4.871)}
\gppoint{gp mark 4}{(7.653,5.489)}
\gppoint{gp mark 4}{(8.379,6.091)}
\gppoint{gp mark 4}{(9.213,6.605)}
\gppoint{gp mark 4}{(9.956,6.983)}
\gppoint{gp mark 4}{(10.767,7.569)}
\gppoint{gp mark 4}{(10.092,6.630)}
\gpcolor{color=gp lt color border}
\gpsetlinewidth{1.00}
\draw[gp path] (2.705,8.195)--(2.705,1.772)--(11.506,1.772)--(11.506,8.195)--cycle;
\gpdefrectangularnode{gp plot 1}{\pgfpoint{2.705cm}{1.772cm}}{\pgfpoint{11.506cm}{8.195cm}}
\end{tikzpicture}}
 \scalebox{0.5}{\begin{tikzpicture}[gnuplot]
\tikzset{every node/.append style={font={\fontsize{18.0pt}{21.6pt}\selectfont}}}
\path (0.000,0.000) rectangle (12.500,8.750);
\gpcolor{color=gp lt color border}
\gpsetlinetype{gp lt border}
\gpsetdashtype{gp dt solid}
\gpsetlinewidth{1.00}
\draw[gp path] (2.705,1.772)--(2.885,1.772);
\draw[gp path] (11.506,1.772)--(11.326,1.772);
\node[gp node right] at (2.374,1.772) {$0$};
\draw[gp path] (2.705,2.690)--(2.885,2.690);
\draw[gp path] (11.506,2.690)--(11.326,2.690);
\node[gp node right] at (2.374,2.690) {$0.05$};
\draw[gp path] (2.705,3.607)--(2.885,3.607);
\draw[gp path] (11.506,3.607)--(11.326,3.607);
\node[gp node right] at (2.374,3.607) {$0.1$};
\draw[gp path] (2.705,4.525)--(2.885,4.525);
\draw[gp path] (11.506,4.525)--(11.326,4.525);
\node[gp node right] at (2.374,4.525) {$0.15$};
\draw[gp path] (2.705,5.442)--(2.885,5.442);
\draw[gp path] (11.506,5.442)--(11.326,5.442);
\node[gp node right] at (2.374,5.442) {$0.2$};
\draw[gp path] (2.705,6.360)--(2.885,6.360);
\draw[gp path] (11.506,6.360)--(11.326,6.360);
\node[gp node right] at (2.374,6.360) {$0.25$};
\draw[gp path] (2.705,7.277)--(2.885,7.277);
\draw[gp path] (11.506,7.277)--(11.326,7.277);
\node[gp node right] at (2.374,7.277) {$0.3$};
\draw[gp path] (2.705,8.195)--(2.885,8.195);
\draw[gp path] (11.506,8.195)--(11.326,8.195);
\node[gp node right] at (2.374,8.195) {$0.35$};
\draw[gp path] (2.705,1.772)--(2.705,1.952);
\draw[gp path] (2.705,8.195)--(2.705,8.015);
\node[gp node center] at (2.705,1.218) {$0$};
\draw[gp path] (4.172,1.772)--(4.172,1.952);
\draw[gp path] (4.172,8.195)--(4.172,8.015);
\node[gp node center] at (4.172,1.218) {$50$};
\draw[gp path] (5.639,1.772)--(5.639,1.952);
\draw[gp path] (5.639,8.195)--(5.639,8.015);
\node[gp node center] at (5.639,1.218) {$100$};
\draw[gp path] (7.106,1.772)--(7.106,1.952);
\draw[gp path] (7.106,8.195)--(7.106,8.015);
\node[gp node center] at (7.106,1.218) {$150$};
\draw[gp path] (8.572,1.772)--(8.572,1.952);
\draw[gp path] (8.572,8.195)--(8.572,8.015);
\node[gp node center] at (8.572,1.218) {$200$};
\draw[gp path] (10.039,1.772)--(10.039,1.952);
\draw[gp path] (10.039,8.195)--(10.039,8.015);
\node[gp node center] at (10.039,1.218) {$250$};
\draw[gp path] (11.506,1.772)--(11.506,1.952);
\draw[gp path] (11.506,8.195)--(11.506,8.015);
\node[gp node center] at (11.506,1.218) {$300$};
\draw[gp path] (2.705,8.195)--(2.705,1.772)--(11.506,1.772)--(11.506,8.195)--cycle;
\node[gp node center,rotate=-270] at (0.496,4.983) {Execution Time [s]};
\node[gp node center] at (7.105,0.387) {Number of Events [$\times 100$]};
\node[gp node right] at (9.009,7.738) {No Skip};
\gpcolor{rgb color={0.580,0.000,0.827}}
\gpsetlinewidth{3.00}
\draw[gp path] (9.340,7.738)--(10.844,7.738);
\draw[gp path] (3.456,2.323)--(4.141,2.882)--(4.993,3.616)--(5.652,4.158)--(6.381,4.727)%
  --(7.216,5.415)--(7.894,5.929)--(8.660,6.562)--(9.362,7.057)--(10.079,7.654);
\gpsetpointsize{6.00}
\gppoint{gp mark 7}{(3.456,2.323)}
\gppoint{gp mark 7}{(4.141,2.882)}
\gppoint{gp mark 7}{(4.993,3.616)}
\gppoint{gp mark 7}{(5.652,4.158)}
\gppoint{gp mark 7}{(6.381,4.727)}
\gppoint{gp mark 7}{(7.216,5.415)}
\gppoint{gp mark 7}{(7.894,5.929)}
\gppoint{gp mark 7}{(8.660,6.562)}
\gppoint{gp mark 7}{(9.362,7.057)}
\gppoint{gp mark 7}{(10.079,7.654)}
\gppoint{gp mark 7}{(10.092,7.738)}
\gpcolor{color=gp lt color border}
\node[gp node right] at (9.009,7.184) {Parametric Skip};
\gpcolor{rgb color={0.000,0.620,0.451}}
\draw[gp path] (9.340,7.184)--(10.844,7.184);
\draw[gp path] (3.456,2.873)--(4.141,3.066)--(4.993,3.240)--(5.652,3.424)--(6.381,3.607)%
  --(7.216,3.846)--(7.894,4.002)--(8.660,4.204)--(9.362,4.396)--(10.079,4.690);
\gppoint{gp mark 8}{(3.456,2.873)}
\gppoint{gp mark 8}{(4.141,3.066)}
\gppoint{gp mark 8}{(4.993,3.240)}
\gppoint{gp mark 8}{(5.652,3.424)}
\gppoint{gp mark 8}{(6.381,3.607)}
\gppoint{gp mark 8}{(7.216,3.846)}
\gppoint{gp mark 8}{(7.894,4.002)}
\gppoint{gp mark 8}{(8.660,4.204)}
\gppoint{gp mark 8}{(9.362,4.396)}
\gppoint{gp mark 8}{(10.079,4.690)}
\gppoint{gp mark 8}{(10.092,7.184)}
\gpcolor{color=gp lt color border}
\node[gp node right] at (9.009,6.630) {Non-Parametric Skip};
\gpcolor{rgb color={0.337,0.706,0.914}}
\draw[gp path] (9.340,6.630)--(10.844,6.630);
\draw[gp path] (3.456,2.717)--(4.141,2.882)--(4.993,3.075)--(5.652,3.240)--(6.381,3.424)%
  --(7.216,3.616)--(7.894,3.781)--(8.660,3.882)--(9.362,3.993)--(10.079,4.185);
\gppoint{gp mark 4}{(3.456,2.717)}
\gppoint{gp mark 4}{(4.141,2.882)}
\gppoint{gp mark 4}{(4.993,3.075)}
\gppoint{gp mark 4}{(5.652,3.240)}
\gppoint{gp mark 4}{(6.381,3.424)}
\gppoint{gp mark 4}{(7.216,3.616)}
\gppoint{gp mark 4}{(7.894,3.781)}
\gppoint{gp mark 4}{(8.660,3.882)}
\gppoint{gp mark 4}{(9.362,3.993)}
\gppoint{gp mark 4}{(10.079,4.185)}
\gppoint{gp mark 4}{(10.092,6.630)}
\gpcolor{color=gp lt color border}
\gpsetlinewidth{1.00}
\draw[gp path] (2.705,8.195)--(2.705,1.772)--(11.506,1.772)--(11.506,8.195)--cycle;
\gpdefrectangularnode{gp plot 1}{\pgfpoint{2.705cm}{1.772cm}}{\pgfpoint{11.506cm}{8.195cm}}
\end{tikzpicture}}
 \scalebox{0.5}{\begin{tikzpicture}[gnuplot]
\tikzset{every node/.append style={font={\fontsize{18.0pt}{21.6pt}\selectfont}}}
\path (0.000,0.000) rectangle (12.500,8.750);
\gpcolor{color=gp lt color border}
\gpsetlinetype{gp lt border}
\gpsetdashtype{gp dt solid}
\gpsetlinewidth{1.00}
\draw[gp path] (2.705,1.772)--(2.885,1.772);
\draw[gp path] (11.506,1.772)--(11.326,1.772);
\node[gp node right] at (2.374,1.772) {$0$};
\draw[gp path] (2.705,2.486)--(2.885,2.486);
\draw[gp path] (11.506,2.486)--(11.326,2.486);
\node[gp node right] at (2.374,2.486) {$200$};
\draw[gp path] (2.705,3.199)--(2.885,3.199);
\draw[gp path] (11.506,3.199)--(11.326,3.199);
\node[gp node right] at (2.374,3.199) {$400$};
\draw[gp path] (2.705,3.913)--(2.885,3.913);
\draw[gp path] (11.506,3.913)--(11.326,3.913);
\node[gp node right] at (2.374,3.913) {$600$};
\draw[gp path] (2.705,4.627)--(2.885,4.627);
\draw[gp path] (11.506,4.627)--(11.326,4.627);
\node[gp node right] at (2.374,4.627) {$800$};
\draw[gp path] (2.705,5.340)--(2.885,5.340);
\draw[gp path] (11.506,5.340)--(11.326,5.340);
\node[gp node right] at (2.374,5.340) {$1000$};
\draw[gp path] (2.705,6.054)--(2.885,6.054);
\draw[gp path] (11.506,6.054)--(11.326,6.054);
\node[gp node right] at (2.374,6.054) {$1200$};
\draw[gp path] (2.705,6.768)--(2.885,6.768);
\draw[gp path] (11.506,6.768)--(11.326,6.768);
\node[gp node right] at (2.374,6.768) {$1400$};
\draw[gp path] (2.705,7.481)--(2.885,7.481);
\draw[gp path] (11.506,7.481)--(11.326,7.481);
\node[gp node right] at (2.374,7.481) {$1600$};
\draw[gp path] (2.705,8.195)--(2.885,8.195);
\draw[gp path] (11.506,8.195)--(11.326,8.195);
\node[gp node right] at (2.374,8.195) {$1800$};
\draw[gp path] (2.705,1.772)--(2.705,1.952);
\draw[gp path] (2.705,8.195)--(2.705,8.015);
\node[gp node center] at (2.705,1.218) {$20$};
\draw[gp path] (3.805,1.772)--(3.805,1.952);
\draw[gp path] (3.805,8.195)--(3.805,8.015);
\node[gp node center] at (3.805,1.218) {$30$};
\draw[gp path] (4.905,1.772)--(4.905,1.952);
\draw[gp path] (4.905,8.195)--(4.905,8.015);
\node[gp node center] at (4.905,1.218) {$40$};
\draw[gp path] (6.005,1.772)--(6.005,1.952);
\draw[gp path] (6.005,8.195)--(6.005,8.015);
\node[gp node center] at (6.005,1.218) {$50$};
\draw[gp path] (7.106,1.772)--(7.106,1.952);
\draw[gp path] (7.106,8.195)--(7.106,8.015);
\node[gp node center] at (7.106,1.218) {$60$};
\draw[gp path] (8.206,1.772)--(8.206,1.952);
\draw[gp path] (8.206,8.195)--(8.206,8.015);
\node[gp node center] at (8.206,1.218) {$70$};
\draw[gp path] (9.306,1.772)--(9.306,1.952);
\draw[gp path] (9.306,8.195)--(9.306,8.015);
\node[gp node center] at (9.306,1.218) {$80$};
\draw[gp path] (10.406,1.772)--(10.406,1.952);
\draw[gp path] (10.406,8.195)--(10.406,8.015);
\node[gp node center] at (10.406,1.218) {$90$};
\draw[gp path] (11.506,1.772)--(11.506,1.952);
\draw[gp path] (11.506,8.195)--(11.506,8.015);
\node[gp node center] at (11.506,1.218) {$100$};
\draw[gp path] (2.705,8.195)--(2.705,1.772)--(11.506,1.772)--(11.506,8.195)--cycle;
\node[gp node center,rotate=-270] at (0.496,4.983) {Execution Time [s]};
\node[gp node center] at (7.105,0.387) {Number of Events [$\times 100$]};
\node[gp node right] at (9.009,7.738) {No Skip};
\gpcolor{rgb color={0.580,0.000,0.827}}
\gpsetlinewidth{3.00}
\draw[gp path] (9.340,7.738)--(10.844,7.738);
\draw[gp path] (2.705,2.010)--(4.905,2.728)--(7.106,3.918)--(9.306,5.631)--(11.506,7.760);
\gpsetpointsize{6.00}
\gppoint{gp mark 7}{(2.705,2.010)}
\gppoint{gp mark 7}{(4.905,2.728)}
\gppoint{gp mark 7}{(7.106,3.918)}
\gppoint{gp mark 7}{(9.306,5.631)}
\gppoint{gp mark 7}{(11.506,7.760)}
\gppoint{gp mark 7}{(10.092,7.738)}
\gpcolor{color=gp lt color border}
\node[gp node right] at (9.009,7.184) {Parametric Skip};
\gpcolor{rgb color={0.000,0.620,0.451}}
\draw[gp path] (9.340,7.184)--(10.844,7.184);
\draw[gp path] (2.705,2.015)--(4.905,2.732)--(7.106,3.940)--(9.306,5.622)--(11.506,7.819);
\gppoint{gp mark 8}{(2.705,2.015)}
\gppoint{gp mark 8}{(4.905,2.732)}
\gppoint{gp mark 8}{(7.106,3.940)}
\gppoint{gp mark 8}{(9.306,5.622)}
\gppoint{gp mark 8}{(11.506,7.819)}
\gppoint{gp mark 8}{(10.092,7.184)}
\gpcolor{color=gp lt color border}
\node[gp node right] at (9.009,6.630) {Non-Parametric Skip};
\gpcolor{rgb color={0.337,0.706,0.914}}
\draw[gp path] (9.340,6.630)--(10.844,6.630);
\draw[gp path] (2.705,2.015)--(4.905,2.741)--(7.106,3.955)--(9.306,5.630)--(11.506,7.796);
\gppoint{gp mark 4}{(2.705,2.015)}
\gppoint{gp mark 4}{(4.905,2.741)}
\gppoint{gp mark 4}{(7.106,3.955)}
\gppoint{gp mark 4}{(9.306,5.630)}
\gppoint{gp mark 4}{(11.506,7.796)}
\gppoint{gp mark 4}{(10.092,6.630)}
\gpcolor{color=gp lt color border}
\gpsetlinewidth{1.00}
\draw[gp path] (2.705,8.195)--(2.705,1.772)--(11.506,1.772)--(11.506,8.195)--cycle;
\gpdefrectangularnode{gp plot 1}{\pgfpoint{2.705cm}{1.772cm}}{\pgfpoint{11.506cm}{8.195cm}}
\end{tikzpicture}}
 \scalebox{0.5}{\begin{tikzpicture}[gnuplot]
\tikzset{every node/.append style={font={\fontsize{18.0pt}{21.6pt}\selectfont}}}
\path (0.000,0.000) rectangle (12.500,8.750);
\gpcolor{color=gp lt color border}
\gpsetlinetype{gp lt border}
\gpsetdashtype{gp dt solid}
\gpsetlinewidth{1.00}
\draw[gp path] (2.374,1.772)--(2.554,1.772);
\draw[gp path] (11.506,1.772)--(11.326,1.772);
\node[gp node right] at (2.043,1.772) {$0$};
\draw[gp path] (2.374,2.690)--(2.554,2.690);
\draw[gp path] (11.506,2.690)--(11.326,2.690);
\node[gp node right] at (2.043,2.690) {$0.1$};
\draw[gp path] (2.374,3.607)--(2.554,3.607);
\draw[gp path] (11.506,3.607)--(11.326,3.607);
\node[gp node right] at (2.043,3.607) {$0.2$};
\draw[gp path] (2.374,4.525)--(2.554,4.525);
\draw[gp path] (11.506,4.525)--(11.326,4.525);
\node[gp node right] at (2.043,4.525) {$0.3$};
\draw[gp path] (2.374,5.442)--(2.554,5.442);
\draw[gp path] (11.506,5.442)--(11.326,5.442);
\node[gp node right] at (2.043,5.442) {$0.4$};
\draw[gp path] (2.374,6.360)--(2.554,6.360);
\draw[gp path] (11.506,6.360)--(11.326,6.360);
\node[gp node right] at (2.043,6.360) {$0.5$};
\draw[gp path] (2.374,7.277)--(2.554,7.277);
\draw[gp path] (11.506,7.277)--(11.326,7.277);
\node[gp node right] at (2.043,7.277) {$0.6$};
\draw[gp path] (2.374,8.195)--(2.554,8.195);
\draw[gp path] (11.506,8.195)--(11.326,8.195);
\node[gp node right] at (2.043,8.195) {$0.7$};
\draw[gp path] (2.374,1.772)--(2.374,1.952);
\draw[gp path] (2.374,8.195)--(2.374,8.015);
\node[gp node center] at (2.374,1.218) {$10$};
\draw[gp path] (4.200,1.772)--(4.200,1.952);
\draw[gp path] (4.200,8.195)--(4.200,8.015);
\node[gp node center] at (4.200,1.218) {$20$};
\draw[gp path] (6.027,1.772)--(6.027,1.952);
\draw[gp path] (6.027,8.195)--(6.027,8.015);
\node[gp node center] at (6.027,1.218) {$30$};
\draw[gp path] (7.853,1.772)--(7.853,1.952);
\draw[gp path] (7.853,8.195)--(7.853,8.015);
\node[gp node center] at (7.853,1.218) {$40$};
\draw[gp path] (9.680,1.772)--(9.680,1.952);
\draw[gp path] (9.680,8.195)--(9.680,8.015);
\node[gp node center] at (9.680,1.218) {$50$};
\draw[gp path] (11.506,1.772)--(11.506,1.952);
\draw[gp path] (11.506,8.195)--(11.506,8.015);
\node[gp node center] at (11.506,1.218) {$60$};
\draw[gp path] (2.374,8.195)--(2.374,1.772)--(11.506,1.772)--(11.506,8.195)--cycle;
\node[gp node center,rotate=-270] at (0.496,4.983) {Execution Time [s]};
\node[gp node center] at (6.940,0.387) {Number of Events [$\times 100$]};
\node[gp node right] at (9.009,7.738) {No Skip};
\gpcolor{rgb color={0.580,0.000,0.827}}
\gpsetlinewidth{3.00}
\draw[gp path] (9.340,7.738)--(10.844,7.738);
\draw[gp path] (2.374,2.685)--(4.200,3.525)--(6.027,4.438)--(7.853,5.355)--(9.680,6.250)%
  --(11.506,7.167);
\gpsetpointsize{6.00}
\gppoint{gp mark 7}{(2.374,2.685)}
\gppoint{gp mark 7}{(4.200,3.525)}
\gppoint{gp mark 7}{(6.027,4.438)}
\gppoint{gp mark 7}{(7.853,5.355)}
\gppoint{gp mark 7}{(9.680,6.250)}
\gppoint{gp mark 7}{(11.506,7.167)}
\gppoint{gp mark 7}{(10.092,7.738)}
\gpcolor{color=gp lt color border}
\node[gp node right] at (9.009,7.184) {Parametric Skip};
\gpcolor{rgb color={0.000,0.620,0.451}}
\draw[gp path] (9.340,7.184)--(10.844,7.184);
\draw[gp path] (2.374,2.781)--(4.200,3.525)--(6.027,4.277)--(7.853,5.006)--(9.680,5.740)%
  --(11.506,6.516);
\gppoint{gp mark 8}{(2.374,2.781)}
\gppoint{gp mark 8}{(4.200,3.525)}
\gppoint{gp mark 8}{(6.027,4.277)}
\gppoint{gp mark 8}{(7.853,5.006)}
\gppoint{gp mark 8}{(9.680,5.740)}
\gppoint{gp mark 8}{(11.506,6.516)}
\gppoint{gp mark 8}{(10.092,7.184)}
\gpcolor{color=gp lt color border}
\node[gp node right] at (9.009,6.630) {Non-Parametric Skip};
\gpcolor{rgb color={0.337,0.706,0.914}}
\draw[gp path] (9.340,6.630)--(10.844,6.630);
\draw[gp path] (2.374,2.969)--(4.200,3.882)--(6.027,4.768)--(7.853,5.681)--(9.680,6.566)%
  --(11.506,7.493);
\gppoint{gp mark 4}{(2.374,2.969)}
\gppoint{gp mark 4}{(4.200,3.882)}
\gppoint{gp mark 4}{(6.027,4.768)}
\gppoint{gp mark 4}{(7.853,5.681)}
\gppoint{gp mark 4}{(9.680,6.566)}
\gppoint{gp mark 4}{(11.506,7.493)}
\gppoint{gp mark 4}{(10.092,6.630)}
\gpcolor{color=gp lt color border}
\gpsetlinewidth{1.00}
\draw[gp path] (2.374,8.195)--(2.374,1.772)--(11.506,1.772)--(11.506,8.195)--cycle;
\gpdefrectangularnode{gp plot 1}{\pgfpoint{2.374cm}{1.772cm}}{\pgfpoint{11.506cm}{8.195cm}}
\end{tikzpicture}}
 \caption{Execution time for the benchmarks with parameters which \monaa cannot handle: \textsc{Gear} (above left), \textsc{Accel} (above right), \textsc{Blowup} (below left), and \textsc{OnlyTiming} (below right)}
 \label{figure:normal-result}
\end{figure}

To answer RQ1, we compared the total execution time of \pmonaa using \textsc{Gear}, \textsc{Accel}, \textsc{Blowup}, and \textsc{OnlyTiming}. As a baseline, we used our previous implementation of parametric timed pattern matching based on \imitator{} (``Butter Jellyfish'', version 2.10.4).
\crefrange{table:gear-result}{table:only-timing-result} and \cref{figure:normal-result} show the execution time of our online algorithms compared with the \imitator{}-based implementation.

In \crefrange{table:gear-result}{table:only-timing-result}, we observe that our algorithms are faster than the \imitator{}-based implementation by orders of magnitude.
Moreover, for \textsc{Blowup}, the \imitator{}-based implementation aborted due to out of memory. 
This is mainly because \pmonaa is specific to parametric timed pattern matching while \imitator{} is a general tool for parametric verification. %
This shows the much better efficiency of our dedicated approach compared to~\cref{section:PTMC}.

In \cref{figure:normal-result}, we observe that 
the curve of ``no skip'' has the steepest slope and
the curves of either ``parametric skip'' or ``non-parametric skip'' have the gentlest slope
except for \textsc{Blowup}.
\textsc{Blowup} is a benchmark designed on purpose to observe exponential blowup of the execution time, and it requires much time for all of the implementations.

For \textsc{Gear} and \textsc{Accel}, the execution time of ``non-parametric skip'' increases the most gently.
This is because the parametric KMP-style skip value $\KMPSkipFunc(\loc, \pvals)$ and the non-parametric KMP-style skip value $\KMPSkipFunc'(\loc)$ are equal for these benchmarks, and ``parametric skip'' is slower due to the inclusion checking $\pvals\subseteq\pvalsi{\loc,n}$.

For \textsc{OnlyTiming}, we observe that the execution time of ``parametric skip'' increases the most gently because the parametric KMP-style skip value $\KMPSkipFunc(\loc, \pvals)$ is larger than the non-parametric KMP-style skip value $\KMPSkipFunc'(\loc)$.

We conclude that skipping usually makes parametric timed pattern matching efficient.
The preference between two skipping methods depends on the pattern PTA and it is a future work to investigate the tendency.
Since the computation of the skip values does not take much time, 
the following work flow is reasonable:
\begin{ienumeration}
	\item compute the skip values for both of them; and 
	\item use ``parametric skip'' only if its skip values are strictly larger than that of ``non-parametric skip''.
\end{ienumeration}

\subsubsection{RQ2: Parametric vs.\ non-parametric timed pattern matching}
\label{subsec:parametric_vs_non_parametric}

\begin{figure}[tbp]
 \scalebox{0.5}{\begin{tikzpicture}[gnuplot]
\tikzset{every node/.append style={font={\fontsize{18.0pt}{21.6pt}\selectfont}}}
\path (0.000,0.000) rectangle (12.500,8.750);
\gpcolor{color=gp lt color border}
\gpsetlinetype{gp lt border}
\gpsetdashtype{gp dt solid}
\gpsetlinewidth{1.00}
\draw[gp path] (2.705,1.772)--(2.885,1.772);
\draw[gp path] (11.506,1.772)--(11.326,1.772);
\node[gp node right] at (2.374,1.772) {$0$};
\draw[gp path] (2.705,2.575)--(2.885,2.575);
\draw[gp path] (11.506,2.575)--(11.326,2.575);
\node[gp node right] at (2.374,2.575) {$0.05$};
\draw[gp path] (2.705,3.378)--(2.885,3.378);
\draw[gp path] (11.506,3.378)--(11.326,3.378);
\node[gp node right] at (2.374,3.378) {$0.1$};
\draw[gp path] (2.705,4.181)--(2.885,4.181);
\draw[gp path] (11.506,4.181)--(11.326,4.181);
\node[gp node right] at (2.374,4.181) {$0.15$};
\draw[gp path] (2.705,4.984)--(2.885,4.984);
\draw[gp path] (11.506,4.984)--(11.326,4.984);
\node[gp node right] at (2.374,4.984) {$0.2$};
\draw[gp path] (2.705,5.786)--(2.885,5.786);
\draw[gp path] (11.506,5.786)--(11.326,5.786);
\node[gp node right] at (2.374,5.786) {$0.25$};
\draw[gp path] (2.705,6.589)--(2.885,6.589);
\draw[gp path] (11.506,6.589)--(11.326,6.589);
\node[gp node right] at (2.374,6.589) {$0.3$};
\draw[gp path] (2.705,7.392)--(2.885,7.392);
\draw[gp path] (11.506,7.392)--(11.326,7.392);
\node[gp node right] at (2.374,7.392) {$0.35$};
\draw[gp path] (2.705,8.195)--(2.885,8.195);
\draw[gp path] (11.506,8.195)--(11.326,8.195);
\node[gp node right] at (2.374,8.195) {$0.4$};
\draw[gp path] (2.705,1.772)--(2.705,1.952);
\draw[gp path] (2.705,8.195)--(2.705,8.015);
\node[gp node center] at (2.705,1.218) {$0$};
\draw[gp path] (3.805,1.772)--(3.805,1.952);
\draw[gp path] (3.805,8.195)--(3.805,8.015);
\node[gp node center] at (3.805,1.218) {$20$};
\draw[gp path] (4.905,1.772)--(4.905,1.952);
\draw[gp path] (4.905,8.195)--(4.905,8.015);
\node[gp node center] at (4.905,1.218) {$40$};
\draw[gp path] (6.005,1.772)--(6.005,1.952);
\draw[gp path] (6.005,8.195)--(6.005,8.015);
\node[gp node center] at (6.005,1.218) {$60$};
\draw[gp path] (7.106,1.772)--(7.106,1.952);
\draw[gp path] (7.106,8.195)--(7.106,8.015);
\node[gp node center] at (7.106,1.218) {$80$};
\draw[gp path] (8.206,1.772)--(8.206,1.952);
\draw[gp path] (8.206,8.195)--(8.206,8.015);
\node[gp node center] at (8.206,1.218) {$100$};
\draw[gp path] (9.306,1.772)--(9.306,1.952);
\draw[gp path] (9.306,8.195)--(9.306,8.015);
\node[gp node center] at (9.306,1.218) {$120$};
\draw[gp path] (10.406,1.772)--(10.406,1.952);
\draw[gp path] (10.406,8.195)--(10.406,8.015);
\node[gp node center] at (10.406,1.218) {$140$};
\draw[gp path] (11.506,1.772)--(11.506,1.952);
\draw[gp path] (11.506,8.195)--(11.506,8.015);
\node[gp node center] at (11.506,1.218) {$160$};
\draw[gp path] (2.705,8.195)--(2.705,1.772)--(11.506,1.772)--(11.506,8.195)--cycle;
\node[gp node center,rotate=-270] at (0.496,4.983) {Execution Time [s]};
\node[gp node center] at (7.105,0.387) {Number of Events [$\times 100$]};
\node[gp node right] at (9.009,7.738) {No Skip};
\gpcolor{rgb color={0.580,0.000,0.827}}
\gpsetlinewidth{3.00}
\draw[gp path] (9.340,7.738)--(10.844,7.738);
\draw[gp path] (3.512,2.286)--(4.266,2.896)--(5.233,3.586)--(5.917,4.060)--(6.721,4.694)%
  --(7.653,5.337)--(8.379,5.947)--(9.213,6.509)--(9.956,6.918)--(10.767,7.569);
\gpsetpointsize{8.00}
\gppoint{gp mark 7}{(3.512,2.286)}
\gppoint{gp mark 7}{(4.266,2.896)}
\gppoint{gp mark 7}{(5.233,3.586)}
\gppoint{gp mark 7}{(5.917,4.060)}
\gppoint{gp mark 7}{(6.721,4.694)}
\gppoint{gp mark 7}{(7.653,5.337)}
\gppoint{gp mark 7}{(8.379,5.947)}
\gppoint{gp mark 7}{(9.213,6.509)}
\gppoint{gp mark 7}{(9.956,6.918)}
\gppoint{gp mark 7}{(10.767,7.569)}
\gppoint{gp mark 7}{(10.092,7.738)}
\gpcolor{color=gp lt color border}
\node[gp node right] at (9.009,7.184) {Parametric Skip};
\gpcolor{rgb color={0.000,0.620,0.451}}
\draw[gp path] (9.340,7.184)--(10.844,7.184);
\draw[gp path] (3.512,2.575)--(4.266,3.065)--(5.233,3.715)--(5.917,4.197)--(6.721,4.831)%
  --(7.653,5.385)--(8.379,5.971)--(9.213,6.485)--(9.956,6.918)--(10.767,7.448);
\gppoint{gp mark 8}{(3.512,2.575)}
\gppoint{gp mark 8}{(4.266,3.065)}
\gppoint{gp mark 8}{(5.233,3.715)}
\gppoint{gp mark 8}{(5.917,4.197)}
\gppoint{gp mark 8}{(6.721,4.831)}
\gppoint{gp mark 8}{(7.653,5.385)}
\gppoint{gp mark 8}{(8.379,5.971)}
\gppoint{gp mark 8}{(9.213,6.485)}
\gppoint{gp mark 8}{(9.956,6.918)}
\gppoint{gp mark 8}{(10.767,7.448)}
\gppoint{gp mark 8}{(10.092,7.184)}
\gpcolor{color=gp lt color border}
\node[gp node right] at (9.009,6.630) {Non-Parametric Skip};
\gpcolor{rgb color={0.337,0.706,0.914}}
\draw[gp path] (9.340,6.630)--(10.844,6.630);
\draw[gp path] (3.512,2.414)--(4.266,2.912)--(5.233,3.570)--(5.917,4.028)--(6.721,4.590)%
  --(7.653,5.184)--(8.379,5.650)--(9.213,6.124)--(9.956,6.517)--(10.767,7.095);
\gppoint{gp mark 4}{(3.512,2.414)}
\gppoint{gp mark 4}{(4.266,2.912)}
\gppoint{gp mark 4}{(5.233,3.570)}
\gppoint{gp mark 4}{(5.917,4.028)}
\gppoint{gp mark 4}{(6.721,4.590)}
\gppoint{gp mark 4}{(7.653,5.184)}
\gppoint{gp mark 4}{(8.379,5.650)}
\gppoint{gp mark 4}{(9.213,6.124)}
\gppoint{gp mark 4}{(9.956,6.517)}
\gppoint{gp mark 4}{(10.767,7.095)}
\gppoint{gp mark 4}{(10.092,6.630)}
\gpcolor{color=gp lt color border}
\node[gp node right] at (9.009,6.076) {\monaa};
\gpcolor{rgb color={0.902,0.624,0.000}}
\draw[gp path] (9.340,6.076)--(10.844,6.076);
\draw[gp path] (3.512,1.772)--(4.266,1.772)--(5.233,1.933)--(5.917,1.933)--(6.721,1.933)%
  --(7.653,1.933)--(8.379,1.941)--(9.213,1.973)--(9.956,2.093)--(10.767,2.093);
\gppoint{gp mark 3}{(3.512,1.772)}
\gppoint{gp mark 3}{(4.266,1.772)}
\gppoint{gp mark 3}{(5.233,1.933)}
\gppoint{gp mark 3}{(5.917,1.933)}
\gppoint{gp mark 3}{(6.721,1.933)}
\gppoint{gp mark 3}{(7.653,1.933)}
\gppoint{gp mark 3}{(8.379,1.941)}
\gppoint{gp mark 3}{(9.213,1.973)}
\gppoint{gp mark 3}{(9.956,2.093)}
\gppoint{gp mark 3}{(10.767,2.093)}
\gppoint{gp mark 3}{(10.092,6.076)}
\gpcolor{color=gp lt color border}
\gpsetlinewidth{1.00}
\draw[gp path] (2.705,8.195)--(2.705,1.772)--(11.506,1.772)--(11.506,8.195)--cycle;
\gpdefrectangularnode{gp plot 1}{\pgfpoint{2.705cm}{1.772cm}}{\pgfpoint{11.506cm}{8.195cm}}
\end{tikzpicture}}
 \scalebox{0.5}{\begin{tikzpicture}[gnuplot]
\tikzset{every node/.append style={font={\fontsize{18.0pt}{21.6pt}\selectfont}}}
\path (0.000,0.000) rectangle (12.500,8.750);
\gpcolor{color=gp lt color border}
\gpsetlinetype{gp lt border}
\gpsetdashtype{gp dt solid}
\gpsetlinewidth{1.00}
\draw[gp path] (2.705,1.772)--(2.885,1.772);
\draw[gp path] (11.506,1.772)--(11.326,1.772);
\node[gp node right] at (2.374,1.772) {$0$};
\draw[gp path] (2.705,2.690)--(2.885,2.690);
\draw[gp path] (11.506,2.690)--(11.326,2.690);
\node[gp node right] at (2.374,2.690) {$0.05$};
\draw[gp path] (2.705,3.607)--(2.885,3.607);
\draw[gp path] (11.506,3.607)--(11.326,3.607);
\node[gp node right] at (2.374,3.607) {$0.1$};
\draw[gp path] (2.705,4.525)--(2.885,4.525);
\draw[gp path] (11.506,4.525)--(11.326,4.525);
\node[gp node right] at (2.374,4.525) {$0.15$};
\draw[gp path] (2.705,5.442)--(2.885,5.442);
\draw[gp path] (11.506,5.442)--(11.326,5.442);
\node[gp node right] at (2.374,5.442) {$0.2$};
\draw[gp path] (2.705,6.360)--(2.885,6.360);
\draw[gp path] (11.506,6.360)--(11.326,6.360);
\node[gp node right] at (2.374,6.360) {$0.25$};
\draw[gp path] (2.705,7.277)--(2.885,7.277);
\draw[gp path] (11.506,7.277)--(11.326,7.277);
\node[gp node right] at (2.374,7.277) {$0.3$};
\draw[gp path] (2.705,8.195)--(2.885,8.195);
\draw[gp path] (11.506,8.195)--(11.326,8.195);
\node[gp node right] at (2.374,8.195) {$0.35$};
\draw[gp path] (2.705,1.772)--(2.705,1.952);
\draw[gp path] (2.705,8.195)--(2.705,8.015);
\node[gp node center] at (2.705,1.218) {$0$};
\draw[gp path] (4.172,1.772)--(4.172,1.952);
\draw[gp path] (4.172,8.195)--(4.172,8.015);
\node[gp node center] at (4.172,1.218) {$50$};
\draw[gp path] (5.639,1.772)--(5.639,1.952);
\draw[gp path] (5.639,8.195)--(5.639,8.015);
\node[gp node center] at (5.639,1.218) {$100$};
\draw[gp path] (7.106,1.772)--(7.106,1.952);
\draw[gp path] (7.106,8.195)--(7.106,8.015);
\node[gp node center] at (7.106,1.218) {$150$};
\draw[gp path] (8.572,1.772)--(8.572,1.952);
\draw[gp path] (8.572,8.195)--(8.572,8.015);
\node[gp node center] at (8.572,1.218) {$200$};
\draw[gp path] (10.039,1.772)--(10.039,1.952);
\draw[gp path] (10.039,8.195)--(10.039,8.015);
\node[gp node center] at (10.039,1.218) {$250$};
\draw[gp path] (11.506,1.772)--(11.506,1.952);
\draw[gp path] (11.506,8.195)--(11.506,8.015);
\node[gp node center] at (11.506,1.218) {$300$};
\draw[gp path] (2.705,8.195)--(2.705,1.772)--(11.506,1.772)--(11.506,8.195)--cycle;
\node[gp node center,rotate=-270] at (0.496,4.983) {Execution Time [s]};
\node[gp node center] at (7.105,0.387) {Number of Events [$\times 100$]};
\node[gp node right] at (9.009,7.738) {No Skip};
\gpcolor{rgb color={0.580,0.000,0.827}}
\gpsetlinewidth{3.00}
\draw[gp path] (9.340,7.738)--(10.844,7.738);
\draw[gp path] (3.456,2.323)--(4.141,2.873)--(4.993,3.497)--(5.652,3.974)--(6.381,4.571)%
  --(7.216,5.195)--(7.894,5.699)--(8.660,6.250)--(9.362,6.764)--(10.079,7.369);
\gpsetpointsize{8.00}
\gppoint{gp mark 7}{(3.456,2.323)}
\gppoint{gp mark 7}{(4.141,2.873)}
\gppoint{gp mark 7}{(4.993,3.497)}
\gppoint{gp mark 7}{(5.652,3.974)}
\gppoint{gp mark 7}{(6.381,4.571)}
\gppoint{gp mark 7}{(7.216,5.195)}
\gppoint{gp mark 7}{(7.894,5.699)}
\gppoint{gp mark 7}{(8.660,6.250)}
\gppoint{gp mark 7}{(9.362,6.764)}
\gppoint{gp mark 7}{(10.079,7.369)}
\gppoint{gp mark 7}{(10.092,7.738)}
\gpcolor{color=gp lt color border}
\node[gp node right] at (9.009,7.184) {Parametric Skip};
\gpcolor{rgb color={0.000,0.620,0.451}}
\draw[gp path] (9.340,7.184)--(10.844,7.184);
\draw[gp path] (3.456,2.690)--(4.141,2.873)--(4.993,3.075)--(5.652,3.240)--(6.381,3.433)%
  --(7.216,3.625)--(7.894,3.791)--(8.660,3.983)--(9.362,4.167)--(10.079,4.387);
\gppoint{gp mark 8}{(3.456,2.690)}
\gppoint{gp mark 8}{(4.141,2.873)}
\gppoint{gp mark 8}{(4.993,3.075)}
\gppoint{gp mark 8}{(5.652,3.240)}
\gppoint{gp mark 8}{(6.381,3.433)}
\gppoint{gp mark 8}{(7.216,3.625)}
\gppoint{gp mark 8}{(7.894,3.791)}
\gppoint{gp mark 8}{(8.660,3.983)}
\gppoint{gp mark 8}{(9.362,4.167)}
\gppoint{gp mark 8}{(10.079,4.387)}
\gppoint{gp mark 8}{(10.092,7.184)}
\gpcolor{color=gp lt color border}
\node[gp node right] at (9.009,6.630) {Non-Parametric Skip};
\gpcolor{rgb color={0.337,0.706,0.914}}
\draw[gp path] (9.340,6.630)--(10.844,6.630);
\draw[gp path] (3.456,2.680)--(4.141,2.735)--(4.993,3.002)--(5.652,3.066)--(6.381,3.240)%
  --(7.216,3.424)--(7.894,3.598)--(8.660,3.726)--(9.362,3.809)--(10.079,4.011);
\gppoint{gp mark 4}{(3.456,2.680)}
\gppoint{gp mark 4}{(4.141,2.735)}
\gppoint{gp mark 4}{(4.993,3.002)}
\gppoint{gp mark 4}{(5.652,3.066)}
\gppoint{gp mark 4}{(6.381,3.240)}
\gppoint{gp mark 4}{(7.216,3.424)}
\gppoint{gp mark 4}{(7.894,3.598)}
\gppoint{gp mark 4}{(8.660,3.726)}
\gppoint{gp mark 4}{(9.362,3.809)}
\gppoint{gp mark 4}{(10.079,4.011)}
\gppoint{gp mark 4}{(10.092,6.630)}
\gpcolor{color=gp lt color border}
\node[gp node right] at (9.009,6.076) {\monaa};
\gpcolor{rgb color={0.902,0.624,0.000}}
\draw[gp path] (9.340,6.076)--(10.844,6.076);
\draw[gp path] (3.456,2.506)--(4.141,2.506)--(4.993,2.506)--(5.652,2.506)--(6.381,2.506)%
  --(7.216,2.506)--(7.894,2.506)--(8.660,2.506)--(9.362,2.515)--(10.079,2.534);
\gppoint{gp mark 3}{(3.456,2.506)}
\gppoint{gp mark 3}{(4.141,2.506)}
\gppoint{gp mark 3}{(4.993,2.506)}
\gppoint{gp mark 3}{(5.652,2.506)}
\gppoint{gp mark 3}{(6.381,2.506)}
\gppoint{gp mark 3}{(7.216,2.506)}
\gppoint{gp mark 3}{(7.894,2.506)}
\gppoint{gp mark 3}{(8.660,2.506)}
\gppoint{gp mark 3}{(9.362,2.515)}
\gppoint{gp mark 3}{(10.079,2.534)}
\gppoint{gp mark 3}{(10.092,6.076)}
\gpcolor{color=gp lt color border}
\gpsetlinewidth{1.00}
\draw[gp path] (2.705,8.195)--(2.705,1.772)--(11.506,1.772)--(11.506,8.195)--cycle;
\gpdefrectangularnode{gp plot 1}{\pgfpoint{2.705cm}{1.772cm}}{\pgfpoint{11.506cm}{8.195cm}}
\end{tikzpicture}}
 \scalebox{0.5}{\begin{tikzpicture}[gnuplot]
\tikzset{every node/.append style={font={\fontsize{18.0pt}{21.6pt}\selectfont}}}
\path (0.000,0.000) rectangle (12.500,8.750);
\gpcolor{color=gp lt color border}
\gpsetlinetype{gp lt border}
\gpsetdashtype{gp dt solid}
\gpsetlinewidth{1.00}
\draw[gp path] (2.374,1.772)--(2.554,1.772);
\draw[gp path] (11.506,1.772)--(11.326,1.772);
\node[gp node right] at (2.043,1.772) {$0.5$};
\draw[gp path] (2.374,2.414)--(2.554,2.414);
\draw[gp path] (11.506,2.414)--(11.326,2.414);
\node[gp node right] at (2.043,2.414) {$1$};
\draw[gp path] (2.374,3.057)--(2.554,3.057);
\draw[gp path] (11.506,3.057)--(11.326,3.057);
\node[gp node right] at (2.043,3.057) {$1.5$};
\draw[gp path] (2.374,3.699)--(2.554,3.699);
\draw[gp path] (11.506,3.699)--(11.326,3.699);
\node[gp node right] at (2.043,3.699) {$2$};
\draw[gp path] (2.374,4.341)--(2.554,4.341);
\draw[gp path] (11.506,4.341)--(11.326,4.341);
\node[gp node right] at (2.043,4.341) {$2.5$};
\draw[gp path] (2.374,4.984)--(2.554,4.984);
\draw[gp path] (11.506,4.984)--(11.326,4.984);
\node[gp node right] at (2.043,4.984) {$3$};
\draw[gp path] (2.374,5.626)--(2.554,5.626);
\draw[gp path] (11.506,5.626)--(11.326,5.626);
\node[gp node right] at (2.043,5.626) {$3.5$};
\draw[gp path] (2.374,6.268)--(2.554,6.268);
\draw[gp path] (11.506,6.268)--(11.326,6.268);
\node[gp node right] at (2.043,6.268) {$4$};
\draw[gp path] (2.374,6.910)--(2.554,6.910);
\draw[gp path] (11.506,6.910)--(11.326,6.910);
\node[gp node right] at (2.043,6.910) {$4.5$};
\draw[gp path] (2.374,7.553)--(2.554,7.553);
\draw[gp path] (11.506,7.553)--(11.326,7.553);
\node[gp node right] at (2.043,7.553) {$5$};
\draw[gp path] (2.374,8.195)--(2.554,8.195);
\draw[gp path] (11.506,8.195)--(11.326,8.195);
\node[gp node right] at (2.043,8.195) {$5.5$};
\draw[gp path] (2.374,1.772)--(2.374,1.952);
\draw[gp path] (2.374,8.195)--(2.374,8.015);
\node[gp node center] at (2.374,1.218) {$20$};
\draw[gp path] (3.516,1.772)--(3.516,1.952);
\draw[gp path] (3.516,8.195)--(3.516,8.015);
\node[gp node center] at (3.516,1.218) {$30$};
\draw[gp path] (4.657,1.772)--(4.657,1.952);
\draw[gp path] (4.657,8.195)--(4.657,8.015);
\node[gp node center] at (4.657,1.218) {$40$};
\draw[gp path] (5.799,1.772)--(5.799,1.952);
\draw[gp path] (5.799,8.195)--(5.799,8.015);
\node[gp node center] at (5.799,1.218) {$50$};
\draw[gp path] (6.940,1.772)--(6.940,1.952);
\draw[gp path] (6.940,8.195)--(6.940,8.015);
\node[gp node center] at (6.940,1.218) {$60$};
\draw[gp path] (8.082,1.772)--(8.082,1.952);
\draw[gp path] (8.082,8.195)--(8.082,8.015);
\node[gp node center] at (8.082,1.218) {$70$};
\draw[gp path] (9.223,1.772)--(9.223,1.952);
\draw[gp path] (9.223,8.195)--(9.223,8.015);
\node[gp node center] at (9.223,1.218) {$80$};
\draw[gp path] (10.365,1.772)--(10.365,1.952);
\draw[gp path] (10.365,8.195)--(10.365,8.015);
\node[gp node center] at (10.365,1.218) {$90$};
\draw[gp path] (11.506,1.772)--(11.506,1.952);
\draw[gp path] (11.506,8.195)--(11.506,8.015);
\node[gp node center] at (11.506,1.218) {$100$};
\draw[gp path] (2.374,8.195)--(2.374,1.772)--(11.506,1.772)--(11.506,8.195)--cycle;
\node[gp node center,rotate=-270] at (0.496,4.983) {Execution Time [s]};
\node[gp node center] at (6.940,0.387) {Number of Events [$\times 100$]};
\node[gp node right] at (9.009,7.738) {No Skip};
\gpcolor{rgb color={0.580,0.000,0.827}}
\gpsetlinewidth{3.00}
\draw[gp path] (9.340,7.738)--(10.844,7.738);
\draw[gp path] (2.374,2.138)--(4.657,3.150)--(6.940,4.176)--(9.223,5.174)--(11.506,6.205);
\gpsetpointsize{8.00}
\gppoint{gp mark 7}{(2.374,2.138)}
\gppoint{gp mark 7}{(4.657,3.150)}
\gppoint{gp mark 7}{(6.940,4.176)}
\gppoint{gp mark 7}{(9.223,5.174)}
\gppoint{gp mark 7}{(11.506,6.205)}
\gppoint{gp mark 7}{(10.092,7.738)}
\gpcolor{color=gp lt color border}
\node[gp node right] at (9.009,7.184) {Parametric Skip};
\gpcolor{rgb color={0.000,0.620,0.451}}
\draw[gp path] (9.340,7.184)--(10.844,7.184);
\draw[gp path] (2.374,3.795)--(4.657,4.807)--(6.940,5.835)--(9.223,6.850)--(11.506,7.879);
\gppoint{gp mark 8}{(2.374,3.795)}
\gppoint{gp mark 8}{(4.657,4.807)}
\gppoint{gp mark 8}{(6.940,5.835)}
\gppoint{gp mark 8}{(9.223,6.850)}
\gppoint{gp mark 8}{(11.506,7.879)}
\gppoint{gp mark 8}{(10.092,7.184)}
\gpcolor{color=gp lt color border}
\node[gp node right] at (9.009,6.630) {Non-Parametric Skip};
\gpcolor{rgb color={0.337,0.706,0.914}}
\draw[gp path] (9.340,6.630)--(10.844,6.630);
\draw[gp path] (2.374,3.767)--(4.657,4.790)--(6.940,5.795)--(9.223,6.819)--(11.506,7.853);
\gppoint{gp mark 4}{(2.374,3.767)}
\gppoint{gp mark 4}{(4.657,4.790)}
\gppoint{gp mark 4}{(6.940,5.795)}
\gppoint{gp mark 4}{(9.223,6.819)}
\gppoint{gp mark 4}{(11.506,7.853)}
\gppoint{gp mark 4}{(10.092,6.630)}
\gpcolor{color=gp lt color border}
\node[gp node right] at (9.009,6.076) {\monaa};
\gpcolor{rgb color={0.902,0.624,0.000}}
\draw[gp path] (9.340,6.076)--(10.844,6.076);
\draw[gp path] (2.374,3.136)--(4.657,3.145)--(6.940,3.156)--(9.223,3.166)--(11.506,3.192);
\gppoint{gp mark 3}{(2.374,3.136)}
\gppoint{gp mark 3}{(4.657,3.145)}
\gppoint{gp mark 3}{(6.940,3.156)}
\gppoint{gp mark 3}{(9.223,3.166)}
\gppoint{gp mark 3}{(11.506,3.192)}
\gppoint{gp mark 3}{(10.092,6.076)}
\gpcolor{color=gp lt color border}
\gpsetlinewidth{1.00}
\draw[gp path] (2.374,8.195)--(2.374,1.772)--(11.506,1.772)--(11.506,8.195)--cycle;
\gpdefrectangularnode{gp plot 1}{\pgfpoint{2.374cm}{1.772cm}}{\pgfpoint{11.506cm}{8.195cm}}
\end{tikzpicture}}
 \scalebox{0.5}{\begin{tikzpicture}[gnuplot]
\tikzset{every node/.append style={font={\fontsize{18.0pt}{21.6pt}\selectfont}}}
\path (0.000,0.000) rectangle (12.500,8.750);
\gpcolor{color=gp lt color border}
\gpsetlinetype{gp lt border}
\gpsetdashtype{gp dt solid}
\gpsetlinewidth{1.00}
\draw[gp path] (2.374,1.772)--(2.554,1.772);
\draw[gp path] (11.506,1.772)--(11.326,1.772);
\node[gp node right] at (2.043,1.772) {$0$};
\draw[gp path] (2.374,2.843)--(2.554,2.843);
\draw[gp path] (11.506,2.843)--(11.326,2.843);
\node[gp node right] at (2.043,2.843) {$0.1$};
\draw[gp path] (2.374,3.913)--(2.554,3.913);
\draw[gp path] (11.506,3.913)--(11.326,3.913);
\node[gp node right] at (2.043,3.913) {$0.2$};
\draw[gp path] (2.374,4.984)--(2.554,4.984);
\draw[gp path] (11.506,4.984)--(11.326,4.984);
\node[gp node right] at (2.043,4.984) {$0.3$};
\draw[gp path] (2.374,6.054)--(2.554,6.054);
\draw[gp path] (11.506,6.054)--(11.326,6.054);
\node[gp node right] at (2.043,6.054) {$0.4$};
\draw[gp path] (2.374,7.124)--(2.554,7.124);
\draw[gp path] (11.506,7.124)--(11.326,7.124);
\node[gp node right] at (2.043,7.124) {$0.5$};
\draw[gp path] (2.374,8.195)--(2.554,8.195);
\draw[gp path] (11.506,8.195)--(11.326,8.195);
\node[gp node right] at (2.043,8.195) {$0.6$};
\draw[gp path] (2.374,1.772)--(2.374,1.952);
\draw[gp path] (2.374,8.195)--(2.374,8.015);
\node[gp node center] at (2.374,1.218) {$10$};
\draw[gp path] (4.200,1.772)--(4.200,1.952);
\draw[gp path] (4.200,8.195)--(4.200,8.015);
\node[gp node center] at (4.200,1.218) {$20$};
\draw[gp path] (6.027,1.772)--(6.027,1.952);
\draw[gp path] (6.027,8.195)--(6.027,8.015);
\node[gp node center] at (6.027,1.218) {$30$};
\draw[gp path] (7.853,1.772)--(7.853,1.952);
\draw[gp path] (7.853,8.195)--(7.853,8.015);
\node[gp node center] at (7.853,1.218) {$40$};
\draw[gp path] (9.680,1.772)--(9.680,1.952);
\draw[gp path] (9.680,8.195)--(9.680,8.015);
\node[gp node center] at (9.680,1.218) {$50$};
\draw[gp path] (11.506,1.772)--(11.506,1.952);
\draw[gp path] (11.506,8.195)--(11.506,8.015);
\node[gp node center] at (11.506,1.218) {$60$};
\draw[gp path] (2.374,8.195)--(2.374,1.772)--(11.506,1.772)--(11.506,8.195)--cycle;
\node[gp node center,rotate=-270] at (0.496,4.983) {Execution Time [s]};
\node[gp node center] at (6.940,0.387) {Number of Events [$\times 100$]};
\node[gp node right] at (9.009,7.738) {No Skip};
\gpcolor{rgb color={0.580,0.000,0.827}}
\gpsetlinewidth{3.00}
\draw[gp path] (9.340,7.738)--(10.844,7.738);
\draw[gp path] (2.374,2.639)--(4.200,3.592)--(6.027,4.496)--(7.853,5.417)--(9.680,6.338)%
  --(11.506,7.258);
\gpsetpointsize{8.00}
\gppoint{gp mark 7}{(2.374,2.639)}
\gppoint{gp mark 7}{(4.200,3.592)}
\gppoint{gp mark 7}{(6.027,4.496)}
\gppoint{gp mark 7}{(7.853,5.417)}
\gppoint{gp mark 7}{(9.680,6.338)}
\gppoint{gp mark 7}{(11.506,7.258)}
\gppoint{gp mark 7}{(10.092,7.738)}
\gpcolor{color=gp lt color border}
\node[gp node right] at (9.009,7.184) {Parametric Skip};
\gpcolor{rgb color={0.000,0.620,0.451}}
\draw[gp path] (9.340,7.184)--(10.844,7.184);
\draw[gp path] (2.374,2.735)--(4.200,3.485)--(6.027,4.234)--(7.853,4.989)--(9.680,5.717)%
  --(11.506,6.488);
\gppoint{gp mark 8}{(2.374,2.735)}
\gppoint{gp mark 8}{(4.200,3.485)}
\gppoint{gp mark 8}{(6.027,4.234)}
\gppoint{gp mark 8}{(7.853,4.989)}
\gppoint{gp mark 8}{(9.680,5.717)}
\gppoint{gp mark 8}{(11.506,6.488)}
\gppoint{gp mark 8}{(10.092,7.184)}
\gpcolor{color=gp lt color border}
\node[gp node right] at (9.009,6.630) {Non-Parametric Skip};
\gpcolor{rgb color={0.337,0.706,0.914}}
\draw[gp path] (9.340,6.630)--(10.844,6.630);
\draw[gp path] (2.374,2.735)--(4.200,3.485)--(6.027,4.218)--(7.853,4.919)--(9.680,5.658)%
  --(11.506,6.386);
\gppoint{gp mark 4}{(2.374,2.735)}
\gppoint{gp mark 4}{(4.200,3.485)}
\gppoint{gp mark 4}{(6.027,4.218)}
\gppoint{gp mark 4}{(7.853,4.919)}
\gppoint{gp mark 4}{(9.680,5.658)}
\gppoint{gp mark 4}{(11.506,6.386)}
\gppoint{gp mark 4}{(10.092,6.630)}
\gpcolor{color=gp lt color border}
\node[gp node right] at (9.009,6.076) {\monaa};
\gpcolor{rgb color={0.902,0.624,0.000}}
\draw[gp path] (9.340,6.076)--(10.844,6.076);
\draw[gp path] (2.374,1.772)--(4.200,1.879)--(6.027,1.879)--(7.853,1.879)--(9.680,1.879)%
  --(11.506,1.879);
\gppoint{gp mark 3}{(2.374,1.772)}
\gppoint{gp mark 3}{(4.200,1.879)}
\gppoint{gp mark 3}{(6.027,1.879)}
\gppoint{gp mark 3}{(7.853,1.879)}
\gppoint{gp mark 3}{(9.680,1.879)}
\gppoint{gp mark 3}{(11.506,1.879)}
\gppoint{gp mark 3}{(10.092,6.076)}
\gpcolor{color=gp lt color border}
\gpsetlinewidth{1.00}
\draw[gp path] (2.374,8.195)--(2.374,1.772)--(11.506,1.772)--(11.506,8.195)--cycle;
\gpdefrectangularnode{gp plot 1}{\pgfpoint{2.374cm}{1.772cm}}{\pgfpoint{11.506cm}{8.195cm}}
\end{tikzpicture}}
 \caption{Execution time for the benchmarks without parameters: \textsc{Gear-np} (above left), \textsc{Accel-np} (above right), \textsc{Blowup-np} (below left), and \textsc{OnlyTiming-np} (below right)}
 \label{figure:np-result}
\end{figure}

To answer RQ2, we ran \pmonaa using the non-parametric benchmarks (\textsc{Accel-np}, \textsc{Gear-np}, \textsc{Blowup-np}, and \textsc{OnlyTiming-np}) and compared the execution time with a tool \monaa~\cite{WHS18} for non-parametric timed pattern matching. 

In \cref{figure:np-result}, we observe that our algorithms are slower than \monaa by orders of magnitude even though we solve the same problem (non-parametric timed pattern matching).
This is presumably because our implementations rely on Parma Polyhedra Library (PPL)~\cite{BHZ08} to compute symbolic states, while \monaa{} utilizes DBMs (difference bound matrices)~\cite{Dill89}.
It was shown in~\cite{BFMU17} that polyhedra may be dozens of times slower than DBMs; however, for parametric analyses, DBMs are not suitable, and parameterized extensions (\eg{} in~\cite{HRSV02}) still need polyhedra in their representation.

Moreover, in \cref{figure:normal-result,figure:np-result}, we observe that the execution time of our algorithms are not much different between each parametric benchmark and its non-parametric variant except \textsc{Blowup}. 
This observation shows that at least one additional parameter does not make the problem too difficult.

Therefore, we conclude that the lower efficiency of parametric timed pattern matching is mainly because of its general data structure required by the general problem setting.

\subsubsection{RQ3: Overhead of skip value computation}
\label{subsec:overhead_skip_values}

\begin{table}[tb]
 \centering
 \caption{Execution time [s] for the skip value computation}
 \label{table:overhead-skipping}
\begin{tabular}{|c|c|c|c|}
\hline
\cellHeaderColor{}&\cellHeader{Non-Parametric}&\cellHeader{Parametric}&\cellHeaderColor{}\\
 \cellHeaderColor{}&\cellHeader{Skip}&\cellHeader{Skip}&\multirow{-2}{*}{\cellHeader{\monaa{}}}\\\hline
\textsc{Gear}&0.0115&0.0175&n/a\\\hline
\textsc{Gear-np}&0.01&0.01&$ < 0.01$\\\hline
\textsc{Accel}&0.042&0.0435&n/a\\\hline
\textsc{Accel-np}&0.04&0.04&0.0305\\\hline
\textsc{OnlyTiming}&0.03&0.03&n/a\\\hline
\textsc{OnlyTiming-np}&0.02&0.02&$ < 0.01$\\\hline
\textsc{Blowup}&0.3665&0.381&n/a\\\hline
\textsc{Blowup-np}&1.268&1.2905&1.5455\\\hline
\end{tabular}
\end{table}

To answer RQ3, we compared the execution time of our algorithms for an empty timed word using all the benchmarks.
As a baseline, we also measured the execution time of \monaa.

In \cref{table:overhead-skipping}, we observe that the execution time for the skip values is less than 0.05 second except for \textsc{Blowup} and \textsc{Blowup-np}.
Even for the slowest pattern PTA \textsc{Blowup-np}, the execution time for the skip values is less than 1.5 second and it is faster than that of \monaa. We conclude that the overhead of the skip value computation is small and acceptable in many usage scenarios.

\section{Comparison Between Two Approaches}\label{section:comparison}
In \cref{section:PTMC,section:adhoc}, we presented two approaches for parametric timed pattern matching.
In \cref{section:PTMC}, we reduced parametric timed pattern matching to parametric timed model checking of PTAs (\cref{algo:PTPM}), using an existing parametric timed model checker.
Parametric timed model checking relies on the reachability analysis with symbolic abstraction by convex polyhedra.
In \cref{section:adhoc}, we solved parametric timed pattern matching by following the transitions of the PTA (\cref{alg:online_no_skip}).
Moreover, we optimized \cref{alg:online_no_skip} using \emph{skipping} (\cref{alg:online_with_skip}), which is a technique from string matching.
Although the relationship between parametric timed pattern matching and parametric timed model checking is implicit in \cref{section:adhoc}, \cref{alg:online_no_skip,alg:online_with_skip} utilize convex polyhedra to compute the \emph{reachable} concrete states.
Moreover, both \imitator{} and \pmonaa{} rely on the Parma Polyhedra Library (PPL)~\cite{BHZ08} for convex polyhedra operations.
In this sense, the underlying technique utilized in \cref{section:PTMC} is also used in \cref{section:adhoc}.

For the performance, \cref{ss:adhoc-experiments} shows that \cref{alg:online_no_skip,alg:online_with_skip} are much more efficient than \cref{algo:PTPM}. This efficiency is thanks to the skipping and its direct implementation rather than an indirect approach via parametric timed model checking.

\begin{figure}[tbp]
 \centering
 \begin{tikzpicture}[scale=1.8,xscale=1.5]
   \draw [thick, -stealth](-0.5,0)--(4.6,0) node [anchor=north]{$t$};
   \draw (0,0.1) -- (0,-0.1) node [anchor=north]{$0$};

   \draw (0.35,0.1) node[anchor=south]{$\styleact{BobEnter}$} -- (0.35,-0.1);
   \draw (1.0,0.1) node[anchor=south]{$\styleact{AliceEnter}$} -- (1.0,-0.1);
   \draw (1.75,0.1) node[anchor=south]{$\styleact{BobLeft}$} -- (1.75,-0.1);
   \draw (2.35,0.1) node[anchor=south]{$\styleact{AliceLeft}$} -- (2.35,-0.1);
   \draw (3.0,0.1) node[anchor=south]{$\styleparam{X}\styleact{Enter}$} -- (3.0,-0.1);
   \draw (3.5,0.1) node[anchor=south]{$\styleparam{Y}\styleact{Enter}$} -- (3.5,-0.1);
   \draw (4.35,0.1) node[anchor=south]{$\styleact{AliceLeft}$} -- (4.35,-0.1);
 \end{tikzpicture}
 \caption{A log of entrance and leaving from a building. Timestamps are omitted for simplicity. We usually know who entered or left the building (\eg{} $\styleact{BobEnter}$) but we sometimes do not know who (\eg{} $\styleparam{X}\styleact{Enter}$).}
 \label{fig:example_log_enterance_left}
\end{figure}
\begin{figure}[tbp]
 \centering
 \scalebox{0.95}{
 \begin{tikzpicture}[shorten >=1pt,node distance=2.0cm,on grid,auto]
  \node[location,initial] (w0) {$\wloc_0$};
  \node[location] (w1) [right of=w0] {$\wloc_1$};
  \node[location] (w2) [right of=w1] {$\wloc_2$};
  \node[location] (w3) [right of=w2]{$\wloc_3$};
  \node[location] (w4) [right of=w3]{$\wloc_4$};
  \node[location] (w5-1) [above right of=w4]{$\wloc_{5,1}$};
  \node[location] (w5-2) [below right of=w4]{$\wloc_{5,2}$};
  \node[location] (w6) [below right of=w5-1]{$\wloc_6$};
  \node[location] (w7) [right of=w6]{$\wloc_7$};

  \path[->]
  (w0) edge [above] node[align=center] {$\styleact{BobEnter}$} (w1)
  (w1) edge [above] node[align=center] {$\styleact{AliceEnter}$} (w2)
  (w2) edge [above] node[align=center] {$\styleact{BobLeft}$} (w3)
  (w3) edge [above] node[align=center] {$\styleact{AliceLeft}$} (w4)
  (w4) edge [above left] node[align=center] {$\styleact{BobEnter}$} (w5-1)
  (w4) edge [below left] node[align=center] {$\styleact{AliceEnter}$} (w5-2)
  (w5-1) edge [above right] node[align=center] {$\styleact{AliceEnter}$} (w6)
  (w5-2) edge [below right] node[align=center] {$\styleact{BobEnter}$} (w6)
  (w6) edge [above] node[align=center] {$\styleact{AliceLeft}$} (w7)
  ;
 \end{tikzpicture}}
 \caption{The TA constructed from the log in \cref{fig:example_log_enterance_left} using \TransWord{} in \cref{sss:word_conversion}. We assume $\styleparam{X},\styleparam{Y} \in \{\styleact{Alice},\styleact{Bob}\}$ and $\styleparam{X} \neq \styleparam{Y}$. The timing constraints are omitted for simplicity.}
 \label{fig:example_ta_enterance_left}
\end{figure}
Therefore, someone might consider that the approach in \cref{section:adhoc} is strictly superior to the one in \cref{section:PTMC}.
However, we note that the model checking-based framework in \cref{section:PTMC} is generic, and it is robust to the modification of the problem definition.
For example, consider the ill-shaped log shown in \cref{fig:example_log_enterance_left}: we take a log of the entrance and leaving from a building with the name and the timestamp, but we sometimes fail to identify who passed the gate \eg{} due to a sensor error.
Since a timed word $\word$ is a sequence of pairs $(a_i, \tau_i)$ of an event $a_i$ and a timestamp $\tau_i$, it cannot directly represent the log in \cref{fig:example_log_enterance_left}, and therefore, it is not straightforward to handle such a log using \cref{alg:online_no_skip,alg:online_with_skip}.
However, it is easy to generalize the conversion \TransWord{} from a timed word to a (P)TA in \cref{sss:word_conversion} to a log with branching. An example of the result is shown in \cref{fig:example_ta_enterance_left}. Thus, it is rather straightforward to monitor such an ill-shaped log using \cref{algo:PTPM}.
Moreover, the framework in \cref{section:PTMC} has been used for a different monitoring problem following the conference version of this work~\cite{AHW18}, \ie{} model-bounded monitoring in~\cite{WAH21}.

In summary, \cref{alg:online_no_skip,alg:online_with_skip} are better at solving the parametric timed pattern matching problem, but the construction used in \cref{algo:PTPM} is general, and it is potentially applicable to extensions of the problem considered here.

\section{Conclusion and perspectives}\label{section:conclusion}

\paragraph{Conclusion}
We proposed two approaches to perform timed pattern matching in the presence of an uncertain specification.
This allows us to synthesize parameter valuations and intervals for which the specification holds on an input timed word.
Our implementation using \imitator{} may not completely allow for \emph{online} timed pattern matching yet, but already gives an interesting feedback in terms of parametric monitoring.
Our second algorithm aiming at finding minimal or maximal parameter valuations is less sensitive to state space explosion.
While our algorithms should be further optimized, we believe they pave the way for a more precise monitoring of real-time systems with an output richer than just timed intervals.

In a second part, our dedicated method dramatically outperforms the previous approach using parametric timed model checking.
In addition, we discussed an optimization using skipping, that brings an interesting speedup.

\paragraph{Perspectives}

Exploiting the polarity of parameters, as done in~\cite{ADMN11} or in lower-bound/upper-bound parametric timed automata~\cite{HRSV02,ALime17}, may help to improve the efficiency of \PTPMopt{}.

In addition, natural future works include more expressive specifications than (parametric) timed automata-based specifications, \eg{} using more expressive logics such as~\cite{BKMZ15}.

Another challenge is the interpretation (and the visualization) of the results of parametric timed pattern matching.
While the result of $\PTPMopt$ is natural, the fully symbolic result of~$\PTPM$ remains a challenge to be interpreted; for example, the 125,250 matches for \textsc{Blowup} means the union of 125,250 polyhedra in 5~dimensions.
We give a possible way to visualize such results in \cref{figure:projections:gear} for \textsc{Gear} ($|\word| = 1,467$): in particular, observe in \cref{figure:projections:gear:t-param} that a single point exceeds~3, only a few exceed~2, while the wide majority remain in~$[0,1]$.
This helps to visualize how fast the gear is changed from~1 to~2, and at what timestamps.

\ifdefined\VersionAuthor
	\subsection*{Acknowledgments}
\else
	\begin{acks}
\fi
	This work is partially supported
	by
	JST ERATO HASUO Metamathematics for Systems Design Project (No.\ JPMJER1603),
	by JST ACT-X Grant No.\ JPMJAX200U, 
        by JSPS Grants-in-Aid No.\ 15KT0012 \& 18J22498,
        by JST CEREST Grant No.\ JPMJCR2012,
      by the ANR national research program PACS (ANR-14-CE28-0002)
      and
      by the ANR-NRF French-Singaporean research program ProMiS (ANR-19-CE25-0015).
\ifdefined\VersionAuthor
\else
	\end{acks}
\fi

\FinalVersion{
\newpage
}

	\newcommand{\CCIS}{Communications in Computer and Information Science}
	\newcommand{\ENTCS}{Electronic Notes in Theoretical Computer Science}
	\newcommand{\FAC}{Formal Aspects of Computing}
	\newcommand{\FI}{Fundamenta Informaticae}
	\newcommand{\FMSD}{Formal Methods in System Design}
	\newcommand{\IJFCS}{International Journal of Foundations of Computer Science}
	\newcommand{\IJSSE}{International Journal of Secure Software Engineering}
	\newcommand{\IPL}{Information Processing Letters}
	\newcommand{\JAIR}{Journal of Artificial Intelligence Research}
	\newcommand{\JLAP}{Journal of Logic and Algebraic Programming}
	\newcommand{\JLAMP}{Journal of Logical and Algebraic Methods in Programming} %
	\newcommand{\JLC}{Journal of Logic and Computation}
	\newcommand{\LMCS}{Logical Methods in Computer Science}
	\newcommand{\LNCS}{Lecture Notes in Computer Science}
	\newcommand{\RESS}{Reliability Engineering \& System Safety}
	\newcommand{\STTT}{International Journal on Software Tools for Technology Transfer}
	\newcommand{\TCS}{Theoretical Computer Science}
	\newcommand{\ToPNoC}{Transactions on Petri Nets and Other Models of Concurrency}
	\newcommand{\TSE}{{IEEE} Transactions on Software Engineering}

\ifdefined\VersionAuthor
	\renewcommand*{\bibfont}{\small}
	\printbibliography[title={References}]
\else
	\bibliographystyle{ACM-Reference-Format} %
	\bibliography{PTPM}
\fi

\LongVersion{
\newpage
\appendix

\section{Construction of $\pvalsi{\loc,n}$}
\label{appendix:automatic_construction}

We present a construction of $\pvalsi{\loc,n}$ in~\cref{def:KMP}.
We fix a PTA $\A=(\Actions, \Loc, \locinit, \LocFinal, \Clock, \Param, %
 \Edges)$, a location $\loc\in\Loc$, and $n\in\Zp$.
Since $\pvalsi{\loc,n}$ is the set of parameter valuations $\pval\in\PVal$ such that there is a $\pval'\in\PVal$ satisfying $\Lg(\valuate{\A_\loc}{\pval})\cdot\TW(\Actions)\cap\TW^n(\Actions)\cdot\{\word''+t\mid\word''\in\Lg_{-\$}(\pval'(\A)),t>0\}\cdot\TW(\Actions)\ne\emptyset$, 
we construct PTAs $\A'_{\loc}$ and $\A'_{+n}$ satisfying $\Lg(\valuate{\A'_{\loc}}{\pval}) = \Lg(\valuate{\A_{\loc}}{\pval})\cdot\TW(\Actions)$ and $\Lg(\valuate{\A'_{+n}}{\pval'}) = \TW^n(\Actions)\cdot\{\word + t \mid\word\in\Lg_{-\$}(\valuate{\A}{\pval'}),t>0\}\cdot\TW(\Actions)$.

We define $\A'_\loc$ as
$\A_\loc=(\Actions, \Loc\disjointUnion\{\loc_{\mathrm{fin}}\}, \locinit, \{\loc, \loc_{\mathrm{fin}}\}, \Clock, \Param, %
 \Edges'_\loc)$, where $\Edges'_{\loc}=\Edges\cup\{(\loc, \top,\action,\emptyset,\loc_{\mathrm{fin}})\mid\action\in\Actions\}\cup\{(\loc_{\mathrm{fin}},\top,\action,\emptyset,\loc_{\mathrm{fin}})\mid\action\in\Actions\}$.
For any $\pval\in\PVal$, $\word'\in\Lg(\valuate{\A_{\loc}}{\pval})$, and $\word''\in\TW(\Actions)$, we have
$\locinit\xrightarrow{\word'} \loc \xrightarrow{\word''}\loc_{\mathrm{fin}}$ in $\valuate{\A'_\loc}{\pval}$ and $\word'\cdot\word''\in\Lg(\valuate{\A'_{\loc}}{\pval})$.
For any $\pval\in\PVal$ and $\word\in\Lg(\valuate{\A'_{\loc}}{\pval})$, there exist timed words $\word',\word''\in\TW(\Actions)$ satisfying $\word=\word'\cdot\word''$ and
$\locinit\xrightarrow{\word'} \loc $ in $\valuate{\A'_{\loc}}{\pval}$, which implies $\word'\in\Lg(\valuate{\A_{\loc}}{\pval})$.
Therefore, we have $\Lg(\valuate{\A'_{\loc}}{\pval}) = \Lg(\valuate{\A_{\loc}}{\pval})\cdot\TW(\Actions)$.

We define $\A'_{+n}$ as
$\A_{+n} = (\Actions\disjointUnion\{\varepsilon\}, \Loc', \loc_{n+1}, \LocFinal', \Clock, \Param, %
 \Edges')$, where
\begin{itemize}
 \item $\varepsilon$ is the unobservable character;
 \item $\Loc'=\Loc\disjointUnion\{\loc_{\mathrm{i}}\mid i \in \{1,2,\dots,n+1\}\}\disjointUnion\{\loc_{\mathrm{fin}}\}$;
 \item $\LocFinal' = \{\loc\mid \exists \loc'\in\LocFinal.\, (\loc, \guard,\action,\resets,\loc')\in\Edges\}\disjointUnion\{\loc_{\mathrm{fin}}\}$; and
 \item $\Edges'=\Edges \disjointUnion \{(\loc_{i+1},\top,a,\emptyset,\loc_{i}) \mid a \in \Actions,i\in\{1,2,\dots,n\}\} \disjointUnion \{(l_1,\top,\varepsilon,\Clock,\locinit)\}\disjointUnion\{(\loc,\top,a,\emptyset,\loc_{\mathrm{fin}}) \mid a \in \Actions,\loc\in\LocFinal'\}\disjointUnion\{(\loc_{\mathrm{fin}},\top,a,\emptyset,\loc_{\mathrm{fin}}) \mid a \in \Actions\}$.
\end{itemize}

For any parameter valuation $\pval'\in\PVal$, timed words $\word'\in\TW^n(\Actions)$, $\word''\in\Lg(\valuate{\A}{\pval'})$, $\word'''\in\TW(\Actions)$, and $t\in\Rp$, we have $\loc_{n+1}\xrightarrow{\word'} \loc_1\xrightarrow{(\varepsilon,t)}\locinit\xrightarrow{\word''}\loc_f\xrightarrow{\word'''}\loc_{\mathrm{fin}}$ in $\valuate{\A}{\pval'}$, where $\loc_f\in\LocFinal'$.
For any parameter valuation $\pval'\in\PVal$ and for any timed word $\word\in\Lg(\valuate{\A'_{+n}}{\pval'})$, there exist $\word'\in\TW(\Actions)$, $\word''\in\TW(\Actions)$, $\word'''\in\TW(\Actions)$, and $t\in\Rp$ satisfying $\word=\word'\cdot (\varepsilon,t)\cdot\word''\cdot\word'''$ and $\locinit\xrightarrow{\word''}\loc_f$ in $\valuate{\A'_{+n}}{\pval}$, where $\loc_f\in\LocFinal'$, and therefore, we have $\word''\in\Lg_{-\$}(\valuate{\A}{\pval'})$.
Overall, for any $\pval\in\PVal$, we have
$\Lg(\valuate{\A'_{+n}}{\pval'}) = \TW^n(\Actions)\cdot\{\word + t \mid\word\in\Lg_{-\$}(\valuate{\A}{\pval'}),t>0\}\cdot\TW(\Actions)$.

To take the intersection of $\Lg(\valuate{\A'_{\loc}}{\pval})$ and $\Lg(\valuate{\A'_{+n}}{\pval'})$, we use the synchronous product.

For any $\pval,\pval'\in\PVal$, we have $\Lg(\valuate{\A'_{\loc}\parallel\A'_{+n}}{(\pval\disjointUnion\pval')}) = \Lg(\valuate{\A'_{\loc}}{\pval})\cap\Lg(\valuate{\A'_{+n}}{\pval'})$. Therefore, we have $\pvalsi{\loc,n} = \{\pval\mid\exists\pval'\in\PVal.\,\Lg(\valuate{\A'_{\loc}\parallel\A'_{+n}}{(\pval\disjointUnion\pval')})\ne\emptyset\}$, which can be computed by reachability synthesis of PTAs.
}

\end{document}

